\newtheorem*{proposition*}{Proposition}
\newtheorem*{theorem*}{Theorem}
\newtheorem{lemma}{Lemma}
\newcommand{\tr}[0]{\textnormal{Tr}}
\DeclareMathOperator*{\Tr}{Tr}
\DeclareMathOperator*{\Rank}{Rank}
\DeclareMathOperator*{\Ran}{Ran}
\DeclareMathOperator*{\Span}{Span}
\DeclareMathOperator*{\Real}{Re}
\DeclareMathOperator{\Prob}{Prob}
\DeclareRobustCommand{\openzero}{\leavevmode\hbox{0\kern-.55em0}}
\newcommand{\prlsection}[1]{{\em {#1}.---~}}
\begin{document}

\title{Information Scrambling over Bipartitions:\\Equilibration, Entropy Production, and Typicality}

\date{\today}

\author{Georgios Styliaris}

\affiliation{Max-Planck-Institut f\"ur Quantenoptik, Hans-Kopfermann-Str. 1, 85748 Garching, Germany}

\affiliation{Munich Center for Quantum Science and Technology, Schellingstraße 4, 80799 M\"unchen, Germany}

\affiliation{Department of Physics and Astronomy, and Center for Quantum Information
Science and Technology, University of Southern California, Los Angeles, California 90089, USA}

\author{Namit Anand}

\affiliation{Department of Physics and Astronomy, and Center for Quantum Information
Science and Technology, University of Southern California, Los Angeles, California 90089, USA}

\author{Paolo Zanardi}

\affiliation{Department of Physics and Astronomy, and Center for Quantum Information
Science and Technology, University of Southern California, Los Angeles, California 90089, USA}

\begin{abstract}

In recent years, the out-of-time-order correlator (OTOC) has emerged as a diagnostic tool for information scrambling in quantum many-body systems.  Here, we present exact analytical results for the OTOC for a typical pair of random local operators supported over two regions of a bipartition. Quite remarkably, we show that this ``bipartite  OTOC'' is equal to the operator entanglement of the evolution and we determine its interplay with entangling power. Furthermore, we compute long-time averages of the OTOC and reveal their connection with eigenstate entanglement. For Hamiltonian systems, we uncover a hierarchy of constraints over the structure of the spectrum and elucidate how this affects the equilibration value of the OTOC. Finally, we provide operational significance to this bipartite  OTOC by unraveling intimate connections with average entropy production and scrambling of information at the level of quantum channels.


\end{abstract}

\maketitle


\prlsection{Introduction} A characteristic feature of certain quantum many-body systems is their ability to quickly spread ``localized'' information over subsystems, thereby making it inaccessible to local observables. Although unitary evolution retains all information, this local inaccessibility manifests itself as equilibration in closed systems,
and has been termed ``information scrambling''~\cite{page1993average, hayden2007black, hosur2016chaos, von2018operator, moudgalya2019operator}.


For Hamiltonian quantum dynamics, scrambling can be probed by examining the overlap of a time-evolved local operator $V(t) \coloneqq U_t^\dagger V U_t$ with a second static operator $W$. This overlap is commonly quantified via the strength of the commutator\footnote{In fact,  $C_{V,W}(t) = \dfrac{1}{2} \big\|  \big[ V(t) , W \big] \big\|^2$ for the norm associated with the inner product $\braket{X,Y}_{\beta} = \tr \big( X^\dagger Y \rho_\beta \big)$,  $\beta< \infty$.}
\begin{align} \label{eq:def_thermal_OTOC}
C_{V,W}(t)  \coloneqq  \frac{1}{2}  \Tr \big(  \left[ V(t) , W \right]^\dagger  \left[ V(t) , W \right] \rho_\beta \big)
\end{align} 
where $\rho_\beta$ denotes the thermal state at inverse temperature $\beta$. From the perspective of information spreading, $C_{V,W}(t)$ is a natural quantity to consider since it constitutes a state-dependent variant of the Lieb-Robinson scheme; the latter enforces a fundamental restriction on the speed of correlations spreading in nonrelativistic quantum systems~\cite{lieb1971finite, hastings2004lieb, roberts2016lieb,  lashkari2013towards}. In Eq.~\eqref{eq:def_thermal_OTOC}, it is convenient to consider pairs of operators $V,W$ which at $t=0$ act nontrivially on different subsystems, thus, commute; we follow this convention here.

The commutator $C_{V,W}(t)$ is intimately linked to the out-of-time-order correlator (OTOC)~\cite{larkin1969quasiclassical,kitaev2015simple} which is a four-point function with an unconventional time-ordering
\begin{align}
F_{V,W} (t) \coloneqq   \Tr \left( V^\dagger(t) W^\dagger V(t) W \rho_\beta  \right)  .
\end{align}
The connection between the two arises when $V,W$ are unitary; Eq.~\eqref{eq:def_thermal_OTOC} then immediately reduces to $C_{V,W}(t) = 1- \Real \left[ F_{V,W}(t) \right]$. In this Letter we focus on the infinite-temperature, $\beta = 0$ case.

Through the years, several key signatures of quantum chaos~\cite{fishman1982chaos, adachi1988quantum, gutzwiller1990chaos, haake2013quantum} have been introduced. The initial exponential growth of the OTOC was proposed as a diagnostic of quantum chaos~\cite{maldacena2016bound, roberts2015diagnosing, polchinski2016spectrum, mezei2017entanglement, huang2017out, zhang2019information, roberts2017chaos, prakash2020scrambling}. However, a careful analysis has revealed that information scrambling does not always necessitate chaos~\cite{PhysRevB.98.134303,PhysRevLett.123.160401,luitz2017information,PhysRevE.101.010202,PhysRevLett.124.140602,hashimoto2020exponential}.

Per se, the OTOC's ability to probe dynamical features clearly depends on the choice of operators $V,W$. However, it is desirable to be able to capture these features as independently as possible from the specific choice of operators. This insensitivity can be achieved by averaging over a set of operators, a strategy also considered in Refs.~\cite{cotler2017chaos, roberts2017chaos, fan2017out, de2019spectral, ma2020early, touil2020quantum, yan2020information}. It is crucial to remark that, for the averaged OTOC to faithfully capture information spreading, the averaging process must \textit{preserve the initial locality of the system}, i.e., which subsystems $V,W$ initially act upon --- an observation that was quintessential in revealing the correct behavior of the OTOC and its connection with Loschmidt echo~\cite{yan2020information}.

Given a bipartition of a finite-dimensional Hilbert space $\mathcal H = \mathcal H_A \otimes \mathcal H_B \cong \mathbb C^{d_A} \otimes \mathbb C^{d_B}$, we will henceforth focus on averaging $C_{V_A,W_B}(t)$ over the (independent) unitary operators $V_A$ and $W_B$, whose support is over subsystems $A$ and $B$, respectively. The resulting quantity
\begin{align} \label{eq:defnition_G}
G(t) \coloneqq 1 - \frac{1}{d} \Real \! \int dV dW \Tr \big( V_A^\dagger(t) W_B^\dagger V_A(t) W_B  \big),
\end{align} 
depends only on the dynamics and the Hilbert space cut, where we denote $V_A = V \otimes I_B$, $W_B = I_A \otimes W$ and the averaging is performed according to the Haar measure~\cite{watrous2018theory}. We will refer to $G(t)$ for brevity as the \textit{bipartite OTOC}, and analyzing its properties will be the focus of the present Letter.



It was recently shown in Ref.~\cite{yan2020information}, where $G(t)$ was first introduced, under the assumptions of (i)~weak coupling between $A$ and $B$, and (ii)~Markovianity, that $G(t)$ exhibits a close connection with the Loschmidt echo~\cite{peres1984stability,jalabert2001environment}; the latter has been widely employed to characterize chaos~\cite{gorin2006dynamics,goussev2012loschmidt}. Here, we first show, without any of the previous assumptions, that $G(t)$ is, in fact, amenable to exact analytical treatment, and we uncover its direct relation with entropy production, information spreading, and entanglement. We also rigorously prove that the average case is also the typical one, hence justifying the averaging process. Our main results are stated in the theorems that follow. All proofs of the claims appearing in the text can be found in Appendix~\ref{sec:app:proofs}.


\prlsection{The bipartite OTOC} We begin by bringing $G(t)$ in a more explicit form which will be the starting point for a sequence of results. This can be achieved by working on the doubled space $\mathcal H \otimes \mathcal H'$, where $\mathcal H' = \mathcal H_{A'} \otimes \mathcal H_{B'}$ is a replica of the original Hilbert space.

\begin{restatable}{theo}{GAverageUnitaries} \label{prop:GAverageUnitaries}
Let $S_{AA'}$ be the operator over $\mathcal H \otimes \mathcal H'$ that swaps $A$ with its replica $A'$ and $d = \dim(\mathcal H)$. Then
\begin{align} \label{eq:G_main}
G(t) = 1 - \frac{1}{d^2} \Tr \left( S_{AA'}  U_t^{\otimes 2}  S_{AA'} U_t^{\dagger \otimes 2} \right). 
\end{align}
The analogous expression for $ BB'$ also holds.
\end{restatable}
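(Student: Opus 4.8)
The plan is to perform the two Haar integrals one at a time and then recast the surviving partial traces as a single trace on the replica space $\mathcal H \otimes \mathcal H'$. First I would insert $V_A(t) = U_t^\dagger (V \otimes I_B) U_t$ into the integrand $\Tr\big(V_A^\dagger(t) W_B^\dagger V_A(t) W_B\big)$ of Eq.~\eqref{eq:defnition_G}. Since in the cyclic trace the factors $W_B^\dagger$ and $W_B$ bracket exactly $V_A(t)$, the $W$-average is a first-moment twirl on $B$: using $\int dW\, (I_A \otimes W^\dagger) X (I_A \otimes W) = d_B^{-1} (\Tr_B X) \otimes I_B$ with $X = V_A(t)$, the $W$-dependence collapses and the integrand becomes $d_B^{-1}\Tr\big(V_A^\dagger(t)\,[(\Tr_B V_A(t)) \otimes I_B]\big) = d_B^{-1}\Tr_A(\tilde V^\dagger \tilde V)$, where $\tilde V \coloneqq \Tr_B V_A(t)$ is an operator on $A$.

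Although $V$ entered linearly, the object $\Tr_A(\tilde V^\dagger \tilde V)$ is quadratic in $V$, so the remaining average is controlled by $\int dV\, V_{ab}\bar V_{cd} = d_A^{-1}\delta_{ac}\delta_{bd}$. Writing $\tilde V = \sum_{ab} V_{ab}\, T^{ab}$ with $T^{ab} \coloneqq \Tr_B\big[U_t^\dagger (|a\rangle\langle b| \otimes I_B) U_t\big]$, the $V$-integral reduces the whole quantity to $\frac{1}{d_A d_B}\sum_{ab}\|T^{ab}\|_2^2$.

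The decisive step is to read this sum off as a trace on $\mathcal H \otimes \mathcal H'$. Using $\Tr_A(PQ) = \Tr_{AA'}[(P\otimes Q)\, S_{AA'}]$ together with the index identity $\sum_{ab} |b\rangle\langle a|_A \otimes |a\rangle\langle b|_{A'} = S_{AA'}$, the operator $\sum_{ab}(T^{ab})^\dagger \otimes T^{ab}$ equals $\Tr_{BB'}\big[U_t^{\dagger\otimes 2}\, S_{AA'}\, U_t^{\otimes 2}\big]$; contracting with the remaining $S_{AA'}$ and merging the traces gives $\int dV\,dW\,\Tr(\cdots) = \frac{1}{d}\Tr\big(S_{AA'} U_t^{\otimes 2} S_{AA'} U_t^{\dagger\otimes 2}\big)$. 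Dividing by $d$ as in Eq.~\eqref{eq:defnition_G} yields Eq.~\eqref{eq:G_main}; moreover the trace equals $\sum_{ab}\|T^{ab}\|_2^2 \ge 0$, so it is real and the $\Real$ is redundant.

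I expect the main obstacle to be the replica bookkeeping in the last step---keeping the primed and unprimed factors, the partial traces $\Tr_B$ versus $\Tr_{BB'}$, and the daggers straight, and in particular verifying that the Haar delta-functions assemble precisely into $S_{AA'}$ rather than some other permutation. The $BB'$ statement then follows either by running the same computation with the averages performed in the opposite order, or, more cheaply, by writing $S_{AA'} = S\, S_{BB'}$ with $S$ the full swap on $\mathcal H \otimes \mathcal H'$: since $S$ commutes with $U_t^{\otimes 2}$ and $S\, S_{BB'}\, S = S_{BB'}$, the $A$-expression is carried directly into the $B$-one.
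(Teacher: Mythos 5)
Your proof is correct. It uses the same two ingredients as the paper's proof --- the first moment of the Haar measure and the swap-operator representation of traces of products --- but deploys them in the opposite order: the paper first lifts the entire four-point trace to the replica space via $\Tr(XY)=\Tr[S(X\otimes Y)]$ and then performs both averages at once through the vectorized identity $\int dU\, U\otimes U^\dagger = S/d$, whereas you average first in the physical space (the $W$-integral as a depolarizing twirl on $B$, the $V$-integral via the elementary moment formula $\int dV\, V_{ab}\bar V_{cd}=\delta_{ac}\delta_{bd}/d_A$) and only replicate at the very end to package $\sum_{ab}\|T^{ab}\|_2^2$ as the stated trace. The routes are computationally equivalent, but yours buys two things the paper's presentation leaves implicit at this stage: the intermediate form $\tfrac{1}{d}\sum_{ab}\|T^{ab}\|_2^2$ shows directly that the trace term is non-negative, so the $\Real$ in the definition is manifestly redundant (the paper drops it without comment, relying on the reality of the trace), and the operators $T^{ab}=\tr_B[U_t^\dagger(\ket{a}\!\bra{b}\otimes I_B)U_t]$ are exactly the reduced dynamics $\Uplambda_t^{\!(A)}$ acting on matrix units, so your intermediate expression already contains the content of \autoref{prop:reduced_dynamics} essentially for free. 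Your second argument for the $BB'$ statement, via $S_{AA'}=S\,S_{BB'}$ and $[S,U_t^{\otimes 2}]=0$, is precisely the paper's.
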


The above formula immediately exposes a connection between the bipartite OTOC and the \textit{operator entanglement} of the evolution $ E_{\mathrm {op}}(U_t)$, as defined in Ref.~\cite{zanardi2001entanglement} (see also Appendix~\ref{sec:app:proofs} for the relevant definitions). The two quantities, remarkably, coincide exactly. This observation also allows one to express the \textit{entangling power}~\cite{zanardi2000entangling} $e_{\mathrm{P}}(U_t)$ as a function of the bipartite OTOC for the symmetric case $d_A = d_B$. The former quantifies the average entanglement produced by the evolution and has been established as an indicator of global chaos in few-body systems~\cite{wang2004entanglement, lakshminarayan2001entangling, scott2003entangling, pal2018entangling}. 

\begin{restatable}{theo}{EntanglingPower}  \label{prop:entangling_power}
Let $G_U$ denote the bipartite OTOC for the evolution~$U$. Then, (i)~$E_{\mathrm {op}}(U_t) = G_{U_t}$, and (ii)~for a symmetric bipartition $d_A = d_B$,
\begin{align} \label{eq:entangling_power}
    e_{\mathrm{P}}(U_t) = \frac{d}{(\sqrt{d} + 1)^2} \left( G_{U_t} + G_{U_t  S_{AB}} - G_{S_{AB}} \right).
\end{align}
\end{restatable}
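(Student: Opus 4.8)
The plan is to treat part~(i) as a direct identification and bootstrap part~(ii) from it. For (i), recall that the operator entanglement $E_{\mathrm{op}}(U)$ of \cite{zanardi2001entanglement} is the linear entropy of the normalized Choi state $\rho = |U\rangle\rangle\langle\langle U|/d$ across the $AA'|BB'$ cut of the doubled space, i.e.\ $E_{\mathrm{op}}(U) = 1 - \Tr(\rho_{AA'}^2)$ with $\rho_{AA'} = \Tr_{BB'}\rho$, whose eigenvalues are the (normalized) squared operator-Schmidt coefficients of $U$. Writing the purity via the replica-swap identity $\Tr(\rho_{AA'}^2) = \Tr[(\rho\otimes\rho)\,\tilde S_{AA'}]$, with $\tilde S_{AA'}$ the swap of the two $AA'$ replicas, and inserting $\rho\propto|U\rangle\rangle\langle\langle U|$, I would contract the maximally-entangled (identity-channel) legs of the Choi vectors; the expression then reduces to $\frac{1}{d^2}\Tr(S_{AA'}U^{\otimes2}S_{AA'}U^{\dagger\otimes2})$, which is exactly the trace of Theorem~\ref{prop:GAverageUnitaries}. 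Hence $E_{\mathrm{op}}(U_t) = G_{U_t}$. The only delicate point is the index bookkeeping of the vectorization, ensuring the replica swap $\tilde S_{AA'}$ becomes the doubled-space swap $S_{AA'}$ flanked by $U^{\otimes2}$ and $U^{\dagger\otimes2}$.

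For part~(ii) I would start from $e_{\mathrm{P}}(U) = 1 - \overline{\Tr_A[(\Tr_B|\Psi\rangle\langle\Psi|)^2]}$, the Haar average of the linear entropy of entanglement of $|\Psi\rangle = U\,|\psi_A\rangle\otimes|\phi_B\rangle$ over independent local inputs. Applying the purity replica-swap identity together with the symmetric-subspace average $\int d\psi\,(|\psi\rangle\langle\psi|)^{\otimes2} = (I+S)/[d(d+1)]$ (with $S$ the relevant two-copy swap) on $A$ and $B$ separately, the average becomes proportional to $\Tr[U^{\otimes2}(I_A+\tilde S_A)(I_B+\tilde S_B)U^{\dagger\otimes2}\,\tilde S_A]$ with prefactor $1/[d_A(d_A+1)d_B(d_B+1)]$, where $\tilde S_A,\tilde S_B$ are the replica swaps on $A$ and $B$. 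Expanding the product $(I_A+\tilde S_A)(I_B+\tilde S_B)$ yields four traces.

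Two of the four are $U$-independent swap traces: the $I$ term gives $\Tr[\tilde S_A] = d_A d_B^2$, and the full-swap term $\tilde S_A\tilde S_B$ collapses, since the full swap commutes with $U^{\otimes2}$, to $\Tr[\tilde S_B] = d_A^2 d_B$. The $\tilde S_A$ term is $\Tr[U^{\otimes2}\tilde S_A U^{\dagger\otimes2}\tilde S_A] = d^2(1-G_U)$ directly by Theorem~\ref{prop:GAverageUnitaries}. The crucial term is $\Tr[U^{\otimes2}\tilde S_B U^{\dagger\otimes2}\tilde S_A]$: assuming $d_A=d_B$ I would insert $S_{AB}^{\otimes2}S_{AB}^{\otimes2}=I$ and use the conjugation identity $S_{AB}^{\otimes2}\,\tilde S_A\,S_{AB}^{\otimes2}=\tilde S_B$ to rewrite it as the operator-entanglement trace of $U S_{AB}$, equal to $d^2(1-G_{U S_{AB}})$.

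Finally I would collect the contributions, specialize to $d_A=d_B=\sqrt d$ (so that $d_A(d_A+1)d_B(d_B+1)=d(\sqrt d+1)^2$ and both swap traces equal $d^{3/2}$), and simplify. Computing $G_{S_{AB}}=1-1/d$ from the same swap algebra, a short computation shows the residual constants reassemble precisely into $-\frac{d}{(\sqrt d+1)^2}G_{S_{AB}}$, yielding Eq.~\eqref{eq:entangling_power}. I expect the main obstacle to be the swap-operator algebra of part~(ii) --- especially the conjugation step producing the $G_{U S_{AB}}$ term, which is exactly where the symmetric-bipartition hypothesis $d_A=d_B$ is needed --- together with fixing the Haar-averaging normalization so that it is consistent with the adopted definition of $e_{\mathrm{P}}$.
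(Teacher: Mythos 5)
Your proof is correct. Part~(i) follows the same route as the paper: write $E_{\mathrm{op}}(U)=1-\Tr(\sigma_U^2)$ for the reduced Choi state and apply the replica--swap purity identity, which lands exactly on the trace formula of Theorem~\ref{prop:GAverageUnitaries}. For part~(ii) the paper simply invokes Eq.~(12) of Ref.~\cite{zanardi2001entanglement} together with part~(i), whereas you re-derive that relation from scratch: Haar-averaging the two local inputs with $\int d\psi\,(\ket{\psi}\!\bra{\psi})^{\otimes2}=(I+S)/[d_\chi(d_\chi+1)]$, expanding into four swap traces, and identifying the $\tilde S_A$ term with $d^2(1-G_U)$ and the cross term $\Tr[U^{\otimes2}\tilde S_B U^{\dagger\otimes2}\tilde S_A]$ with $d^2(1-G_{US_{AB}})$ via conjugation by $S_{AB}^{\otimes2}$ (this is indeed where $d_A=d_B$ enters). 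I checked the bookkeeping: the two $U$-independent traces $d_Ad_B^2$ and $d_A^2d_B$ combine with the normalization $d(\sqrt d+1)^2$ to give $(1-\sqrt d)/(1+\sqrt d)=-\frac{d}{(\sqrt d+1)^2}G_{S_{AB}}$ with $G_{S_{AB}}=1-1/d$, so the constants reassemble as claimed. The net effect is that your argument is self-contained where the paper outsources the key identity to a citation; the cost is a longer computation, the benefit is that the reader sees explicitly why the symmetric-bipartition hypothesis is needed and where each term in Eq.~\eqref{eq:entangling_power} originates.
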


For the finite-temperature case, Eq.~\eqref{eq:G_main} admits a straightforward generalization which we report in Appendix~\ref{sec:app:proofs}. However, a direct connection with operator entanglement and entangling power may not be so simple.

{\em How informative is the average $G(t)$?}---~ Usually, one is interested in behavior of the OTOC for a typical choice of random unitary operators.
Because of measure concentration~\cite{ledoux2001concentration}, we prove that the two essentially coincide; i.e., the probability that a random instance deviates significantly from the mean is exponentially suppressed as the dimension of either of the subsystems $A$ and $B$ grows large.

\begin{restatable}{prop}{GDoubleConcentration} \label{prop:GDoubleConcentration}
Let $P(\epsilon)$ be the probability that a random instance of $C_{V_A,W_B}(t)$ deviates from its Haar average $G(t)$ more than $\epsilon$. Then,
\begin{align} \label{eq:double_concentration}
P(\epsilon) \le 2 \exp\left( -\frac{\epsilon^2 d_{\max}}{64} \right),
\end{align}
where $d_{\max} = \max \{ d_A, d_B \}$.
\end{restatable}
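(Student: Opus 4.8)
The plan is to invoke Lévy's lemma, i.e.\ the Gaussian concentration of the Haar measure on the unitary group, applied to $C_{V_A,W_B}(t)$ regarded as a real function $f(V,W)$ on $U(d_A)\times U(d_B)$ whose mean is exactly $G(t)$ by construction. The convenient starting point is the commutator form valid at $\beta=0$, namely $f(V,W)=\tfrac{1}{2d}\,\big\|[V_A(t),W_B]\big\|_2^2$ with $V_A(t)=U_t^\dagger (V\otimes I_B)U_t$ and $\|\cdot\|_2$ the Frobenius norm. Since Lévy's lemma converts a Lipschitz modulus into a sub-Gaussian tail, the whole argument reduces to controlling how $f$ varies as each unitary is perturbed.

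The key step is therefore the Lipschitz estimate, which I would carry out in each variable separately. Fixing $W$ and perturbing $V\to V'$, I would chain three elementary bounds: (i)~conjugation is an isometry, so $\|V_A(t)-V'_A(t)\|_2=\|(V-V')\otimes I_B\|_2=\sqrt{d_B}\,\|V-V'\|_2$; (ii)~the map $X\mapsto[X,W_B]$ is linear with norm at most $2\|W_B\|_\infty=2$; and (iii)~on its bounded range $[0,2\sqrt d]$ the map $x\mapsto \tfrac{1}{2d}x^2$ has derivative at most $2/\sqrt d$. Composing these gives a Frobenius-Lipschitz constant $\eta_A=O(1/\sqrt{d_A})$ in $V$, and by the symmetric $BB'$ argument $\eta_B=O(1/\sqrt{d_B})$ in $W$; in particular both are bounded by a dimension-independent constant, which is all Lévy's lemma requires.

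With these in hand I would apply the standard concentration inequality for Haar-random unitaries: for an $\eta$-Lipschitz function on $U(n)$ the deviation from the mean exceeds $\delta$ with probability at most $2\exp(-c\,n\delta^2/\eta^2)$, where the curvature constant $c$ together with the bound on $\eta$ produces the stated $1/64$. Concentrating over $V$ at fixed $W$, and separately over $W$ at fixed $V$ through the partial average $\mathbb{E}_W[f\mid V]$ (itself Lipschitz with the same constant $\eta_A$), a triangle inequality and a union bound over the two deviation events assemble the joint statement. The main obstacle is precisely this last assembly: one must control the fluctuation over the two \emph{independent} Haar integrals simultaneously and extract the correct governing subsystem dimension. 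This is the delicate point, since the estimate is genuinely strong only when the relevant subsystem is large, so tracking which of $d_A,d_B$ sets the exponent --- and with what constant --- is where care is required, rather than in any single concentration step.
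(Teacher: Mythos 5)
Your overall strategy --- L\'evy's lemma, a triangle inequality through a partial average, and a union bound over the two independent Haar integrals --- is exactly the paper's, and your per-variable Lipschitz estimate is correct: chaining the isometry of conjugation, the norm-$2$ commutator map, and the derivative bound for $x\mapsto x^2/(2d)$ on $[0,2\sqrt d]$ gives a constant $4/\sqrt{d_A}$ in $V$ (the paper uses the cruder, dimension-independent constant $2$). The genuine gap is in the assembly, and it is not a matter of constants. In the decomposition $f-G=\bigl(f-\mathbb{E}_V[f\mid W]\bigr)+\bigl(\mathbb{E}_V[f\mid W]-G\bigr)$, the first piece concentrates over $U(d_A)$ at a rate set by your $\eta_A$, while the second is a function of $W$ alone and concentrates over $U(d_B)$; if you only grant it the Lipschitz constant inherited from $f$ (your $\eta_B$, or the dimension-independent bound --- your parenthetical ``itself Lipschitz with the same constant'' is precisely this move), the union bound is governed by the \emph{worse} of the two exponents, i.e.\ by $\min\{d_A^2,d_B^2\}$ (or $\min\{d_A,d_B\}$ with the constant bound). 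Swapping $A\leftrightarrow B$ does not help, since the resulting expression is already symmetric. You therefore obtain a tail of the form $2\exp(-c\,d_{\min}^2\epsilon^2)$, which does not imply the stated $2\exp(-\epsilon^2 d_{\max}/64)$: for a single-qubit $A$ coupled to a large $B$ --- the regime the proposition is designed to cover --- your bound is an $\epsilon$-dependent constant, whereas the claim promises decay in $d_B$. (A smaller slip: $\mathbb{E}_W[f\mid V]$ is a function of $V$, so one cannot ``concentrate over $W$'' through it; the function you want there is $\mathbb{E}_V[f\mid W]$.)

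The missing idea is that partial Haar averaging \emph{shrinks} the Lipschitz constant of the remaining variable. The paper computes $g(W)=\mathbb{E}_V\bigl[C_{V_A,W_B}\bigr]$ explicitly: since $\int dV\,V_A^\dagger\otimes V_A=S_{AA'}/d_A$ has operator norm $1/d_A$ rather than $1$, the averaged function is Lipschitz in $W$ with constant $2/d_A$. Consequently the $W$-concentration is $\exp(-d_A^2 d_B\epsilon^2/64)$, which is always dominated by the $V$-concentration $\exp(-d_A\epsilon^2/64)$, so the union bound is governed by $d_A$ alone; only then does exchanging the roles of $A$ and $B$ upgrade the exponent to $d_{\max}$. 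If you supply this $1/d_A$ suppression (and propagate your sharper single-variable constants through it), the argument closes and may even beat the paper's exponent; as written, it does not establish the proposition.
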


In the definition of the bipartite OTOC and to obtain the replica formula Eq.~\eqref{eq:G_main},
we have so far considered averaging over the uniform (Haar) ensemble which continuously extends over the whole unitary group. Although natural from a mathematical viewpoint, this choice can turn out to be rather complicated on physical and numerical grounds~\cite{emerson2003pseudo}. Nonetheless, we show in Appendix~\ref{sec:app:measures} that Haar averaging can be replaced by any unitary ensemble that forms a 1-design~\cite{divincenzo2002quantum, renes2004symmetric, scott2006tight, gross2007evenly} without altering $G(t)$. Such ensembles mimic the Haar randomness only up to the first moment, which is the depth of randomness that the OTOC can probe~\cite{roberts2017chaos}. The latter assumption is thus much weaker than Haar randomness. For instance, consider the case of a spin-$1/2$ many-body system split into two parts, $A$ and $B$. Instead of averaging over Haar random unitaries $V_A$ and $W_B$, that typically do not factor, the 1-design (equivalent) picture prescribes to instead consider only fully factorized unitaries with support over $A$ and $B$, e.g., products of local Pauli matrices.

\prlsection{Time-averaging the bipartite OTOC} In finite-dimensional quantum systems, nontrivial quantum expectation values or quantities such as $C_{V,W}(t)$ do not converge to a limit for $t \to \infty$. Instead, after a long time they typically oscillate around an equilibrium value~\cite{reimann2008foundation,linden2009quantum,
venuti2011exact,nahum2018operator,
daug2019detection,alhambra2020time,
daug2019detection} which can be extracted by time-averaging $\overline{X(t)} \coloneqq \lim_{T \to \infty} \frac{1}{T} \int_0^T dt \, X(t)$. We now turn to examine this long-time behavior $\overline {G(t)}$ of the bipartite OTOC  as a function of the Hamiltonian and the Hilbert space cut.

Let us begin with the case of a chaotic dynamics, which entails level repulsion statistics~\cite{haake2013quantum} and an ``incommensurable'' relation among the energy levels. As such, chaotic Hamiltonians satisfy (either exactly or to very good approximation) the no-resonance condition (NRC): The energy levels and energy gaps feature nondegeneracy. This has important implications for the long-time behavior of their bipartite OTOC, as we will see soon.

Let us spectrally decompose $H = \sum_k E_k \ket{\phi_k} \!\bra{\phi_k}$ and use $\rho^{(\chi)}_k \coloneqq \tr_{\overline \chi} \left( \ket{\phi_k} \!\bra{\phi_k}\right)$  to denote the reduced density operator over $\chi = A,B$ corresponding to the $k$th Hamiltonian eigenstate ($\overline \chi$ corresponds to the complement). Below, $\braket{X,Y} \coloneqq \tr (X^\dagger Y )$ denotes the Hilbert-Schmidt inner product~\cite{bhatia2013matrix}, which gives rise to the operator 2-norm $\left\| X \right\|_2 \coloneqq \sqrt{\braket{X,X}}\,$.
\begin{restatable}{prop}{Rmatrix} \label{prop:R_matrix}
Consider a Hamiltonian satisfying the NRC.
Then
\begin{align} \label{eq:G_ave_R_matrix}
\overline{G(t)}^\mathrm{NRC} = 1- \frac{1}{d^2} \sum_{\chi \in \{ A,B \}} \Big(  \big\| R^{(\chi)} \big\|_2^2 - \frac{1}{2} \big\| R^{(\chi)}_D \big\|_2^2 \Big)
\end{align}
where $R^{(\chi)}$ is the Gram matrix of the reduced Hamiltonian eigenstates $\{ \rho^{(\chi)}_k \}_{k=1}^d$, i.e.,
\begin{align} 
R_{kl}^{(\chi)} \coloneqq \braket{\rho^{(\chi)}_k , \rho^{(\chi)}_l}
\end{align}
while $\big( R^{(\chi)}_D \big)_{kl} \coloneqq R^{(\chi)}_{kl} \delta_{kl}$.
\end{restatable}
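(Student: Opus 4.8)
The plan is to start from the replica formula of Theorem~\ref{prop:GAverageUnitaries} and time-average it mode by mode. Writing the spectral decomposition $U_t = \sum_k e^{-iE_k t}\ket{\phi_k}\!\bra{\phi_k}$, the doubled evolution becomes $U_t^{\otimes 2} = \sum_{k,l}e^{-i(E_k+E_l)t}\,P_k\otimes P_l$ with $P_k \coloneqq \ket{\phi_k}\!\bra{\phi_k}$. Substituting this and its conjugate into $\Tr(S_{AA'}U_t^{\otimes 2}S_{AA'}U_t^{\dagger\otimes 2})$ produces a quadruple sum over $k,l,m,n$ weighted by the phase $e^{-i(E_k+E_l-E_m-E_n)t}$ and by the static traces $\Tr[S_{AA'}(P_k\otimes P_l)S_{AA'}(P_m\otimes P_n)]$. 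Time-averaging collapses each phase to the indicator that $E_k+E_l=E_m+E_n$.

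The next step is to impose the NRC to resolve this resonance condition. Rewriting $E_k+E_l=E_m+E_n$ as $E_k-E_m=E_n-E_l$ and invoking nondegeneracy of both levels and gaps, the only admissible tuples are the ``direct'' channel $(m,n)=(k,l)$ and the ``exchange'' channel $(m,n)=(l,k)$; these coincide exactly when $k=l=m=n$, so an inclusion--exclusion correction removes the double-counted diagonal. The surviving contribution is thus $\sum_{k,l}\Tr[S_{AA'}(P_k\otimes P_l)S_{AA'}(P_k\otimes P_l)] + \sum_{k,l}\Tr[S_{AA'}(P_k\otimes P_l)S_{AA'}(P_l\otimes P_k)] - \sum_k\Tr[S_{AA'}(P_k\otimes P_k)S_{AA'}(P_k\otimes P_k)]$.

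The heart of the calculation is to evaluate these two families of traces. Because each $P_k\otimes P_l$ is a rank-one projector onto $\ket{\phi_k}\otimes\ket{\phi_l}$, every trace of the form $\Tr[S_{AA'}\,\ket{\Phi}\!\bra{\Phi}\,S_{AA'}\,\ket{\Psi}\!\bra{\Psi}]$ collapses to the squared swap overlap $\lvert\braket{\Phi|S_{AA'}|\Psi}\rvert^2$. I will compute these overlaps with the elementary partial-swap identity $\Tr[S_{AA'}(X\otimes Y)] = \tr_A[(\tr_B X)(\tr_B Y)]$ (with the replica identifications understood). For the direct channel this gives $\braket{\phi_k\phi_l|S_{AA'}|\phi_k\phi_l}=\tr_A(\rho^{(A)}_k\rho^{(A)}_l)=R^{(A)}_{kl}$, hence a contribution $\sum_{k,l}(R^{(A)}_{kl})^2=\|R^{(A)}\|_2^2$. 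The exchange channel is the crucial and least obvious step: the same identity yields $\braket{\phi_k\phi_l|S_{AA'}|\phi_l\phi_k}=\tr_A(O_{lk}O_{kl})$ with $O_{kl}\coloneqq\tr_B\ket{\phi_k}\!\bra{\phi_l}$, and the key lemma is that this ``coherence'' quantity equals the $B$-Gram matrix, $\tr_A(O_{lk}O_{kl})=\|O_{kl}\|_2^2=\tr_B(\rho^{(B)}_k\rho^{(B)}_l)=R^{(B)}_{kl}$. This is precisely what makes $R^{(B)}$ surface even though only $A$ was swapped, and I expect verifying it (most transparently by expanding $\ket{\phi_k},\ket{\phi_l}$ in a product basis and contracting the $B$-indices) to be the main obstacle, since it is where the two subsystems become entwined.

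Finally I will assemble the pieces. The exchange channel contributes $\sum_{k,l}(R^{(B)}_{kl})^2=\|R^{(B)}\|_2^2$, while the subtracted diagonal is $\sum_k (R^{(A)}_{kk})^2=\|R^{(A)}_D\|_2^2$. Using that the two reduced states of a pure bipartite eigenstate share the same spectrum, so that $R^{(A)}_{kk}=\tr[(\rho^{(A)}_k)^2]=\tr[(\rho^{(B)}_k)^2]=R^{(B)}_{kk}$ and hence $\|R^{(A)}_D\|_2^2=\|R^{(B)}_D\|_2^2$, the diagonal correction splits symmetrically as $\tfrac12\|R^{(A)}_D\|_2^2+\tfrac12\|R^{(B)}_D\|_2^2$. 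Collecting everything gives $\overline{\Tr(S_{AA'}U_t^{\otimes2}S_{AA'}U_t^{\dagger\otimes2})}=\sum_{\chi\in\{A,B\}}\big(\|R^{(\chi)}\|_2^2-\tfrac12\|R^{(\chi)}_D\|_2^2\big)$; dividing by $d^2$ and subtracting from $1$ yields Eq.~\eqref{eq:G_ave_R_matrix}. Repeating the argument with the $BB'$ swap reproduces the same value, consistent with the manifest $A\leftrightarrow B$ symmetry of the final expression.
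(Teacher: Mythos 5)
Your proposal is correct and follows essentially the same route as the paper: spectral decomposition of $U_t^{\otimes 2}$, time-averaging the phases to the resonance condition, resolving it under the NRC into the direct and exchange channels minus the double-counted diagonal, and evaluating the resulting swap traces as Gram-matrix entries. The only (valid) micro-difference is in the exchange term, where the paper uses the identity $P_l\otimes P_k = S\,(P_k\otimes P_l)\,S$ with $S=S_{AA'}S_{BB'}$ to convert the $AA'$ swap expectation directly into a $BB'$ one, whereas you reach the same $R^{(B)}_{kl}$ by computing $\tr_A(O_{lk}O_{kl})=\lVert O_{kl}\rVert_2^2=\tr(\rho^{(B)}_k\rho^{(B)}_l)$ for the transition operators $O_{kl}=\tr_B\ket{\phi_k}\!\bra{\phi_l}$ — a correct identity that a short index computation confirms.
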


Let us first point out some basic, yet important properties of the above formula. The matrix $R^{(\chi)}$ is real and symmetric, while $R_D^{(\chi)}$ is positive semidefinite and diagonal. 
Moreover, the completeness of the Hamiltonian eigenvectors imposes $\sum_k \rho^{(\chi)}_k = d_{\overline \chi} I$; thus the rescaled $\tilde R^{(\chi)} \coloneqq R^{(\chi)} / d_{\overline \chi}$ are doubly stochastic, i.e., $\sum_i \tilde R^{(\chi)}_{ij}$ = $\sum_i \tilde R^{(\chi)}_{ji} = 1 $ $\forall j$. As $\tilde R^{(\chi)}$ is a (rescaled) Gram matrix, its eigevalues are non-negative, upper bounded by 1, and at most $d_{\chi}^2$ of them are nonzero~\cite{bhatia2013matrix}. This last property follows from the fact that $\Rank\tilde R^{(\chi)} = \dim \Span \{ \rho_k^{(\chi)} \}_k \le d_{\chi}^2$. Observe also that $\big\| R^{(A)}_D \big\|_2^2 = \big\| R^{(B)}_D \big\|_2^2$ as two states $\rho_k^{(A)}$ and $\rho_k^{(B)}$ always have the same spectrum (up to irrelevant zeroes).

\prlsection{Bipartite OTOC and entanglement} \autoref{prop:R_matrix} makes it possible to bridge the long-time behavior of the bipartite OTOC with the entanglement structure of the Hamiltonian eigenstates. Let us begin with the symmetric case where $d_A = d_B$ and all $\ket{\phi_k}$ are maximally entangled with respect to the $A$-$B$ Hilbert space cut. This limit uniquely determines the time average for the NRC case, regardless of the exact Hamiltonian eigenbasis. In general, however, knowledge of the entanglement is not enough to uniquely determine the equilibration value; the inner products $R^{(\chi)}_{kl}$ go beyond probing just the spectrum of the reduced states. A simple substitution in Eq.~\eqref{eq:G_ave_R_matrix} gives for the maximally entangled case $\overline{G_{\mathrm{ME}}(t)}^\mathrm{NRC} = (1 - 1/d)^2$. We will later show the upper bound $G(t) \le 1 - 1/d^2_{\min}$; therefore the equilibrium value for the bipartite OTOC in this case is nearly maximal, as expected for highly entangled models (e.g.,~\cite{huang2019finite,harrow2019separation}).

How robust is this conclusion for chaotic Hamiltonians with a possibly asymmetric bipartition? Typical eigenstates of chaotic Hamiltonians, as also predicted by the eigenstate thermalization hypothesis~\cite{deutsch1991quantum, srednicki1994chaos, rigol2008thermalization}, are believed to obey a volume law for the entanglement entropy. Moreover, their entanglement properties in the bulk resemble those of Haar random pure states~\cite{d2016quantum,huang2019universal,lu2019renyi}.
We will now show that high entanglement for the Hamiltonian eigenstates necessarily implies that the deviation of the actual equilibration value from $\overline{G_{\mathrm{ME}}(t)}^\mathrm{NRC}$ is small.




It is convenient for this purpose to quantify the amount of entanglement via the linear entropy~\cite{horodecki2009quantum,bose2000mixedness} of the reduced state $E(\ket{\psi_{AB}}) \coloneqq S_{\mathrm{lin}}\left( \tr_{ \chi}\ket{\psi_{AB}}\!\bra{\psi_{AB}} \right)$, where $S_{\mathrm{lin}} (\rho) \coloneqq 1 - \tr(\rho^2)$. The latter will also emerge naturally later when we express the bipartite OTOC in terms of entropy production. Notice that $ E \le 1 - 1/d_{\max} \coloneqq E_{\max}$, which is achievable only for $d_A = d_B$.

\begin{restatable}{prop}{boundNRC} \label{prop:bound_NRC}
If $E_{\max} - E(\ket{\phi_k}) \le \epsilon$ holds for at least a fraction $\alpha$ of the Hamiltonian eigenstates, then $\big| \overline{G_{\mathrm{ME}}(t)}^\mathrm{NRC} - \overline{G(t)}^\mathrm{NRC} \big| \le \alpha J + (1-\alpha)K$, where
\begin{subequations}
\begin{align}
J &\coloneqq   \frac{6 \epsilon }{d_{\min}}   + \frac{5 \epsilon^2}{2}   +  2 \frac{\lambda^2-1}{d_{\max}^2} \\
K  &\coloneqq   \left( 1 + \frac{2}{d_{\min}} \right)(1-\alpha) + \frac{2}{d} + 4 (\epsilon + \sqrt{\epsilon})  
\end{align}
\end{subequations}
and $\lambda = d_{\max} / d_{\min}$.
\end{restatable}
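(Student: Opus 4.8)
The plan is to start from the closed form in \autoref{prop:R_matrix} and quantify, entry by entry, how the Gram matrices $R^{(\chi)}$ depart from their maximally entangled counterparts. First I would record the two ingredients that make the comparison tractable: substituting the maximally entangled reduced states into Eq.~\eqref{eq:G_ave_R_matrix} reproduces $\overline{G_{\mathrm{ME}}(t)}^{\mathrm{NRC}} = (1-1/d)^2$, and, crucially, the diagonal of each Gram matrix is fixed by the entanglement, since $R^{(\chi)}_{kk} = \Tr\big((\rho_k^{(\chi)})^2\big) = 1 - E(\ket{\phi_k})$. The hypothesis $E_{\max} - E(\ket{\phi_k}) \le \epsilon$ therefore gives pointwise control of the diagonal entries on the fraction $\alpha$ of ``good'' eigenstates, whereas for the remaining ``bad'' ones only the trivial range $1/d_{\min} \le R^{(\chi)}_{kk} \le 1$ is available.

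The key algebraic step is to write each reduced eigenstate as a fluctuation about the maximally mixed state, $\rho_k^{(\chi)} = I_\chi/d_\chi + \delta_k^{(\chi)}$ with $\delta_k^{(\chi)}$ traceless. Completeness of the eigenbasis, $\sum_k \rho_k^{(\chi)} = d_{\overline\chi} I_\chi$, forces $\sum_k \delta_k^{(\chi)} = 0$, while the diagonal identity above yields $\| \delta_k^{(\chi)} \|_2^2 = 1 - E(\ket{\phi_k}) - 1/d_\chi$, which is $\le \epsilon$ for every good $k$ and both $\chi$. Expanding $R_{kl}^{(\chi)} = 1/d_\chi + \braket{\delta_k^{(\chi)}, \delta_l^{(\chi)}}$ and inserting it into $\| R^{(\chi)} \|_2^2$ and $\| R^{(\chi)}_D \|_2^2$, all terms linear in $\delta$ vanish by $\sum_k \delta_k^{(\chi)} = 0$, leaving $\| R^{(\chi)} \|_2^2 = d_{\overline\chi}^2 + \sum_{k,l} \braket{\delta_k^{(\chi)}, \delta_l^{(\chi)}}^2$ together with quadratic and quartic diagonal remainders. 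Collecting the $\delta$-independent pieces and subtracting $\overline{G_{\mathrm{ME}}(t)}^{\mathrm{NRC}}$ then isolates the purely geometric residual $(1/d_{\min} - 1/d_{\max})^2 = (\lambda-1)^2/d_{\max}^2 \le (\lambda^2-1)/d_{\max}^2$ caused by the asymmetry of the cut, which accounts for the last term of $J$.

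The main obstacle is controlling the off-diagonal sum $\sum_{k,l} \braket{\delta_k^{(\chi)}, \delta_l^{(\chi)}}^2$, because the entropy hypothesis constrains only the diagonal norms $\| \delta_k^{(\chi)} \|_2^2$ and says nothing a priori about overlaps of distinct reduced states. I would split the double sum into good--good, good--bad, and bad--bad blocks. For the good--good block the per-entry Cauchy--Schwarz estimate $\braket{\delta_k^{(\chi)}, \delta_l^{(\chi)}}^2 \le \| \delta_k^{(\chi)} \|_2^2 \| \delta_l^{(\chi)} \|_2^2 \le \epsilon^2$ is enough and feeds the $\epsilon^2$ term of $J$; the blocks involving bad states, where $\| \delta^{(\chi)} \|_2^2$ can be $O(1)$, are handled more efficiently through the Gram operator $\Gamma^{(\chi)}$ with entries $\braket{\delta_k^{(\chi)}, \delta_l^{(\chi)}}$, using $\| \Gamma^{(\chi)} \|_2^2 \le \Tr \Gamma^{(\chi)} \, \| \Gamma^{(\chi)} \|_\infty$ and a Gershgorin row-sum bound on $\| \Gamma^{(\chi)} \|_\infty$. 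This last step mixes squared norms ($\Tr\Gamma^{(\chi)} = \sum_k \| \delta_k^{(\chi)} \|_2^2$) with unsquared ones ($\| \delta_k^{(\chi)} \|_2 \le \sqrt{\epsilon}$ on good indices), which is exactly what produces the $\sqrt{\epsilon}$ together with the $(1-\alpha)$ structure of $K$, while the quartic and quadratic diagonal corrections supply the $\epsilon/d_{\min}$ and $1/d$ terms.

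The proof then closes by summing over $\chi \in \{A,B\}$ and applying the triangle inequality to bundle all estimates into $\alpha J + (1-\alpha) K$. The one point demanding genuine care is the bookkeeping of the good/bad partition, so that contributions present regardless of $\delta$ (notably the asymmetry residual) are correctly apportioned between the $\alpha J$ and $(1-\alpha) K$ budgets and the stated inequality holds uniformly across the whole range of $\alpha$.
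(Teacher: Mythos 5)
Your proposal follows the same skeleton as the paper's proof: start from Eq.~\eqref{eq:G_ave_R_matrix}, write $\rho_k^{(\chi)} = I_\chi/d_\chi + \delta_k^{(\chi)}$, note that the entanglement hypothesis gives $\|\delta_k^{(\chi)}\|_2^2 \le \epsilon$ for both $\chi$ on the good indices, split the double sums into good/bad blocks, and close with Cauchy--Schwarz and the triangle inequality over the three terms of the formula. Where you genuinely depart from the paper is in invoking $\sum_k \delta_k^{(\chi)} = 0$ to cancel the off-diagonal linear terms exactly: the paper does not do this, and instead bounds $\tfrac{2}{d_\chi}\big|\braket{\Delta_k^{(\chi)},\Delta_l^{(\chi)}}\big|$ entrywise by Cauchy--Schwarz; it is precisely these linear terms that produce the $6\epsilon/d_{\min}$ and the $\sqrt{\epsilon}$ contributions in the paper's derivation. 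Your cancellation is correct and yields a strictly tighter intermediate bound (which still implies the stated one), and it is why your geometric residual comes out as $(1/d_{\min}-1/d_{\max})^2$ rather than the paper's $1/d_{\min}^2 - 1/d_{\max}^2$.

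Two caveats. First, the Gram-operator/Gershgorin step is both unnecessary and, as stated, unsound on the mixed block: the inequality $\|\Gamma\|_2^2 \le \Tr(\Gamma)\, \|\Gamma\|_\infty$ requires positive semidefiniteness, which holds for the full matrix and for the bad--bad principal block but not for the rectangular good--bad block; moreover, applied to the full $\Gamma$ it degrades the good--good contribution from $\epsilon^2$ to $\epsilon^{3/2}$ (since $\|\Gamma\|_\infty$ is controlled via unsquared norms $\|\delta_l\|_2 \le \sqrt{\epsilon}$), which the stated $J$ does not tolerate. The paper simply uses per-entry Cauchy--Schwarz with $\|\delta_k^{(\chi)}\|_2^2 \le 1 - 1/d_\chi$ on the bad indices throughout, and that suffices everywhere. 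Second, because you cancel the linear terms, the constants $6\epsilon/d_{\min}$ and $4(\epsilon+\sqrt{\epsilon})$ no longer arise where the paper obtains them; you would still need to check explicitly that your (tighter) collection of terms is dominated by $\alpha J + (1-\alpha)K$ uniformly in $\alpha$, which is plausible but not carried out in the proposal.
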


%
%
The above bound provides a sufficient condition such that  the bipartite OTOC equilibrates around $\overline{G_{\mathrm{ME}}(t)}^\mathrm{NRC}$. It is expressed in terms of the fraction $\alpha$ of the highly entangled eigenstates, their entanglement and the asymmetry of the $A$-$B$ bipartition. Notice that the bound simplifies considerably for the case $\alpha = 1$ and $d_{\mathrm{min}}=d_{\mathrm{max}} = \sqrt{d}$, that is, \(\big| \overline{G_{\mathrm{ME}}(t)}^\mathrm{NRC} - \overline{G(t)}^\mathrm{NRC} \big| \leq \epsilon (6/\sqrt{d} + 5 \epsilon/2)\) which should hold to a good approximation for Hamiltonians with high entanglement in the bulk of the energies. Applied to chaotic Hamiltonians\footnote{Here chaoticity concretely means that the Hamiltonian spectrum satisfies the NRC and that the entanglement of the typical eigenvectors in the bulk, which determine the equilibration value, resembles that of Haar random vectors~\cite{lubkin1978entropy, hamma2012quantum}, i.e., $\Tr \left( \rho_\chi ^2 \right) \approx (d_A + d_B)/(d+1) $ thus $\epsilon = O(1/d_{\min})$ and $\alpha \approx 1$.}, the bound of \autoref{prop:bound_NRC} indicates that the bipartite OTOC will equilibrate near $\overline{G_{\mathrm{ME}}(t)}^\mathrm{NRC}$, with deviations up to $O(1 /d_{\min}^2)$. For a fixed ratio $\lambda$ and as $d$ grows, $\overline{G(t)}^\mathrm{NRC}$ hence converges to $\overline{G_{\mathrm{ME}}(t)}^\mathrm{NRC}$ for all chaotic systems. Since $G(t) \le 1 - 1/d_{\min}^2$, fluctuations around the time average are necessarily insignificant, justifying the term equilibration.

\prlsection{Beyond chaotic Hamiltonians} 
We now relax the ``strong'' level repulsion, i.e., NRC, criterion and uncover how a hierarchy of constraints, each implying a different strength of chaos, is reflected in the equilibration value of the bipartite OTOC.

Integrable models, which possess a structured spectrum, are expected to violate the NRC. Nevertheless, notice that Eq.~\eqref{eq:G_ave_R_matrix}, although derived under the NRC, can still be evaluated for an (arbitrary) choice of orthonormal eigenvectors of the Hamiltonian. We will refer to the resulting value as the \textit{NRC estimate} of the time average and we will shortly show that this estimate always constitutes an upper bound of the actual equilibration value (and coincides with it for chaotic Hamiltonians). This is of both conceptual and practical importance, as evaluating the NRC estimate is considerably less intensive than calculating the exact value.

In fact, one can make a broader claim. For that, we first sketch three types of averaging processes over $G$, increasingly shifting away from the strong chaoticity limit. Each of them gives rise to a corresponding estimate for the (exact) equilibration time-average value $\overline{G(t)}$. (i)~$\overline{G}^{\mathrm{Haar}}$: Averaging over (global) Haar random unitary operators $U \in U(d)$ in place of the time evolution. This averaging process is ``beyond chaos'', in the sense that it does not conserve energy, in contrast with time averaging over any Hamiltonian evolutions. Its estimate (only a function of the dimension) is given later in Eq.~\eqref{eq:Haar_average}.
(ii)~$\overline{G(t)}^{\mathrm{NRC}}$: Time-average, assuming the Hamiltonian has nondegenerate energy levels and nondegenerate energy gaps. The corresponding estimate is Eq.~\eqref{eq:G_ave_R_matrix}. (iii)~$\overline{G(t)}^{\mathrm{NRC}^+}$: As before, but assuming the Hamiltonian may have degenerate spectrum, but the energy gaps (between the different levels) are nondegenerate. Its estimate depends only on the eigenprojectors of the Hamiltonian and can be found in Appendix~\eqref{sec:app:proofs}.

The value of the Haar average can be performed exactly, with the 	result
\begin{align} \label{eq:Haar_average}
\overline{G}^{\mathrm{Haar}} = \frac{(d_A^2-1)(d_B^2 - 1)}{d^2 - 1}\,.
\end{align}

The following ordering holds.
\begin{restatable}{theo}{NRCUpperBound} \label{prop:NRC_upper_bound}
For any given Hamiltonian, the corresponding estimates are related with the exact time average $ \overline{G(t)}$ as
\begin{align} \label{eq:comparison_time_avgs_ineq}
\overline{G}^{\mathrm{Haar}} \ge \overline{G(t)}^{\mathrm{NRC}} \ge \overline{G(t)}^{\mathrm{NRC}^+} \ge \overline{G(t)} \;.
\end{align}
\end{restatable}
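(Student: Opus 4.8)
The plan is to reduce all four objects to the squared Hilbert--Schmidt norm of $S_{AA'}$ under a suitable orthogonal projection of the doubled operator space, and then to exploit a nesting of the corresponding ranges. Starting from Theorem~\ref{prop:GAverageUnitaries}, I would write $G(t) = 1 - \frac{1}{d^2} f(t)$ with $f(t) := \langle S_{AA'}, \Phi_t(S_{AA'})\rangle$ and $\Phi_t(X) := U_t^{\otimes 2} X U_t^{\dagger\otimes 2}$, using that $S_{AA'}$ is Hermitian. Each $\Phi_t$ is unitary with respect to $\langle\cdot,\cdot\rangle$, so by the mean ergodic theorem the time average $\mathcal D := \overline{\Phi_t}$ is the orthogonal projection onto the joint fixed-point space $\Ker(\mathrm{ad}_{H\otimes I + I\otimes H})$ on $\mathcal H\otimes\mathcal H'$. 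Idempotence and Hermiticity of $\mathcal D$ then give $\overline{f(t)} = \langle S_{AA'}, \mathcal D(S_{AA'})\rangle = \|\mathcal D(S_{AA'})\|_2^2$, hence $\overline{G(t)} = 1 - \frac{1}{d^2}\|\mathcal D(S_{AA'})\|_2^2$.

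Next I would identify the four projectors explicitly. Expanding in the eigenbasis, a generic operator unit $|\phi_k\rangle\!\langle\phi_m|\otimes|\phi_a\rangle\!\langle\phi_b|$ carries frequency $E_k + E_a - E_m - E_b$, so $\mathcal D$ retains exactly those with $E_k + E_a = E_m + E_b$; call their span $\mathcal S_{\mathrm{exact}}$. The NRC and NRC$^+$ estimates arise by replacing $\mathcal D$ with orthogonal projections $\mathcal P_{\mathrm{NRC}}, \mathcal P_{\mathrm{NRC}^+}$ onto subsets of this orthonormal operator basis: $\mathcal S_{\mathrm{NRC}}$ keeps the units with index coincidence $\{k,a\}=\{m,b\}$ (diagonal $k=m,a=b$ or swapped $m=a,b=k$), while $\mathcal S_{\mathrm{NRC}^+}$ relaxes these to the energy coincidences $\{E_k=E_m,\,E_a=E_b\}$ or $\{E_k=E_b,\,E_a=E_m\}$. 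A short computation gives $\|\mathcal P(S_{AA'})\|_2^2 = \sum |(S_{AA'})_{(ka),(mb)}|^2$ over the retained units, reproducing Eq.~\eqref{eq:G_ave_R_matrix} for the NRC case and its NRC$^+$ refinement. For the Haar case, Schur--Weyl duality identifies the twirl $\mathcal T(X) := \int dU\, U^{\otimes 2} X U^{\dagger\otimes 2}$ with the orthogonal projection onto the commutant $V_{\mathrm{Haar}} := \Span\{I, S_{\mathcal H\mathcal H'}\}$, where $S_{\mathcal H\mathcal H'} = S_{AA'} S_{BB'}$ is the full swap; thus $\overline{G}^{\mathrm{Haar}} = 1 - \frac{1}{d^2}\|\mathcal T(S_{AA'})\|_2^2$.

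The ordering then follows from one monotonicity principle: if $\mathcal P_1,\mathcal P_2$ are orthogonal projections with $\Ran \mathcal P_1 \subseteq \Ran \mathcal P_2$, then $\|\mathcal P_1 X\|_2 \le \|\mathcal P_2 X\|_2$ for every $X$. It therefore suffices to prove the chain $V_{\mathrm{Haar}} \subseteq \mathcal S_{\mathrm{NRC}} \subseteq \mathcal S_{\mathrm{NRC}^+} \subseteq \mathcal S_{\mathrm{exact}}$; larger ranges yield larger projected norms and hence smaller $\overline{G}$, which is exactly Eq.~\eqref{eq:comparison_time_avgs_ineq}. The inclusions $\mathcal S_{\mathrm{NRC}} \subseteq \mathcal S_{\mathrm{NRC}^+} \subseteq \mathcal S_{\mathrm{exact}}$ are immediate, since index coincidence implies energy coincidence, which in turn forces the frequency to vanish.

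The main obstacle is the first inclusion $V_{\mathrm{Haar}} \subseteq \mathcal S_{\mathrm{NRC}}$, where the specific algebraic structure enters. I would verify it by expanding the two generators of $V_{\mathrm{Haar}}$ in the eigenbasis: $I = \sum_{k,a} |\phi_k\rangle\!\langle\phi_k|\otimes|\phi_a\rangle\!\langle\phi_a|$ is manifestly a sum of diagonal units and lies in $\mathcal S_{\mathrm{NRC}}$, while $S_{\mathcal H\mathcal H'} = \sum_{k,a} |\phi_a\rangle\!\langle\phi_k|\otimes|\phi_k\rangle\!\langle\phi_a|$ is a sum of units of the swapped type retained by $\mathcal S_{\mathrm{NRC}}$. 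The delicate bookkeeping is matching the four tensor-leg indices of the full swap to the swapped pattern $\{k,a\}=\{m,b\}$, and then noting the inclusion is basis-independent even under degeneracies; once this is checked, monotonicity closes the argument and the theorem follows.
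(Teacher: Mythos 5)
Your proposal is correct and follows essentially the same route as the paper: both reduce each estimate to the quadratic form $\langle S_{AA'},\mathcal P(S_{AA'})\rangle$ for a Hilbert--Schmidt orthogonal projector, establish the nesting of ranges $\Ran\mathcal P_{\mathrm{Haar}}\subseteq\Ran\mathcal P_{\mathrm{NRC}}\subseteq\Ran\mathcal P_{\mathrm{NRC}^+}\subseteq\Ran\mathcal P_{H^{(2)}}$, and conclude by monotonicity of projections (your $\|\mathcal P X\|_2^2$ formulation is the same statement as the paper's operator ordering $\mathcal P_1\le\mathcal P_2$). The only cosmetic differences are your use of the mean ergodic theorem where the paper computes $\overline{e^{i\omega t}}=\delta_{\omega,0}$ directly, and your indexing by operator units in an eigenbasis where the paper works with eigenprojector blocks.
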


The above constitutes a proof that coincidences in the spectrum of a Hamiltonian up to the ``gaps of gaps'' (i.e., degeneracy over the energy levels and their gaps) always \textit{reduce} the equilibration value of the bipartite OTOC.

\begin{figure}[t]
\includegraphics[width=\columnwidth]{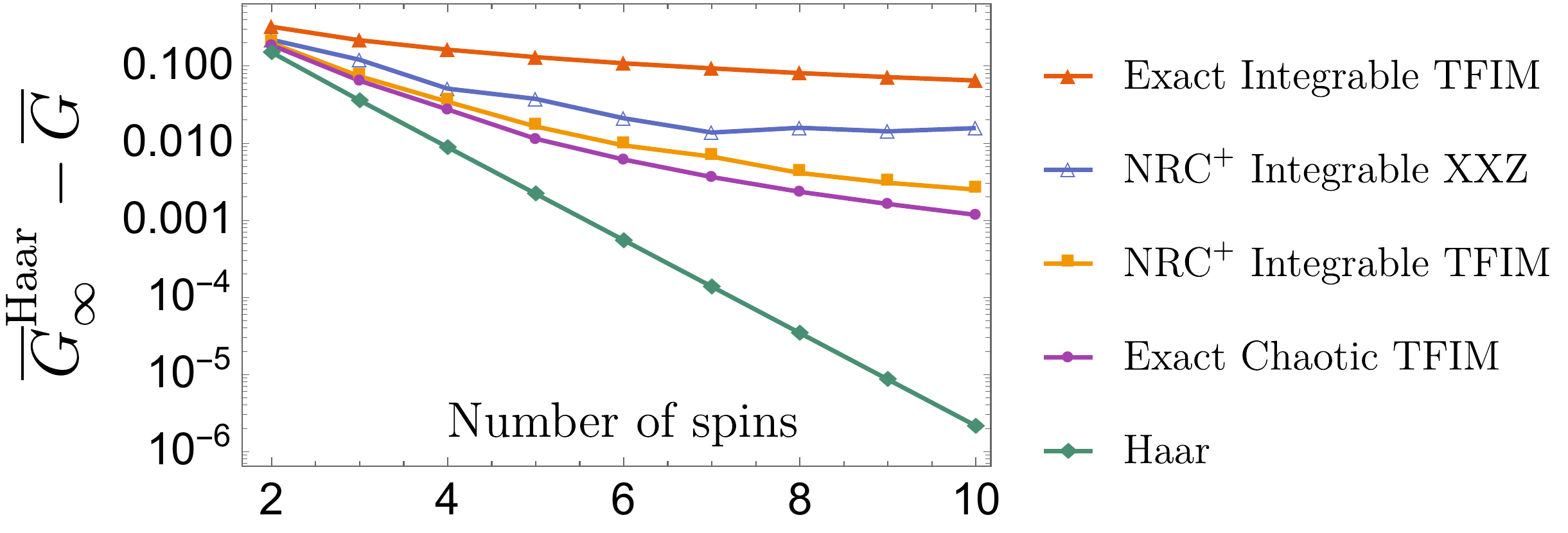}
\caption{\label{fig:numerics}Logarithmic plot of various $\overline{G}$ estimates, along with the exact time average, for fixed $d_A = 2$ as a function of the total number of spins $n$. $\overline{G}^{\mathrm{Haar}}_\infty = 3/4$ corresponds to the Haar estimate for $n \to \infty$. For the chaotic phase of the TFIM ($g=-1.05$, $h = 0.5$), the NRC constitutes a satisfactory, though imperfect, approximation. The chaotic and integrable phases ($h = 0$) can be clearly distinguished through the equilibration behavior of the bipartite OTOC. 
For the integrable XXZ model (we set $J = 0.4$, $\Delta = 2.5$), the NRC\textsuperscript{+} estimate coincides (up to numerical error) with the exact time average. Inequality~\eqref{eq:comparison_time_avgs_ineq} holds valid in all cases.}
\end{figure}


Let us now numerically compare each of the estimates for two models of spin-1/2 chains with open-boundary conditions: (i) transverse-field Ising model (TFIM) with nearest neighbour interaction, $H_{\mathrm{I}} = - \sum_i(\sigma_i^z \sigma_{i+1}^z + g \sigma_i^x + h \sigma_i^z)$ (ii) nearest-neighbor XXZ interaction $H_{\mathrm{XXZ}} = - J \sum_i  ( \sigma_i^x \sigma_{i+1}^x + \sigma_i^y \sigma_{i+1}^y + \Delta \sigma_i^z \sigma_{i+1}^z)$. Recall that $H_{\mathrm{I}}$ for $h=0$ is integrable in terms of free-fermions, while $H_{\mathrm{XXZ}}$ by Bethe Ansatz techniques. The two types of solutions yield qualitatively different spectra; free fermion solutions necessarily violate nondegeneracy of the gaps. This is reflected in the accuracy of the estimates (see~\autoref{fig:numerics}). Although the NRC estimate provides essentially the exact equilibration values for the chaotic phase of the TFIM, it overestimates them in the integrable phase. On the other hand, NRC\textsuperscript{+} is essentially exact for the integrable case of the $H_{\mathrm{XXZ}}$ due to the lack of coincidences in the gaps. The results obtained here corroborate existing studies in the literature, where the (short- and) long-time behavior of the OTOC was studied for various many-body systems; see Refs.~\cite{PhysRevE.100.042201,PhysRevLett.121.210601,PhysRevLett.121.124101}.

\prlsection{Bipartite OTOC and subsystem evolution} We have so far focused on examining the behavior of the bipartite OTOC from the perspective of closed systems, i.e., over the full bipartite Hilbert space $\mathcal H_A \otimes \mathcal H_B$. One can instead express $G(t)$ as a function of the reduced time dynamics over only either $\mathcal H_A$ or $\mathcal H_B$ (and the corresponding duplicate), at the expense of giving up unitarity. This can be easily realized by formally performing a partial trace in Eq.~\eqref{eq:G_main}, which immediately results in the following equivalent expression for the bipartite OTOC.
\begin{restatable}{prop}{ReducedDynamics} \label{prop:reduced_dynamics}
Let $\Uplambda_t ^{\!(A)} (\rho_A) \!\coloneqq\! \tr_B \!\left[ U_t \!\left( \rho_A \otimes \dfrac{I_B}{d_B} \right)  U^\dagger _t \right]$ be the reduced dynamics over A when the environment B is initialized in a maximally mixed state. Then,
\begin{align} \label{eq:G_reduced_dynamics}
G(t) = 1 - \frac{1}{d_A^2}  \tr \left[  S_{AA'} \big( \Uplambda_t^{\!(A) }\big)^{ \otimes 2} (S_{AA'})\right] .
\end{align} 
The analogous expression for $BB'$ also holds.
\end{restatable}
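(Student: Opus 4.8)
The plan is to start from the replica formula of \autoref{prop:GAverageUnitaries} and to perform the partial trace over the replica pair $B,B'$, which play the role of the environment in the reduced channel $\Uplambda_t^{(A)}$. The single structural fact that makes this work is that the swap $S_{AA'}$ acts trivially on $\mathcal H_B \otimes \mathcal H_{B'}$, so that on the full doubled space it equals $S_{AA'} \otimes I_{BB'}$; this is exactly what allows the maximally mixed environments hidden in $\Uplambda_t^{(A)}$ to be reconstructed.

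First I would unfold the right-hand side of Eq.~\eqref{eq:G_reduced_dynamics}. Applying the definition of $\Uplambda_t^{(A)}$ on each replica, for any operator $X$ on $\mathcal H_A \otimes \mathcal H_{A'}$ one has
\begin{align}
\big( \Uplambda_t^{(A)} \big)^{\!\otimes 2}\!(X) = \frac{1}{d_B^2}\, \tr_{BB'} \Big[ U_t^{\otimes 2} \big( X \otimes I_{BB'} \big) U_t^{\dagger\otimes 2} \Big],
\end{align}
where $X \otimes I_{BB'}$ is reordered into the $\mathcal H \otimes \mathcal H'$ tensor structure and the prefactor $1/d_B^2$ collects the two maximally mixed environments $I_B/d_B$ and $I_{B'}/d_{B'}$. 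I would then set $X = S_{AA'}$ and invoke the factorization above: since $S_{AA'} \otimes I_{BB'}$ is precisely the full-space swap appearing in \autoref{prop:GAverageUnitaries}, the inserted operator becomes that swap up to the factor $1/d_B^2$.

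Next I would left-multiply by the outer $S_{AA'}$ and trace over $\mathcal H_A \otimes \mathcal H_{A'}$. Because this outer swap is again trivial on $BB'$, the partial trace $\tr_{BB'}$ and the trace $\tr_{AA'}$ collapse into a single trace over the entire doubled space, giving
\begin{align}
\tr_{AA'}\!\Big[ S_{AA'} \big( \Uplambda_t^{(A)} \big)^{\!\otimes 2}\!(S_{AA'}) \Big] = \frac{1}{d_B^2} \Tr\!\Big[ S_{AA'} U_t^{\otimes 2} S_{AA'} U_t^{\dagger\otimes 2} \Big].
\end{align}
Multiplying both sides by $1/d_A^2$ and using $d_A^2 d_B^2 = d^2$ reproduces exactly the prefactor and operator of \autoref{prop:GAverageUnitaries}, so that $1 - d_A^{-2}\,\tr_{AA'}[\,\cdots]$ coincides with $G(t)$; this is Eq.~\eqref{eq:G_reduced_dynamics}. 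The $BB'$ version follows by exchanging the roles of $A$ and $B$ throughout.

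The proof is essentially direct, so the only point demanding care is the tensor-factor bookkeeping: the operator $X \otimes I_{BB'}$ lives on $\mathcal H_A \otimes \mathcal H_{A'} \otimes \mathcal H_B \otimes \mathcal H_{B'}$, whereas $U_t^{\otimes 2}$ acts on $\mathcal H_A \otimes \mathcal H_B \otimes \mathcal H_{A'} \otimes \mathcal H_{B'}$, and one must check that the implicit reordering is harmless (it is, because the $B,B'$ slots carry only identities) and that the $1/d_B^2$ normalization merges correctly with $1/d_A^2$. I expect no analytic obstacle; the entire content sits in the triviality of $S_{AA'}$ over $BB'$ and in tracking the dimensional factors.
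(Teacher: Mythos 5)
Your proposal is correct and follows essentially the same route as the paper, which simply states that Eq.~\eqref{eq:G_reduced_dynamics} is obtained by formally performing a partial trace over $BB'$ in Eq.~\eqref{eq:G_main}; you have merely made explicit the two ingredients the paper leaves implicit, namely that $S_{AA'}$ acts as the identity on $\mathcal H_B\otimes\mathcal H_{B'}$ and that the two maximally mixed environments supply the factor $1/d_B^2$ converting $1/d^2$ into $1/d_A^2$. The tensor-factor bookkeeping you flag is handled correctly, so there is nothing to add.
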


The quantum map $\Uplambda^{\!(\chi)}_t$ is unital; i.e., the maximally mixed state is a fixed point. As such, the transformation $\rho_\chi \mapsto \Uplambda^{\!(\chi)}_t (\rho_\chi)$ results always in an output state whose spectrum is more disordered than the input one~\cite{bengtsson2017geometry}.
As a result, when $\rho_\chi$ is pure, the effect of the reduced time dynamics is to scramble and, hence, produce entropy. Let us now turn to examine this connection more closely.

\prlsection{Bipartite OTOC as entropy production}
We now show that the bipartite OTOC $G(t)$ is nothing but a measure of the average entropy production over pure states, with the latter quantified by linear entropy $S_{\mathrm{lin}}$.

\begin{restatable}{theo}{EntropyProduction} \label{prop:entropy_production}
\begin{align} \label{eq:entropy_production}
G(t) = \frac{d_\chi + 1}{d_{\chi}} \int dU \, S_{\mathrm{lin}} \left[ \Uplambda^{\!(\chi)}_t  (\ket{\psi_U} \! \bra{\psi_U}) \right]
\end{align}
where $\chi = A,B$ and $\ket{\psi_U} \coloneqq U \ket{\psi_0}$ corresponds to Haar random pure states over $\mathcal H_\chi$.
\end{restatable}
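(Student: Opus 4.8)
The plan is to start from the right-hand side and collapse it onto the closed-system expression for $G(t)$ already available in \autoref{prop:reduced_dynamics}. Writing the linear entropy explicitly, $S_{\mathrm{lin}}[\Uplambda^{(\chi)}_t(\ket{\psi_U}\!\bra{\psi_U})] = 1 - \tr\{[\Uplambda^{(\chi)}_t(\ket{\psi_U}\!\bra{\psi_U})]^2\}$, so the whole task reduces to computing the Haar average of the output purity. First I would linearize the square with the swap trick: for any operator $M$ on $\mathcal H_\chi$ one has $\tr(M^2) = \tr(S_{\chi\chi'}\, M \otimes M)$, where $S_{\chi\chi'}$ swaps $\mathcal H_\chi$ with its replica. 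Applying this with $M = \Uplambda^{(\chi)}_t(\ket{\psi_U}\!\bra{\psi_U})$ and using $M\otimes M = (\Uplambda^{(\chi)}_t)^{\otimes 2}(\ket{\psi_U}\!\bra{\psi_U}^{\otimes 2})$ recasts the purity as $\tr[S_{\chi\chi'}(\Uplambda^{(\chi)}_t)^{\otimes 2}(\ket{\psi_U}\!\bra{\psi_U}^{\otimes 2})]$, which is linear in the two-copy input.

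Next I would perform the average over $U$. Since the expression is linear in $\ket{\psi_U}\!\bra{\psi_U}^{\otimes 2}$, only the second moment of the Haar ensemble on $\mathcal H_\chi$ is needed, and I would invoke the standard identity $\int dU\, \ket{\psi_U}\!\bra{\psi_U}^{\otimes 2} = (I + S_{\chi\chi'})/[d_\chi(d_\chi+1)]$, i.e.\ the normalized projector onto the symmetric subspace. Substituting and using linearity of the channel splits the averaged purity into two pieces, proportional to $\tr[S_{\chi\chi'}(\Uplambda^{(\chi)}_t)^{\otimes 2}(I)]$ and $\tr[S_{\chi\chi'}(\Uplambda^{(\chi)}_t)^{\otimes 2}(S_{\chi\chi'})]$, respectively.

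The two pieces are then evaluated separately. For the first I would use the unitality of $\Uplambda^{(\chi)}_t$ noted just below \autoref{prop:reduced_dynamics}, namely $\Uplambda^{(\chi)}_t(I_\chi)=I_\chi$, so that $(\Uplambda^{(\chi)}_t)^{\otimes 2}(I) = I$ and the term collapses to $\tr(S_{\chi\chi'}) = d_\chi$. The second piece is \emph{exactly} the object appearing in \autoref{prop:reduced_dynamics}, so I can substitute $\tr[S_{\chi\chi'}(\Uplambda^{(\chi)}_t)^{\otimes 2}(S_{\chi\chi'})] = d_\chi^2[1 - G(t)]$ directly. Combining the two yields the averaged purity $[1 + d_\chi(1-G(t))]/(d_\chi+1)$, hence an averaged linear entropy of $d_\chi G(t)/(d_\chi+1)$; multiplying by the prefactor $(d_\chi+1)/d_\chi$ returns $G(t)$, as claimed.

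The main obstacle here is bookkeeping rather than anything conceptual: one must keep the normalization of the symmetric-subspace projector and the prefactor $(d_\chi+1)/d_\chi$ mutually consistent, and verify that the replica swap $S_{\chi\chi'}$ used in the purity trick is the \emph{same} swap that enters \autoref{prop:reduced_dynamics}, so that the second term can be identified verbatim without stray factors of $d_\chi$. Since both the $\chi=A$ and $\chi=B$ forms of \autoref{prop:reduced_dynamics} are available, the identity follows for either choice with no additional work.
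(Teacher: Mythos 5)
Your proposal is correct and is essentially the paper's own argument run in reverse: the paper substitutes $S_{\chi\chi'} = d_\chi(d_\chi+1)\int dU\,(\ket{\psi_U}\!\bra{\psi_U})^{\otimes 2} - I$ into the formula of \autoref{prop:reduced_dynamics} and uses unitality, whereas you expand the averaged purity via the same symmetric-subspace identity and the same swap trick, then identify the $\tr[S_{\chi\chi'}(\Uplambda^{\!(\chi)}_t)^{\otimes 2}(S_{\chi\chi'})]$ term with \autoref{prop:reduced_dynamics}. The bookkeeping you flag works out exactly as you describe, so the two derivations are the same computation.
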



In this manner, the bipartite OTOC can be fully characterized by linear entropy measurements over any of the $A,B$ subsystems. To obtain a satisfactory estimate of the mean in the rhs of Eq.~\eqref{eq:entropy_production}, one does not, in practice, need to sample over the full Haar ensemble. An adequate estimate can be obtained with a rapidly decreasing number of necessary samples, as the dimension $d_{\chi}$ grows. More precisely, let $\tilde P(\epsilon)$ be the probability of the entropy  $S_{\mathrm{lin}} \big[ \Uplambda^{\!(\chi)}_t  \big( \ket{\psi}\! \bra{\psi} \big) \big]$ deviating from $\frac{d_{\chi} }{d_{\chi}+1} G(t)$ more than $\epsilon$ for an instance of a random state. We show in Appendix~\ref{sec:app:proofs} that
\begin{align} \label{eq:concentration_linear_entropy}
\tilde P(\epsilon) \le \exp \left(  - \frac{d_\chi \epsilon^2}{64} \right) .
\end{align}

The linear entropy, although, per se, a nonlinear functional, can be turned into an ordinary expectation value if two (uncorrelated) copies of the quantum state are simultaneously available, $1 - S_{\mathrm{lin}} = \tr \left(S \rho ^{\otimes 2} \right)$ for $S= S_{AA'} S_{BB'}$. This fact can be exploited to simplify its experimental accessibility~\cite{ekert2002direct, bovino2005direct,PhysRevLett.93.110501,daley2012measuring, islam2015measuring}. More recently, protocols based on correlating measurements over random bases have also been developed to measure entropies~\cite{brydges2019probing,elben2019statistical,huang2020predicting,elben2020mixed}, as well as OTOCs~\cite{vermersch2019probing,joshi2020quantum}. As a result, \autoref{prop:entropy_production} and the typicality result Eq.~\eqref{eq:concentration_linear_entropy} suggest that the bipartite OTOC is, in turn, tractable via linear entropy measurements. We provide more details in Appendix~\ref{sec:app:linear_entropy}.

From Eq.~\eqref{eq:entropy_production} one can also infer the upper bound $G(t) \le 1 - 1/d_{\chi}^2 \coloneqq G_{\max}^{(\chi)}$ announced earlier that follows from the range of the linear entropy function. The bound is thus achievable only when $ \Uplambda^{\!(\chi)}_t$ is equal to the completely depolarizing map $\mathcal T^{(\chi)} (\cdot) \coloneqq \Tr (\cdot) \dfrac{I_\chi}{d_\chi}$.

Finally, we remark that linear entropy occurs rather naturally in relation with the bipartite OTOC, as demonstrated by \autoref{prop:entangling_power} (where it lies implicitly in the definition of operator entanglement and entangling power) and \autoref{prop:entropy_production}. This fact has its roots in the definition of the OTOC, which is intimately related to the Frobenious norm. Relevant relations for the linear entropy have been also reported in \cite{fan2017out}. Starting from the inequality $S_{\mathrm{lin}} (\rho) \le S (\rho) $ between the linear and von Neumann entropies ($S(\rho) \coloneqq - \tr [ \rho \log (\rho) ] $), one can also obtain the corresponding estimates for the latter.

\prlsection{Bipartite OTOC and information spreading} The bipartite OTOC measures the average ability of the reduced time evolution to erase information, as captured by the entropy production over a random pure state. This naturally raises the question as to whether $G(t)$ can also be understood as a measure of distance between  $\Uplambda^{\!(\chi)}_t $ and the depolarizing map $\mathcal T^{(\chi)}$, that is, in the space of quantum channels (i.e., Completely Positive and Trace Preserving (CPTP) maps~\cite{nielsen2002quantum}).

A straightforward answer can be obtained by resorting to the duality between quantum states and operations~\cite{nielsen2002quantum}. Let $\rho_{\mathcal E} \coloneqq \mathcal E \otimes \mathcal I (\ket{\phi^+} \! \bra{\phi^+})$ denote the (Choi) state corresponding to the CPTP map $\mathcal E$, where $\ket{\phi^+} \coloneqq d^{-1/2} \sum_{i=1}^d \ket{ii}$ is a maximally entangled state.

\begin{restatable}{prop}{Choi} \label{prop:Choi}
The bipartite OTOC is a measure of the distance between the reduced time evolution and the depolarizing map:
\begin{align} \label{eq:Choi}
G(t) = G_{\max}^{(\chi)} - \big\|  \rho_{\Uplambda^{\!(\chi)}_t} - \rho_{\mathcal T^{(\chi)}}  \big\|_2^2 \,.
\end{align}
\end{restatable}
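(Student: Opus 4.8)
The plan is to reduce the claimed identity to the single statement that $G(t)$ equals the linear entropy of the Choi state $\rho_{\Uplambda_t^{(\chi)}}$, and then to establish that statement via a swap-operator contraction using Proposition~\ref{prop:reduced_dynamics}. Throughout, the relevant maximally entangled state for the reduced map $\Uplambda_t^{(\chi)}$ acting on $\mathcal B(\mathcal H_\chi)$ is $\ket{\phi^+}_\chi = d_\chi^{-1/2} \sum_{i=1}^{d_\chi} \ket{ii}$ on $\mathcal H_\chi \otimes \mathcal H_{\chi'}$.

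First I would evaluate the Choi state of the depolarizing channel. Since $\mathcal T^{(\chi)}(\cdot) = \tr(\cdot)\, I_\chi / d_\chi$ sends every input to the maximally mixed state, its Choi state is $\rho_{\mathcal T^{(\chi)}} = I_{\chi\chi'}/d_\chi^2$, i.e.\ the maximally mixed state on the doubled subsystem. I would then expand the squared Hilbert--Schmidt distance as
\begin{equation}
\big\| \rho_{\Uplambda_t^{(\chi)}} - \rho_{\mathcal T^{(\chi)}} \big\|_2^2 = \tr\big(\rho_{\Uplambda_t^{(\chi)}}^2\big) + \tr\big(\rho_{\mathcal T^{(\chi)}}^2\big) - 2\,\Real\,\braket{\rho_{\Uplambda_t^{(\chi)}}, \rho_{\mathcal T^{(\chi)}}} .
\end{equation}
Because $\rho_{\mathcal T^{(\chi)}} \propto I$ and every Choi state has unit trace (as $\Uplambda_t^{(\chi)}$ is CPTP, hence trace preserving), both $\tr(\rho_{\mathcal T^{(\chi)}}^2)$ and the cross term collapse to $1/d_\chi^2$. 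This gives $\big\| \rho_{\Uplambda_t^{(\chi)}} - \rho_{\mathcal T^{(\chi)}} \big\|_2^2 = \tr(\rho_{\Uplambda_t^{(\chi)}}^2) - 1/d_\chi^2$, so that subtracting from $G_{\max}^{(\chi)} = 1 - 1/d_\chi^2$ leaves exactly $1 - \tr(\rho_{\Uplambda_t^{(\chi)}}^2) = S_{\mathrm{lin}}(\rho_{\Uplambda_t^{(\chi)}})$.

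It then remains to prove $G(t) = 1 - \tr(\rho_{\Uplambda_t^{(\chi)}}^2)$. I would start from Proposition~\ref{prop:reduced_dynamics}, write the swap as $S_{\chi\chi'} = \sum_{ab} \ket{a}\!\bra{b} \otimes \ket{b}\!\bra{a}$, and apply $(\Uplambda_t^{(\chi)})^{\otimes 2}$ factorwise so that $(\Uplambda_t^{(\chi)})^{\otimes 2}(S_{\chi\chi'}) = \sum_{ab} \Uplambda_t^{(\chi)}(\ket{a}\!\bra{b}) \otimes \Uplambda_t^{(\chi)}(\ket{b}\!\bra{a})$. Contracting against $S_{\chi\chi'}$ with the identity $\tr[S_{\chi\chi'}(X \otimes Y)] = \tr(XY)$ produces $\frac{1}{d_\chi^2}\sum_{ab} \tr[\Uplambda_t^{(\chi)}(\ket{a}\!\bra{b})\,\Uplambda_t^{(\chi)}(\ket{b}\!\bra{a})]$, which is precisely $\tr(\rho_{\Uplambda_t^{(\chi)}}^2)$: expanding $\rho_{\Uplambda_t^{(\chi)}} = d_\chi^{-1}\sum_{ij}\Uplambda_t^{(\chi)}(\ket{i}\!\bra{j}) \otimes \ket{i}\!\bra{j}$ and using $\tr(\ket{i}\!\bra{j}\,\ket{k}\!\bra{l}) = \delta_{jk}\delta_{li}$ yields the same double sum. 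Combining the two paragraphs closes the proof.

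The computations are routine once the bookkeeping is fixed; the step deserving care --- where a stray normalization factor would most easily slip in --- is matching the swap contraction of Proposition~\ref{prop:reduced_dynamics} to the Choi-state purity. The subtlety is that two a priori distinct swaps appear: $S_{\chi\chi'}$, which acts within the two tensor legs of a single Choi state, versus the swap one would naively use to write a purity as $\tr(\rho^2) = \tr[S(\rho\otimes\rho)]$ on two independent copies. The point to verify explicitly is that applying $(\Uplambda_t^{(\chi)})^{\otimes 2}$ to $S_{\chi\chi'}$ and contracting again with $S_{\chi\chi'}$ reproduces exactly the double sum defining $\tr(\rho_{\Uplambda_t^{(\chi)}}^2)$, with no extra factor of $d_\chi$. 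Once this index contraction is confirmed, the statement follows directly, and consistency with the range $0 \le S_{\mathrm{lin}} \le 1 - 1/d_\chi^2$ recovers the previously announced bound $G(t) \le G_{\max}^{(\chi)}$, with saturation exactly when $\Uplambda_t^{(\chi)} = \mathcal T^{(\chi)}$.
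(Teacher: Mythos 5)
Your proposal is correct and follows essentially the same route as the paper's proof: both expand the Hilbert--Schmidt distance using $\rho_{\mathcal T^{(\chi)}} = (I_\chi/d_\chi)^{\otimes 2}$ so that only the Choi-state purity survives, and both identify $\tr\bigl[S_{\chi\chi'}\bigl(\Uplambda_t^{(\chi)}\bigr)^{\otimes 2}(S_{\chi\chi'})\bigr] = \sum_{ij}\tr\bigl[\Uplambda_t^{(\chi)}(\ket{i}\!\bra{j})\,\Uplambda_t^{(\chi)}(\ket{j}\!\bra{i})\bigr] = d_\chi^2\,\tr\bigl(\rho_{\Uplambda_t^{(\chi)}}^2\bigr)$ via the same index contraction. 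The normalization check you flag is exactly the step the paper carries out explicitly, and it works out as you state.
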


As an application, the proposition above can be utilized to bound the distance $\big\| \Uplambda^{\!(\chi)}_t - \mathcal T^{(\chi)} \big\|_{\lozenge}\,$ given by the diamond norm~\cite{kitaev1997quantum,kitaev2002classical}; the latter is a well-established measure of distance between quantum channels\footnote{Bounding the difference in terms of the quantum processes also constraints the distinguishability in terms of states: $ {\big\| \mathcal E_1 (\rho)  - \mathcal E_2 (\rho) \big\|_1} \le \big\| \mathcal E_1 - \mathcal E_2  \big\|_\lozenge$for all states and quantum processes.} since it admits an operational interpretation in terms of discrimination on the level of quantum processes~\cite{wilde2013quantum}. The distinguishability of the two operations satisfies $ \big\| \Uplambda^{\!(\chi)}_t - \mathcal T^{(\chi)} \big\|_{\lozenge} \le d_{\chi}^{3/2} \sqrt{ G_{\max}^{(\chi)}  -  G(t)} $ (see Appendix~\ref{sec:app:proofs}); therefore if $G_{\max}^{(\chi)}  -  G(t)$ decays faster than $d_\chi^{-3}$, then asymptotically the two channels are essentially indistinguishable.


\prlsection{Summary} We showed that the bipartite OTOC is amenable to exact analytical treatment and, quite remarkably, is equal to the operator entanglement of the dynamics. This identity allows one to establish a rigorous quantitative connection between the OTOC and the notion of entangling power, a well-established quantifier of few-body chaos. This may provide insights into recent work involving ``dual-unitaries'' and many-body chaos~\cite{akila2016particle,PhysRevLett.123.210601,PhysRevB.101.094304,10.21468/SciPostPhys.8.4.067}; the latter maximize operator entanglement~\cite{10.21468/SciPostPhys.8.4.067,PhysRevLett.125.070501}. We then turned to late-time averages of the bipartite OTOC and provided a hierarchy of estimates for systems that violate the conditions of a ``generic spectrum''. Finally, we unraveled the operational significance of the OTOC by establishing intimate connections with entropy production and information scrambling at the level of quantum channels. Possible future directions include applying further these theoretical tools to concrete many-body systems and uncovering relations with thermalization, localization, and other many-body phenomena.




\prlsection{Acknowledgments} G.S. is thankful to N.A.~Rodr\'{i}guez~Briones for the interesting discussions and to the Louisa house in Kitchener for the hospitality.  Research was funded by the Deutsche Forschungsgemeinschaft (DFG, German Research Foundation) under Germany's Excellence Strategy -- EXC-2111 -- 390814868. P.Z. acknowledges partial support from the National Science Foundation Grant No.~PHY-1819189. Research was sponsored by the Army Research Office and was accomplished under Grant No.~W911NF-20-1-0075. The views and conclusions contained in this document are those of the authors and should not be interpreted as representing the official policies, either expressed or implied, of the Army Research Office or the U.S. Government. The U.S. Government is authorized to reproduce and distribute reprints for Government purposes notwithstanding any copyright notation herein.


\bibliography{refs}

\begin{thebibliography}{100}%
\makeatletter
\providecommand \@ifxundefined [1]{%
 \@ifx{#1\undefined}
}%
\providecommand \@ifnum [1]{%
 \ifnum #1\expandafter \@firstoftwo
 \else \expandafter \@secondoftwo
 \fi
}%
\providecommand \@ifx [1]{%
 \ifx #1\expandafter \@firstoftwo
 \else \expandafter \@secondoftwo
 \fi
}%
\providecommand \natexlab [1]{#1}%
\providecommand \enquote  [1]{``#1''}%
\providecommand \bibnamefont  [1]{#1}%
\providecommand \bibfnamefont [1]{#1}%
\providecommand \citenamefont [1]{#1}%
\providecommand \href@noop [0]{\@secondoftwo}%
\providecommand \href [0]{\begingroup \@sanitize@url \@href}%
\providecommand \@href[1]{\@@startlink{#1}\@@href}%
\providecommand \@@href[1]{\endgroup#1\@@endlink}%
\providecommand \@sanitize@url [0]{\catcode `\\12\catcode `\$12\catcode
  `\&12\catcode `\#12\catcode `\^12\catcode `\_12\catcode `\%12\relax}%
\providecommand \@@startlink[1]{}%
\providecommand \@@endlink[0]{}%
\providecommand \url  [0]{\begingroup\@sanitize@url \@url }%
\providecommand \@url [1]{\endgroup\@href {#1}{\urlprefix }}%
\providecommand \urlprefix  [0]{URL }%
\providecommand \Eprint [0]{\href }%
\providecommand \doibase [0]{https://doi.org/}%
\providecommand \selectlanguage [0]{\@gobble}%
\providecommand \bibinfo  [0]{\@secondoftwo}%
\providecommand \bibfield  [0]{\@secondoftwo}%
\providecommand \translation [1]{[#1]}%
\providecommand \BibitemOpen [0]{}%
\providecommand \bibitemStop [0]{}%
\providecommand \bibitemNoStop [0]{.\EOS\space}%
\providecommand \EOS [0]{\spacefactor3000\relax}%
\providecommand \BibitemShut  [1]{\csname bibitem#1\endcsname}%
\let\auto@bib@innerbib\@empty
\bibitem [{\citenamefont {Page}(1993)}]{page1993average}%
  \BibitemOpen
  \bibfield  {author} {\bibinfo {author} {\bibfnamefont {D.~N.}\ \bibnamefont
  {Page}},\ }\bibfield  {title} {\bibinfo {title} {Average entropy of a
  subsystem},\ }\href {https://doi.org/10.1103/PhysRevLett.71.1291} {\bibfield
  {journal} {\bibinfo  {journal} {Physical Review Letters}\ }\textbf {\bibinfo
  {volume} {71}},\ \bibinfo {pages} {1291} (\bibinfo {year}
  {1993})}\BibitemShut {NoStop}%
\bibitem [{\citenamefont {Hayden}\ and\ \citenamefont
  {Preskill}(2007)}]{hayden2007black}%
  \BibitemOpen
  \bibfield  {author} {\bibinfo {author} {\bibfnamefont {P.}~\bibnamefont
  {Hayden}}\ and\ \bibinfo {author} {\bibfnamefont {J.}~\bibnamefont
  {Preskill}},\ }\bibfield  {title} {\bibinfo {title} {Black holes as mirrors:
  quantum information in random subsystems},\ }\href
  {https://doi.org/10.1088/1126-6708/2007/09/120} {\bibfield  {journal}
  {\bibinfo  {journal} {Journal of High Energy Physics}\ }\textbf {\bibinfo
  {volume} {2007}},\ \bibinfo {pages} {120} (\bibinfo {year}
  {2007})}\BibitemShut {NoStop}%
\bibitem [{\citenamefont {Hosur}\ \emph {et~al.}(2016)\citenamefont {Hosur},
  \citenamefont {Qi}, \citenamefont {Roberts},\ and\ \citenamefont
  {Yoshida}}]{hosur2016chaos}%
  \BibitemOpen
  \bibfield  {author} {\bibinfo {author} {\bibfnamefont {P.}~\bibnamefont
  {Hosur}}, \bibinfo {author} {\bibfnamefont {X.-L.}\ \bibnamefont {Qi}},
  \bibinfo {author} {\bibfnamefont {D.~A.}\ \bibnamefont {Roberts}},\ and\
  \bibinfo {author} {\bibfnamefont {B.}~\bibnamefont {Yoshida}},\ }\bibfield
  {title} {\bibinfo {title} {Chaos in quantum channels},\ }\href
  {https://doi.org/10.1007/JHEP02(2016)004} {\bibfield  {journal} {\bibinfo
  {journal} {Journal of High Energy Physics}\ }\textbf {\bibinfo {volume}
  {2016}},\ \bibinfo {pages} {4} (\bibinfo {year} {2016})}\BibitemShut
  {NoStop}%
\bibitem [{\citenamefont {Von~Keyserlingk}\ \emph {et~al.}(2018)\citenamefont
  {Von~Keyserlingk}, \citenamefont {Rakovszky}, \citenamefont {Pollmann},\ and\
  \citenamefont {Sondhi}}]{von2018operator}%
  \BibitemOpen
  \bibfield  {author} {\bibinfo {author} {\bibfnamefont {C.}~\bibnamefont
  {Von~Keyserlingk}}, \bibinfo {author} {\bibfnamefont {T.}~\bibnamefont
  {Rakovszky}}, \bibinfo {author} {\bibfnamefont {F.}~\bibnamefont
  {Pollmann}},\ and\ \bibinfo {author} {\bibfnamefont {S.~L.}\ \bibnamefont
  {Sondhi}},\ }\bibfield  {title} {\bibinfo {title} {Operator hydrodynamics,
  {OTOC}s, and entanglement growth in systems without conservation laws},\
  }\href {https://doi.org/10.1103/PhysRevX.8.021013} {\bibfield  {journal}
  {\bibinfo  {journal} {Physical Review X}\ }\textbf {\bibinfo {volume} {8}},\
  \bibinfo {pages} {021013} (\bibinfo {year} {2018})}\BibitemShut {NoStop}%
\bibitem [{\citenamefont {Moudgalya}\ \emph {et~al.}(2019)\citenamefont
  {Moudgalya}, \citenamefont {Devakul}, \citenamefont {Von~Keyserlingk},\ and\
  \citenamefont {Sondhi}}]{moudgalya2019operator}%
  \BibitemOpen
  \bibfield  {author} {\bibinfo {author} {\bibfnamefont {S.}~\bibnamefont
  {Moudgalya}}, \bibinfo {author} {\bibfnamefont {T.}~\bibnamefont {Devakul}},
  \bibinfo {author} {\bibfnamefont {C.}~\bibnamefont {Von~Keyserlingk}},\ and\
  \bibinfo {author} {\bibfnamefont {S.}~\bibnamefont {Sondhi}},\ }\bibfield
  {title} {\bibinfo {title} {Operator spreading in quantum maps},\ }\href
  {https://doi.org/10.1103/PhysRevB.99.094312} {\bibfield  {journal} {\bibinfo
  {journal} {Physical Review B}\ }\textbf {\bibinfo {volume} {99}},\ \bibinfo
  {pages} {094312} (\bibinfo {year} {2019})}\BibitemShut {NoStop}%
\bibitem [{\citenamefont {Lieb}\ and\ \citenamefont
  {Robinson}(1972)}]{lieb1971finite}%
  \BibitemOpen
  \bibfield  {author} {\bibinfo {author} {\bibfnamefont {E.~H.}\ \bibnamefont
  {Lieb}}\ and\ \bibinfo {author} {\bibfnamefont {D.~W.}\ \bibnamefont
  {Robinson}},\ }\bibfield  {title} {\bibinfo {title} {The finite group
  velocity of quantum spin systems},\ }in\ \href
  {https://doi.org/10.1007/BF01645779} {\emph {\bibinfo {booktitle}
  {Statistical mechanics}}}\ (\bibinfo  {publisher} {Springer},\ \bibinfo
  {year} {1972})\ pp.\ \bibinfo {pages} {425--431}\BibitemShut {NoStop}%
\bibitem [{\citenamefont {Hastings}(2004)}]{hastings2004lieb}%
  \BibitemOpen
  \bibfield  {author} {\bibinfo {author} {\bibfnamefont {M.~B.}\ \bibnamefont
  {Hastings}},\ }\bibfield  {title} {\bibinfo {title}
  {{L}ieb-{S}chultz-{M}attis in higher dimensions},\ }\href
  {https://doi.org/10.1103/PhysRevB.69.104431} {\bibfield  {journal} {\bibinfo
  {journal} {Physical Review B}\ }\textbf {\bibinfo {volume} {69}},\ \bibinfo
  {pages} {104431} (\bibinfo {year} {2004})}\BibitemShut {NoStop}%
\bibitem [{\citenamefont {Roberts}\ and\ \citenamefont
  {Swingle}(2016)}]{roberts2016lieb}%
  \BibitemOpen
  \bibfield  {author} {\bibinfo {author} {\bibfnamefont {D.~A.}\ \bibnamefont
  {Roberts}}\ and\ \bibinfo {author} {\bibfnamefont {B.}~\bibnamefont
  {Swingle}},\ }\bibfield  {title} {\bibinfo {title} {Lieb-{R}obinson bound and
  the butterfly effect in quantum field theories},\ }\href
  {https://doi.org/10.1103/PhysRevLett.117.091602} {\bibfield  {journal}
  {\bibinfo  {journal} {Physical Review Letters}\ }\textbf {\bibinfo {volume}
  {117}},\ \bibinfo {pages} {091602} (\bibinfo {year} {2016})}\BibitemShut
  {NoStop}%
\bibitem [{\citenamefont {Lashkari}\ \emph {et~al.}(2013)\citenamefont
  {Lashkari}, \citenamefont {Stanford}, \citenamefont {Hastings}, \citenamefont
  {Osborne},\ and\ \citenamefont {Hayden}}]{lashkari2013towards}%
  \BibitemOpen
  \bibfield  {author} {\bibinfo {author} {\bibfnamefont {N.}~\bibnamefont
  {Lashkari}}, \bibinfo {author} {\bibfnamefont {D.}~\bibnamefont {Stanford}},
  \bibinfo {author} {\bibfnamefont {M.}~\bibnamefont {Hastings}}, \bibinfo
  {author} {\bibfnamefont {T.}~\bibnamefont {Osborne}},\ and\ \bibinfo {author}
  {\bibfnamefont {P.}~\bibnamefont {Hayden}},\ }\bibfield  {title} {\bibinfo
  {title} {Towards the fast scrambling conjecture},\ }\href
  {https://doi.org/10.1007/JHEP04(2013)022} {\bibfield  {journal} {\bibinfo
  {journal} {Journal of High Energy Physics}\ }\textbf {\bibinfo {volume}
  {2013}},\ \bibinfo {pages} {22} (\bibinfo {year} {2013})}\BibitemShut
  {NoStop}%
\bibitem [{\citenamefont {Larkin}\ and\ \citenamefont
  {Ovchinnikov}(1969)}]{larkin1969quasiclassical}%
  \BibitemOpen
  \bibfield  {author} {\bibinfo {author} {\bibfnamefont {A.}~\bibnamefont
  {Larkin}}\ and\ \bibinfo {author} {\bibfnamefont {Y.~N.}\ \bibnamefont
  {Ovchinnikov}},\ }\bibfield  {title} {\bibinfo {title} {Quasiclassical method
  in the theory of superconductivity},\ }\href@noop {} {\bibfield  {journal}
  {\bibinfo  {journal} {Sov Phys JETP}\ }\textbf {\bibinfo {volume} {28}},\
  \bibinfo {pages} {1200} (\bibinfo {year} {1969})}\BibitemShut {NoStop}%
\bibitem [{\citenamefont {Kitaev}(2015)}]{kitaev2015simple}%
  \BibitemOpen
  \bibfield  {author} {\bibinfo {author} {\bibfnamefont {A.}~\bibnamefont
  {Kitaev}},\ }\bibfield  {title} {\bibinfo {title} {A simple model of quantum
  holography},\ }in\ \href@noop {} {\emph {\bibinfo {booktitle} {Proceedings of
  the KITP Program: Entanglement in Strongly-Correlated Quantum Matter}}},\
  Vol.~\bibinfo {volume} {7}\ (\bibinfo {year} {2015})\BibitemShut {NoStop}%
\bibitem [{\citenamefont {Fishman}\ \emph {et~al.}(1982)\citenamefont
  {Fishman}, \citenamefont {Grempel},\ and\ \citenamefont
  {Prange}}]{fishman1982chaos}%
  \BibitemOpen
  \bibfield  {author} {\bibinfo {author} {\bibfnamefont {S.}~\bibnamefont
  {Fishman}}, \bibinfo {author} {\bibfnamefont {D.}~\bibnamefont {Grempel}},\
  and\ \bibinfo {author} {\bibfnamefont {R.}~\bibnamefont {Prange}},\
  }\bibfield  {title} {\bibinfo {title} {Chaos, quantum recurrences, and
  anderson localization},\ }\href {https://doi.org/10.1103/PhysRevLett.49.509}
  {\bibfield  {journal} {\bibinfo  {journal} {Physical Review Letters}\
  }\textbf {\bibinfo {volume} {49}},\ \bibinfo {pages} {509} (\bibinfo {year}
  {1982})}\BibitemShut {NoStop}%
\bibitem [{\citenamefont {Adachi}\ \emph {et~al.}(1988)\citenamefont {Adachi},
  \citenamefont {Toda},\ and\ \citenamefont {Ikeda}}]{adachi1988quantum}%
  \BibitemOpen
  \bibfield  {author} {\bibinfo {author} {\bibfnamefont {S.}~\bibnamefont
  {Adachi}}, \bibinfo {author} {\bibfnamefont {M.}~\bibnamefont {Toda}},\ and\
  \bibinfo {author} {\bibfnamefont {K.}~\bibnamefont {Ikeda}},\ }\bibfield
  {title} {\bibinfo {title} {Quantum-classical correspondence in
  many-dimensional quantum chaos},\ }\href
  {https://doi.org/10.1103/PhysRevLett.61.659} {\bibfield  {journal} {\bibinfo
  {journal} {Physical Review Letters}\ }\textbf {\bibinfo {volume} {61}},\
  \bibinfo {pages} {659} (\bibinfo {year} {1988})}\BibitemShut {NoStop}%
\bibitem [{\citenamefont {Gutzwiller}(1990)}]{gutzwiller1990chaos}%
  \BibitemOpen
  \bibfield  {author} {\bibinfo {author} {\bibfnamefont {M.~C.}\ \bibnamefont
  {Gutzwiller}},\ }\href {https://doi.org/10.1007/978-1-4612-0983-6} {\emph
  {\bibinfo {title} {Chaos in classical and quantum mechanics}}}\ (\bibinfo
  {publisher} {Springer},\ \bibinfo {year} {1990})\BibitemShut {NoStop}%
\bibitem [{\citenamefont {Haake}(2013)}]{haake2013quantum}%
  \BibitemOpen
  \bibfield  {author} {\bibinfo {author} {\bibfnamefont {F.}~\bibnamefont
  {Haake}},\ }\href {https://doi.org/10.1007/978-3-642-05428-0} {\emph
  {\bibinfo {title} {Quantum Signatures of Chaos}}}\ (\bibinfo  {publisher}
  {Springer},\ \bibinfo {year} {2013})\BibitemShut {NoStop}%
\bibitem [{\citenamefont {Maldacena}\ \emph {et~al.}(2016)\citenamefont
  {Maldacena}, \citenamefont {Shenker},\ and\ \citenamefont
  {Stanford}}]{maldacena2016bound}%
  \BibitemOpen
  \bibfield  {author} {\bibinfo {author} {\bibfnamefont {J.}~\bibnamefont
  {Maldacena}}, \bibinfo {author} {\bibfnamefont {S.~H.}\ \bibnamefont
  {Shenker}},\ and\ \bibinfo {author} {\bibfnamefont {D.}~\bibnamefont
  {Stanford}},\ }\bibfield  {title} {\bibinfo {title} {A bound on chaos},\
  }\href {https://doi.org/10.1007/JHEP08(2016)106} {\bibfield  {journal}
  {\bibinfo  {journal} {Journal of High Energy Physics}\ }\textbf {\bibinfo
  {volume} {2016}},\ \bibinfo {pages} {106} (\bibinfo {year}
  {2016})}\BibitemShut {NoStop}%
\bibitem [{\citenamefont {Roberts}\ and\ \citenamefont
  {Stanford}(2015)}]{roberts2015diagnosing}%
  \BibitemOpen
  \bibfield  {author} {\bibinfo {author} {\bibfnamefont {D.~A.}\ \bibnamefont
  {Roberts}}\ and\ \bibinfo {author} {\bibfnamefont {D.}~\bibnamefont
  {Stanford}},\ }\bibfield  {title} {\bibinfo {title} {Diagnosing chaos using
  four-point functions in two-dimensional conformal field theory},\ }\href
  {https://doi.org/10.1103/PhysRevLett.115.131603} {\bibfield  {journal}
  {\bibinfo  {journal} {Physical Review Letters}\ }\textbf {\bibinfo {volume}
  {115}},\ \bibinfo {pages} {131603} (\bibinfo {year} {2015})}\BibitemShut
  {NoStop}%
\bibitem [{\citenamefont {Polchinski}\ and\ \citenamefont
  {Rosenhaus}(2016)}]{polchinski2016spectrum}%
  \BibitemOpen
  \bibfield  {author} {\bibinfo {author} {\bibfnamefont {J.}~\bibnamefont
  {Polchinski}}\ and\ \bibinfo {author} {\bibfnamefont {V.}~\bibnamefont
  {Rosenhaus}},\ }\bibfield  {title} {\bibinfo {title} {The spectrum in the
  {S}achdev-{Y}e-{K}itaev model},\ }\href
  {https://doi.org/10.1007/JHEP04(2016)001} {\bibfield  {journal} {\bibinfo
  {journal} {Journal of High Energy Physics}\ }\textbf {\bibinfo {volume}
  {2016}},\ \bibinfo {pages} {1} (\bibinfo {year} {2016})}\BibitemShut
  {NoStop}%
\bibitem [{\citenamefont {Mezei}\ and\ \citenamefont
  {Stanford}(2017)}]{mezei2017entanglement}%
  \BibitemOpen
  \bibfield  {author} {\bibinfo {author} {\bibfnamefont {M.}~\bibnamefont
  {Mezei}}\ and\ \bibinfo {author} {\bibfnamefont {D.}~\bibnamefont
  {Stanford}},\ }\bibfield  {title} {\bibinfo {title} {On entanglement
  spreading in chaotic systems},\ }\href
  {https://doi.org/10.1007/JHEP05(2017)065} {\bibfield  {journal} {\bibinfo
  {journal} {Journal of High Energy Physics}\ }\textbf {\bibinfo {volume}
  {2017}},\ \bibinfo {pages} {65} (\bibinfo {year} {2017})}\BibitemShut
  {NoStop}%
\bibitem [{\citenamefont {Huang}\ \emph {et~al.}(2017)\citenamefont {Huang},
  \citenamefont {Zhang},\ and\ \citenamefont {Chen}}]{huang2017out}%
  \BibitemOpen
  \bibfield  {author} {\bibinfo {author} {\bibfnamefont {Y.}~\bibnamefont
  {Huang}}, \bibinfo {author} {\bibfnamefont {Y.-L.}\ \bibnamefont {Zhang}},\
  and\ \bibinfo {author} {\bibfnamefont {X.}~\bibnamefont {Chen}},\ }\bibfield
  {title} {\bibinfo {title} {Out-of-time-ordered correlators in many-body
  localized systems},\ }\href {https://doi.org/10.1002/andp.201600318}
  {\bibfield  {journal} {\bibinfo  {journal} {Annalen der Physik}\ }\textbf
  {\bibinfo {volume} {529}},\ \bibinfo {pages} {1600318} (\bibinfo {year}
  {2017})}\BibitemShut {NoStop}%
\bibitem [{\citenamefont {Zhang}\ \emph {et~al.}(2019)\citenamefont {Zhang},
  \citenamefont {Huang}, \citenamefont {Chen} \emph
  {et~al.}}]{zhang2019information}%
  \BibitemOpen
  \bibfield  {author} {\bibinfo {author} {\bibfnamefont {Y.-L.}\ \bibnamefont
  {Zhang}}, \bibinfo {author} {\bibfnamefont {Y.}~\bibnamefont {Huang}},
  \bibinfo {author} {\bibfnamefont {X.}~\bibnamefont {Chen}}, \emph {et~al.},\
  }\bibfield  {title} {\bibinfo {title} {Information scrambling in chaotic
  systems with dissipation},\ }\href
  {https://doi.org/10.1103/PhysRevB.99.014303} {\bibfield  {journal} {\bibinfo
  {journal} {Physical Review B}\ }\textbf {\bibinfo {volume} {99}},\ \bibinfo
  {pages} {014303} (\bibinfo {year} {2019})}\BibitemShut {NoStop}%
\bibitem [{\citenamefont {Roberts}\ and\ \citenamefont
  {Yoshida}(2017)}]{roberts2017chaos}%
  \BibitemOpen
  \bibfield  {author} {\bibinfo {author} {\bibfnamefont {D.~A.}\ \bibnamefont
  {Roberts}}\ and\ \bibinfo {author} {\bibfnamefont {B.}~\bibnamefont
  {Yoshida}},\ }\bibfield  {title} {\bibinfo {title} {Chaos and complexity by
  design},\ }\href {https://doi.org/10.1007/JHEP04(2017)121} {\bibfield
  {journal} {\bibinfo  {journal} {Journal of High Energy Physics}\ }\textbf
  {\bibinfo {volume} {2017}},\ \bibinfo {pages} {121} (\bibinfo {year}
  {2017})}\BibitemShut {NoStop}%
\bibitem [{\citenamefont {Prakash}\ and\ \citenamefont
  {Lakshminarayan}(2020)}]{prakash2020scrambling}%
  \BibitemOpen
  \bibfield  {author} {\bibinfo {author} {\bibfnamefont {R.}~\bibnamefont
  {Prakash}}\ and\ \bibinfo {author} {\bibfnamefont {A.}~\bibnamefont
  {Lakshminarayan}},\ }\bibfield  {title} {\bibinfo {title} {Scrambling in
  strongly chaotic weakly coupled bipartite systems: Universality beyond the
  {E}hrenfest timescale},\ }\href {https://doi.org/10.1103/PhysRevB.101.121108}
  {\bibfield  {journal} {\bibinfo  {journal} {Physical Review B}\ }\textbf
  {\bibinfo {volume} {101}},\ \bibinfo {pages} {121108} (\bibinfo {year}
  {2020})}\BibitemShut {NoStop}%
\bibitem [{\citenamefont {Pappalardi}\ \emph {et~al.}(2018)\citenamefont
  {Pappalardi}, \citenamefont {Russomanno}, \citenamefont {\ifmmode
  \check{Z}\else \v{Z}\fi{}unkovi\ifmmode~\check{c}\else \v{c}\fi{}},
  \citenamefont {Iemini}, \citenamefont {Silva},\ and\ \citenamefont
  {Fazio}}]{PhysRevB.98.134303}%
  \BibitemOpen
  \bibfield  {author} {\bibinfo {author} {\bibfnamefont {S.}~\bibnamefont
  {Pappalardi}}, \bibinfo {author} {\bibfnamefont {A.}~\bibnamefont
  {Russomanno}}, \bibinfo {author} {\bibfnamefont {B.}~\bibnamefont {\ifmmode
  \check{Z}\else \v{Z}\fi{}unkovi\ifmmode~\check{c}\else \v{c}\fi{}}}, \bibinfo
  {author} {\bibfnamefont {F.}~\bibnamefont {Iemini}}, \bibinfo {author}
  {\bibfnamefont {A.}~\bibnamefont {Silva}},\ and\ \bibinfo {author}
  {\bibfnamefont {R.}~\bibnamefont {Fazio}},\ }\bibfield  {title} {\bibinfo
  {title} {Scrambling and entanglement spreading in long-range spin chains},\
  }\href {https://doi.org/10.1103/PhysRevB.98.134303} {\bibfield  {journal}
  {\bibinfo  {journal} {Physical Review B}\ }\textbf {\bibinfo {volume} {98}},\
  \bibinfo {pages} {134303} (\bibinfo {year} {2018})}\BibitemShut {NoStop}%
\bibitem [{\citenamefont {Hummel}\ \emph {et~al.}(2019)\citenamefont {Hummel},
  \citenamefont {Geiger}, \citenamefont {Urbina},\ and\ \citenamefont
  {Richter}}]{PhysRevLett.123.160401}%
  \BibitemOpen
  \bibfield  {author} {\bibinfo {author} {\bibfnamefont {Q.}~\bibnamefont
  {Hummel}}, \bibinfo {author} {\bibfnamefont {B.}~\bibnamefont {Geiger}},
  \bibinfo {author} {\bibfnamefont {J.~D.}\ \bibnamefont {Urbina}},\ and\
  \bibinfo {author} {\bibfnamefont {K.}~\bibnamefont {Richter}},\ }\bibfield
  {title} {\bibinfo {title} {Reversible quantum information spreading in
  many-body systems near criticality},\ }\href
  {https://doi.org/10.1103/PhysRevLett.123.160401} {\bibfield  {journal}
  {\bibinfo  {journal} {Physical Review Letters}\ }\textbf {\bibinfo {volume}
  {123}},\ \bibinfo {pages} {160401} (\bibinfo {year} {2019})}\BibitemShut
  {NoStop}%
\bibitem [{\citenamefont {Luitz}\ and\ \citenamefont
  {Lev}(2017)}]{luitz2017information}%
  \BibitemOpen
  \bibfield  {author} {\bibinfo {author} {\bibfnamefont {D.~J.}\ \bibnamefont
  {Luitz}}\ and\ \bibinfo {author} {\bibfnamefont {Y.~B.}\ \bibnamefont
  {Lev}},\ }\bibfield  {title} {\bibinfo {title} {Information propagation in
  isolated quantum systems},\ }\href
  {https://doi.org/10.1103/PhysRevB.96.020406} {\bibfield  {journal} {\bibinfo
  {journal} {Physical Review B}\ }\textbf {\bibinfo {volume} {96}},\ \bibinfo
  {pages} {020406} (\bibinfo {year} {2017})}\BibitemShut {NoStop}%
\bibitem [{\citenamefont {Pilatowsky-Cameo}\ \emph {et~al.}(2020)\citenamefont
  {Pilatowsky-Cameo}, \citenamefont {Ch\'avez-Carlos}, \citenamefont
  {Bastarrachea-Magnani}, \citenamefont {Str\'ansk\'y}, \citenamefont
  {Lerma-Hern\'andez}, \citenamefont {Santos},\ and\ \citenamefont
  {Hirsch}}]{PhysRevE.101.010202}%
  \BibitemOpen
  \bibfield  {author} {\bibinfo {author} {\bibfnamefont {S.}~\bibnamefont
  {Pilatowsky-Cameo}}, \bibinfo {author} {\bibfnamefont {J.}~\bibnamefont
  {Ch\'avez-Carlos}}, \bibinfo {author} {\bibfnamefont {M.~A.}\ \bibnamefont
  {Bastarrachea-Magnani}}, \bibinfo {author} {\bibfnamefont {P.}~\bibnamefont
  {Str\'ansk\'y}}, \bibinfo {author} {\bibfnamefont {S.}~\bibnamefont
  {Lerma-Hern\'andez}}, \bibinfo {author} {\bibfnamefont {L.~F.}\ \bibnamefont
  {Santos}},\ and\ \bibinfo {author} {\bibfnamefont {J.~G.}\ \bibnamefont
  {Hirsch}},\ }\bibfield  {title} {\bibinfo {title} {Positive quantum
  {L}yapunov exponents in experimental systems with a regular classical
  limit},\ }\href {https://doi.org/10.1103/PhysRevE.101.010202} {\bibfield
  {journal} {\bibinfo  {journal} {Physical Review E}\ }\textbf {\bibinfo
  {volume} {101}},\ \bibinfo {pages} {010202} (\bibinfo {year}
  {2020})}\BibitemShut {NoStop}%
\bibitem [{\citenamefont {Xu}\ \emph {et~al.}(2020)\citenamefont {Xu},
  \citenamefont {Scaffidi},\ and\ \citenamefont
  {Cao}}]{PhysRevLett.124.140602}%
  \BibitemOpen
  \bibfield  {author} {\bibinfo {author} {\bibfnamefont {T.}~\bibnamefont
  {Xu}}, \bibinfo {author} {\bibfnamefont {T.}~\bibnamefont {Scaffidi}},\ and\
  \bibinfo {author} {\bibfnamefont {X.}~\bibnamefont {Cao}},\ }\bibfield
  {title} {\bibinfo {title} {Does scrambling equal chaos?},\ }\href
  {https://doi.org/10.1103/PhysRevLett.124.140602} {\bibfield  {journal}
  {\bibinfo  {journal} {Physical Review Letters}\ }\textbf {\bibinfo {volume}
  {124}},\ \bibinfo {pages} {140602} (\bibinfo {year} {2020})}\BibitemShut
  {NoStop}%
\bibitem [{\citenamefont {Hashimoto}\ \emph {et~al.}(2020)\citenamefont
  {Hashimoto}, \citenamefont {Huh}, \citenamefont {Kim},\ and\ \citenamefont
  {Watanabe}}]{hashimoto2020exponential}%
  \BibitemOpen
  \bibfield  {author} {\bibinfo {author} {\bibfnamefont {K.}~\bibnamefont
  {Hashimoto}}, \bibinfo {author} {\bibfnamefont {K.-B.}\ \bibnamefont {Huh}},
  \bibinfo {author} {\bibfnamefont {K.-Y.}\ \bibnamefont {Kim}},\ and\ \bibinfo
  {author} {\bibfnamefont {R.}~\bibnamefont {Watanabe}},\ }\bibfield  {title}
  {\bibinfo {title} {Exponential growth of out-of-time-order correlator without
  chaos: inverted harmonic oscillator},\ }\href
  {https://arxiv.org/abs/2007.04746} {\bibfield  {journal} {\bibinfo  {journal}
  {arXiv:2007.04746}\ } (\bibinfo {year} {2020})}\BibitemShut {NoStop}%
\bibitem [{\citenamefont {Cotler}\ \emph {et~al.}(2017)\citenamefont {Cotler},
  \citenamefont {Hunter-Jones}, \citenamefont {Liu},\ and\ \citenamefont
  {Yoshida}}]{cotler2017chaos}%
  \BibitemOpen
  \bibfield  {author} {\bibinfo {author} {\bibfnamefont {J.}~\bibnamefont
  {Cotler}}, \bibinfo {author} {\bibfnamefont {N.}~\bibnamefont
  {Hunter-Jones}}, \bibinfo {author} {\bibfnamefont {J.}~\bibnamefont {Liu}},\
  and\ \bibinfo {author} {\bibfnamefont {B.}~\bibnamefont {Yoshida}},\
  }\bibfield  {title} {\bibinfo {title} {Chaos, complexity, and random
  matrices},\ }\href {https://doi.org/10.1007/JHEP11(2017)048} {\bibfield
  {journal} {\bibinfo  {journal} {Journal of High Energy Physics}\ }\textbf
  {\bibinfo {volume} {2017}},\ \bibinfo {pages} {48} (\bibinfo {year}
  {2017})}\BibitemShut {NoStop}%
\bibitem [{\citenamefont {Fan}\ \emph {et~al.}(2017)\citenamefont {Fan},
  \citenamefont {Zhang}, \citenamefont {Shen},\ and\ \citenamefont
  {Zhai}}]{fan2017out}%
  \BibitemOpen
  \bibfield  {author} {\bibinfo {author} {\bibfnamefont {R.}~\bibnamefont
  {Fan}}, \bibinfo {author} {\bibfnamefont {P.}~\bibnamefont {Zhang}}, \bibinfo
  {author} {\bibfnamefont {H.}~\bibnamefont {Shen}},\ and\ \bibinfo {author}
  {\bibfnamefont {H.}~\bibnamefont {Zhai}},\ }\bibfield  {title} {\bibinfo
  {title} {Out-of-time-order correlation for many-body localization},\ }\href
  {https://doi.org/10.1016/j.scib.2017.04.011} {\bibfield  {journal} {\bibinfo
  {journal} {Science bulletin}\ }\textbf {\bibinfo {volume} {62}},\ \bibinfo
  {pages} {707} (\bibinfo {year} {2017})}\BibitemShut {NoStop}%
\bibitem [{\citenamefont {de~Mello~Koch}\ \emph {et~al.}(2019)\citenamefont
  {de~Mello~Koch}, \citenamefont {Huang}, \citenamefont {Ma},\ and\
  \citenamefont {Van~Zyl}}]{de2019spectral}%
  \BibitemOpen
  \bibfield  {author} {\bibinfo {author} {\bibfnamefont {R.}~\bibnamefont
  {de~Mello~Koch}}, \bibinfo {author} {\bibfnamefont {J.-H.}\ \bibnamefont
  {Huang}}, \bibinfo {author} {\bibfnamefont {C.-T.}\ \bibnamefont {Ma}},\ and\
  \bibinfo {author} {\bibfnamefont {H.~J.}\ \bibnamefont {Van~Zyl}},\
  }\bibfield  {title} {\bibinfo {title} {Spectral form factor as an {OTOC}
  averaged over the {H}eisenberg group},\ }\href
  {https://doi.org/10.1016/j.physletb.2019.06.025} {\bibfield  {journal}
  {\bibinfo  {journal} {Physics Letters B}\ }\textbf {\bibinfo {volume}
  {795}},\ \bibinfo {pages} {183} (\bibinfo {year} {2019})}\BibitemShut
  {NoStop}%
\bibitem [{\citenamefont {Ma}(2020)}]{ma2020early}%
  \BibitemOpen
  \bibfield  {author} {\bibinfo {author} {\bibfnamefont {C.-T.}\ \bibnamefont
  {Ma}},\ }\bibfield  {title} {\bibinfo {title} {Early-time and late-time
  quantum chaos},\ }\href {https://doi.org/10.1142/S0217751X20500827}
  {\bibfield  {journal} {\bibinfo  {journal} {International Journal of Modern
  Physics A}\ ,\ \bibinfo {pages} {2050082}} (\bibinfo {year}
  {2020})}\BibitemShut {NoStop}%
\bibitem [{\citenamefont {Touil}\ and\ \citenamefont
  {Deffner}(2020)}]{touil2020quantum}%
  \BibitemOpen
  \bibfield  {author} {\bibinfo {author} {\bibfnamefont {A.}~\bibnamefont
  {Touil}}\ and\ \bibinfo {author} {\bibfnamefont {S.}~\bibnamefont
  {Deffner}},\ }\bibfield  {title} {\bibinfo {title} {Quantum scrambling and
  the growth of mutual information},\ }\href
  {https://doi.org/10.1088/2058-9565/ab8ebb} {\bibfield  {journal} {\bibinfo
  {journal} {Quantum Science and Technology}\ }\textbf {\bibinfo {volume}
  {5}},\ \bibinfo {pages} {035005} (\bibinfo {year} {2020})}\BibitemShut
  {NoStop}%
\bibitem [{\citenamefont {Yan}\ \emph {et~al.}(2020)\citenamefont {Yan},
  \citenamefont {Cincio},\ and\ \citenamefont {Zurek}}]{yan2020information}%
  \BibitemOpen
  \bibfield  {author} {\bibinfo {author} {\bibfnamefont {B.}~\bibnamefont
  {Yan}}, \bibinfo {author} {\bibfnamefont {L.}~\bibnamefont {Cincio}},\ and\
  \bibinfo {author} {\bibfnamefont {W.~H.}\ \bibnamefont {Zurek}},\ }\bibfield
  {title} {\bibinfo {title} {Information scrambling and {L}oschmidt echo},\
  }\href {https://doi.org/10.1103/PhysRevLett.124.160603} {\bibfield  {journal}
  {\bibinfo  {journal} {Physical Review Letters}\ }\textbf {\bibinfo {volume}
  {124}},\ \bibinfo {pages} {160603} (\bibinfo {year} {2020})}\BibitemShut
  {NoStop}%
\bibitem [{\citenamefont {Watrous}(2018)}]{watrous2018theory}%
  \BibitemOpen
  \bibfield  {author} {\bibinfo {author} {\bibfnamefont {J.}~\bibnamefont
  {Watrous}},\ }\href {https://doi.org/10.1017/9781316848142} {\emph {\bibinfo
  {title} {The theory of quantum information}}}\ (\bibinfo  {publisher}
  {Cambridge University Press},\ \bibinfo {year} {2018})\BibitemShut {NoStop}%
\bibitem [{\citenamefont {Peres}(1984)}]{peres1984stability}%
  \BibitemOpen
  \bibfield  {author} {\bibinfo {author} {\bibfnamefont {A.}~\bibnamefont
  {Peres}},\ }\bibfield  {title} {\bibinfo {title} {Stability of quantum motion
  in chaotic and regular systems},\ }\href
  {https://doi.org/10.1103/PhysRevA.30.1610} {\bibfield  {journal} {\bibinfo
  {journal} {Physical Review A}\ }\textbf {\bibinfo {volume} {30}},\ \bibinfo
  {pages} {1610} (\bibinfo {year} {1984})}\BibitemShut {NoStop}%
\bibitem [{\citenamefont {Jalabert}\ and\ \citenamefont
  {Pastawski}(2001)}]{jalabert2001environment}%
  \BibitemOpen
  \bibfield  {author} {\bibinfo {author} {\bibfnamefont {R.~A.}\ \bibnamefont
  {Jalabert}}\ and\ \bibinfo {author} {\bibfnamefont {H.~M.}\ \bibnamefont
  {Pastawski}},\ }\bibfield  {title} {\bibinfo {title} {Environment-independent
  decoherence rate in classically chaotic systems},\ }\href
  {https://doi.org/10.1103/PhysRevLett.86.2490} {\bibfield  {journal} {\bibinfo
   {journal} {Physical Review Letters}\ }\textbf {\bibinfo {volume} {86}},\
  \bibinfo {pages} {2490} (\bibinfo {year} {2001})}\BibitemShut {NoStop}%
\bibitem [{\citenamefont {Gorin}\ \emph {et~al.}(2006)\citenamefont {Gorin},
  \citenamefont {Prosen}, \citenamefont {Seligman},\ and\ \citenamefont
  {{\v{Z}}nidari{\v{c}}}}]{gorin2006dynamics}%
  \BibitemOpen
  \bibfield  {author} {\bibinfo {author} {\bibfnamefont {T.}~\bibnamefont
  {Gorin}}, \bibinfo {author} {\bibfnamefont {T.}~\bibnamefont {Prosen}},
  \bibinfo {author} {\bibfnamefont {T.~H.}\ \bibnamefont {Seligman}},\ and\
  \bibinfo {author} {\bibfnamefont {M.}~\bibnamefont {{\v{Z}}nidari{\v{c}}}},\
  }\bibfield  {title} {\bibinfo {title} {Dynamics of {L}oschmidt echoes and
  fidelity decay},\ }\href {https://doi.org/10.1016/j.physrep.2006.09.003}
  {\bibfield  {journal} {\bibinfo  {journal} {Physics Reports}\ }\textbf
  {\bibinfo {volume} {435}},\ \bibinfo {pages} {33} (\bibinfo {year}
  {2006})}\BibitemShut {NoStop}%
\bibitem [{\citenamefont {Goussev}\ \emph {et~al.}(2012)\citenamefont
  {Goussev}, \citenamefont {Jalabert}, \citenamefont {Pastawski},\ and\
  \citenamefont {Wisniacki}}]{goussev2012loschmidt}%
  \BibitemOpen
  \bibfield  {author} {\bibinfo {author} {\bibfnamefont {A.}~\bibnamefont
  {Goussev}}, \bibinfo {author} {\bibfnamefont {R.~A.}\ \bibnamefont
  {Jalabert}}, \bibinfo {author} {\bibfnamefont {H.~M.}\ \bibnamefont
  {Pastawski}},\ and\ \bibinfo {author} {\bibfnamefont {D.~A.}\ \bibnamefont
  {Wisniacki}},\ }\bibfield  {title} {\bibinfo {title} {{L}oschmidt echo},\
  }\href {https://doi.org/10.4249/scholarpedia.11687} {\bibfield  {journal}
  {\bibinfo  {journal} {Scholarpedia}\ }\textbf {\bibinfo {volume} {7}},\
  \bibinfo {pages} {11687} (\bibinfo {year} {2012})}\BibitemShut {NoStop}%
\bibitem [{\citenamefont {Zanardi}(2001)}]{zanardi2001entanglement}%
  \BibitemOpen
  \bibfield  {author} {\bibinfo {author} {\bibfnamefont {P.}~\bibnamefont
  {Zanardi}},\ }\bibfield  {title} {\bibinfo {title} {Entanglement of quantum
  evolutions},\ }\href {https://doi.org/10.1103/PhysRevA.63.040304} {\bibfield
  {journal} {\bibinfo  {journal} {Physical Review A}\ }\textbf {\bibinfo
  {volume} {63}},\ \bibinfo {pages} {040304} (\bibinfo {year}
  {2001})}\BibitemShut {NoStop}%
\bibitem [{\citenamefont {Zanardi}\ \emph {et~al.}(2000)\citenamefont
  {Zanardi}, \citenamefont {Zalka},\ and\ \citenamefont
  {Faoro}}]{zanardi2000entangling}%
  \BibitemOpen
  \bibfield  {author} {\bibinfo {author} {\bibfnamefont {P.}~\bibnamefont
  {Zanardi}}, \bibinfo {author} {\bibfnamefont {C.}~\bibnamefont {Zalka}},\
  and\ \bibinfo {author} {\bibfnamefont {L.}~\bibnamefont {Faoro}},\ }\bibfield
   {title} {\bibinfo {title} {Entangling power of quantum evolutions},\ }\href
  {https://doi.org/10.1103/PhysRevA.62.030301} {\bibfield  {journal} {\bibinfo
  {journal} {Physical Review A}\ }\textbf {\bibinfo {volume} {62}},\ \bibinfo
  {pages} {030301} (\bibinfo {year} {2000})}\BibitemShut {NoStop}%
\bibitem [{\citenamefont {Wang}\ \emph {et~al.}(2004)\citenamefont {Wang},
  \citenamefont {Ghose}, \citenamefont {Sanders},\ and\ \citenamefont
  {Hu}}]{wang2004entanglement}%
  \BibitemOpen
  \bibfield  {author} {\bibinfo {author} {\bibfnamefont {X.}~\bibnamefont
  {Wang}}, \bibinfo {author} {\bibfnamefont {S.}~\bibnamefont {Ghose}},
  \bibinfo {author} {\bibfnamefont {B.~C.}\ \bibnamefont {Sanders}},\ and\
  \bibinfo {author} {\bibfnamefont {B.}~\bibnamefont {Hu}},\ }\bibfield
  {title} {\bibinfo {title} {Entanglement as a signature of quantum chaos},\
  }\href {https://doi.org/10.1103/PhysRevE.70.016217} {\bibfield  {journal}
  {\bibinfo  {journal} {Physical Review E}\ }\textbf {\bibinfo {volume} {70}},\
  \bibinfo {pages} {016217} (\bibinfo {year} {2004})}\BibitemShut {NoStop}%
\bibitem [{\citenamefont
  {Lakshminarayan}(2001)}]{lakshminarayan2001entangling}%
  \BibitemOpen
  \bibfield  {author} {\bibinfo {author} {\bibfnamefont {A.}~\bibnamefont
  {Lakshminarayan}},\ }\bibfield  {title} {\bibinfo {title} {Entangling power
  of quantized chaotic systems},\ }\href
  {https://doi.org/10.1103/PhysRevE.64.036207} {\bibfield  {journal} {\bibinfo
  {journal} {Physical Review E}\ }\textbf {\bibinfo {volume} {64}},\ \bibinfo
  {pages} {036207} (\bibinfo {year} {2001})}\BibitemShut {NoStop}%
\bibitem [{\citenamefont {Scott}\ and\ \citenamefont
  {Caves}(2003)}]{scott2003entangling}%
  \BibitemOpen
  \bibfield  {author} {\bibinfo {author} {\bibfnamefont {A.~J.}\ \bibnamefont
  {Scott}}\ and\ \bibinfo {author} {\bibfnamefont {C.~M.}\ \bibnamefont
  {Caves}},\ }\bibfield  {title} {\bibinfo {title} {Entangling power of the
  quantum baker's map},\ }\href {https://doi.org/10.1088/0305-4470/36/36/308}
  {\bibfield  {journal} {\bibinfo  {journal} {Journal of Physics A:
  Mathematical and General}\ }\textbf {\bibinfo {volume} {36}},\ \bibinfo
  {pages} {9553} (\bibinfo {year} {2003})}\BibitemShut {NoStop}%
\bibitem [{\citenamefont {Pal}\ and\ \citenamefont
  {Lakshminarayan}(2018)}]{pal2018entangling}%
  \BibitemOpen
  \bibfield  {author} {\bibinfo {author} {\bibfnamefont {R.}~\bibnamefont
  {Pal}}\ and\ \bibinfo {author} {\bibfnamefont {A.}~\bibnamefont
  {Lakshminarayan}},\ }\bibfield  {title} {\bibinfo {title} {Entangling power
  of time-evolution operators in integrable and nonintegrable many-body
  systems},\ }\href {https://doi.org/10.1103/PhysRevB.98.174304} {\bibfield
  {journal} {\bibinfo  {journal} {Physical Review B}\ }\textbf {\bibinfo
  {volume} {98}},\ \bibinfo {pages} {174304} (\bibinfo {year}
  {2018})}\BibitemShut {NoStop}%
\bibitem [{\citenamefont {Ledoux}(2001)}]{ledoux2001concentration}%
  \BibitemOpen
  \bibfield  {author} {\bibinfo {author} {\bibfnamefont {M.}~\bibnamefont
  {Ledoux}},\ }\href {https://doi.org/10.1090/surv/089} {\emph {\bibinfo
  {title} {The concentration of measure phenomenon}}},\ \bibinfo {series}
  {Mathematical Surveys and Monographs}, Vol.~\bibinfo {volume} {89}\ (\bibinfo
   {publisher} {American Mathematical Society},\ \bibinfo {year}
  {2001})\BibitemShut {NoStop}%
\bibitem [{\citenamefont {Emerson}\ \emph {et~al.}(2003)\citenamefont
  {Emerson}, \citenamefont {Weinstein}, \citenamefont {Saraceno}, \citenamefont
  {Lloyd},\ and\ \citenamefont {Cory}}]{emerson2003pseudo}%
  \BibitemOpen
  \bibfield  {author} {\bibinfo {author} {\bibfnamefont {J.}~\bibnamefont
  {Emerson}}, \bibinfo {author} {\bibfnamefont {Y.~S.}\ \bibnamefont
  {Weinstein}}, \bibinfo {author} {\bibfnamefont {M.}~\bibnamefont {Saraceno}},
  \bibinfo {author} {\bibfnamefont {S.}~\bibnamefont {Lloyd}},\ and\ \bibinfo
  {author} {\bibfnamefont {D.~G.}\ \bibnamefont {Cory}},\ }\bibfield  {title}
  {\bibinfo {title} {Pseudo-random unitary operators for quantum information
  processing},\ }\href {https://doi.org/10.1126/science.1090790} {\bibfield
  {journal} {\bibinfo  {journal} {Science}\ }\textbf {\bibinfo {volume}
  {302}},\ \bibinfo {pages} {2098} (\bibinfo {year} {2003})}\BibitemShut
  {NoStop}%
\bibitem [{\citenamefont {DiVincenzo}\ \emph {et~al.}(2002)\citenamefont
  {DiVincenzo}, \citenamefont {Leung},\ and\ \citenamefont
  {Terhal}}]{divincenzo2002quantum}%
  \BibitemOpen
  \bibfield  {author} {\bibinfo {author} {\bibfnamefont {D.~P.}\ \bibnamefont
  {DiVincenzo}}, \bibinfo {author} {\bibfnamefont {D.~W.}\ \bibnamefont
  {Leung}},\ and\ \bibinfo {author} {\bibfnamefont {B.~M.}\ \bibnamefont
  {Terhal}},\ }\bibfield  {title} {\bibinfo {title} {Quantum data hiding},\
  }\href {https://doi.org/10.1109/18.985948} {\bibfield  {journal} {\bibinfo
  {journal} {IEEE Transactions on Information Theory}\ }\textbf {\bibinfo
  {volume} {48}},\ \bibinfo {pages} {580} (\bibinfo {year} {2002})}\BibitemShut
  {NoStop}%
\bibitem [{\citenamefont {Renes}\ \emph {et~al.}(2004)\citenamefont {Renes},
  \citenamefont {Blume-Kohout}, \citenamefont {Scott},\ and\ \citenamefont
  {Caves}}]{renes2004symmetric}%
  \BibitemOpen
  \bibfield  {author} {\bibinfo {author} {\bibfnamefont {J.~M.}\ \bibnamefont
  {Renes}}, \bibinfo {author} {\bibfnamefont {R.}~\bibnamefont {Blume-Kohout}},
  \bibinfo {author} {\bibfnamefont {A.~J.}\ \bibnamefont {Scott}},\ and\
  \bibinfo {author} {\bibfnamefont {C.~M.}\ \bibnamefont {Caves}},\ }\bibfield
  {title} {\bibinfo {title} {Symmetric informationally complete quantum
  measurements},\ }\href {https://doi.org/10.1063/1.1737053} {\bibfield
  {journal} {\bibinfo  {journal} {Journal of Mathematical Physics}\ }\textbf
  {\bibinfo {volume} {45}},\ \bibinfo {pages} {2171} (\bibinfo {year}
  {2004})}\BibitemShut {NoStop}%
\bibitem [{\citenamefont {Scott}(2006)}]{scott2006tight}%
  \BibitemOpen
  \bibfield  {author} {\bibinfo {author} {\bibfnamefont {A.~J.}\ \bibnamefont
  {Scott}},\ }\bibfield  {title} {\bibinfo {title} {Tight informationally
  complete quantum measurements},\ }\href
  {https://doi.org/10.1088/0305-4470/39/43/009} {\bibfield  {journal} {\bibinfo
   {journal} {Journal of Physics A: Mathematical and General}\ }\textbf
  {\bibinfo {volume} {39}},\ \bibinfo {pages} {13507} (\bibinfo {year}
  {2006})}\BibitemShut {NoStop}%
\bibitem [{\citenamefont {Gross}\ \emph {et~al.}(2007)\citenamefont {Gross},
  \citenamefont {Audenaert},\ and\ \citenamefont {Eisert}}]{gross2007evenly}%
  \BibitemOpen
  \bibfield  {author} {\bibinfo {author} {\bibfnamefont {D.}~\bibnamefont
  {Gross}}, \bibinfo {author} {\bibfnamefont {K.}~\bibnamefont {Audenaert}},\
  and\ \bibinfo {author} {\bibfnamefont {J.}~\bibnamefont {Eisert}},\
  }\bibfield  {title} {\bibinfo {title} {Evenly distributed unitaries: On the
  structure of unitary designs},\ }\href {https://doi.org/10.1063/1.2716992}
  {\bibfield  {journal} {\bibinfo  {journal} {Journal of Mathematical Physics}\
  }\textbf {\bibinfo {volume} {48}},\ \bibinfo {pages} {052104} (\bibinfo
  {year} {2007})}\BibitemShut {NoStop}%
\bibitem [{\citenamefont {Reimann}(2008)}]{reimann2008foundation}%
  \BibitemOpen
  \bibfield  {author} {\bibinfo {author} {\bibfnamefont {P.}~\bibnamefont
  {Reimann}},\ }\bibfield  {title} {\bibinfo {title} {Foundation of statistical
  mechanics under experimentally realistic conditions},\ }\href
  {https://doi.org/10.1103/PhysRevLett.101.190403} {\bibfield  {journal}
  {\bibinfo  {journal} {Physical Review Letters}\ }\textbf {\bibinfo {volume}
  {101}},\ \bibinfo {pages} {190403} (\bibinfo {year} {2008})}\BibitemShut
  {NoStop}%
\bibitem [{\citenamefont {Linden}\ \emph {et~al.}(2009)\citenamefont {Linden},
  \citenamefont {Popescu}, \citenamefont {Short},\ and\ \citenamefont
  {Winter}}]{linden2009quantum}%
  \BibitemOpen
  \bibfield  {author} {\bibinfo {author} {\bibfnamefont {N.}~\bibnamefont
  {Linden}}, \bibinfo {author} {\bibfnamefont {S.}~\bibnamefont {Popescu}},
  \bibinfo {author} {\bibfnamefont {A.~J.}\ \bibnamefont {Short}},\ and\
  \bibinfo {author} {\bibfnamefont {A.}~\bibnamefont {Winter}},\ }\bibfield
  {title} {\bibinfo {title} {Quantum mechanical evolution towards thermal
  equilibrium},\ }\href {https://doi.org/PhysRevE.79.061103} {\bibfield
  {journal} {\bibinfo  {journal} {Physical Review E}\ }\textbf {\bibinfo
  {volume} {79}},\ \bibinfo {pages} {061103} (\bibinfo {year}
  {2009})}\BibitemShut {NoStop}%
\bibitem [{\citenamefont {Venuti}\ \emph {et~al.}(2011)\citenamefont {Venuti},
  \citenamefont {Jacobson}, \citenamefont {Santra},\ and\ \citenamefont
  {Zanardi}}]{venuti2011exact}%
  \BibitemOpen
  \bibfield  {author} {\bibinfo {author} {\bibfnamefont {L.~C.}\ \bibnamefont
  {Venuti}}, \bibinfo {author} {\bibfnamefont {N.~T.}\ \bibnamefont
  {Jacobson}}, \bibinfo {author} {\bibfnamefont {S.}~\bibnamefont {Santra}},\
  and\ \bibinfo {author} {\bibfnamefont {P.}~\bibnamefont {Zanardi}},\
  }\bibfield  {title} {\bibinfo {title} {Exact infinite-time statistics of the
  {L}oschmidt echo for a quantum quench},\ }\href
  {https://doi.org/PhysRevLett.107.010403} {\bibfield  {journal} {\bibinfo
  {journal} {Physical Review Letters}\ }\textbf {\bibinfo {volume} {107}},\
  \bibinfo {pages} {010403} (\bibinfo {year} {2011})}\BibitemShut {NoStop}%
\bibitem [{\citenamefont {Nahum}\ \emph {et~al.}(2018)\citenamefont {Nahum},
  \citenamefont {Vijay},\ and\ \citenamefont {Haah}}]{nahum2018operator}%
  \BibitemOpen
  \bibfield  {author} {\bibinfo {author} {\bibfnamefont {A.}~\bibnamefont
  {Nahum}}, \bibinfo {author} {\bibfnamefont {S.}~\bibnamefont {Vijay}},\ and\
  \bibinfo {author} {\bibfnamefont {J.}~\bibnamefont {Haah}},\ }\bibfield
  {title} {\bibinfo {title} {Operator spreading in random unitary circuits},\
  }\href {https://doi.org/10.1103/PhysRevX.8.021014} {\bibfield  {journal}
  {\bibinfo  {journal} {Physical Review X}\ }\textbf {\bibinfo {volume} {8}},\
  \bibinfo {pages} {021014} (\bibinfo {year} {2018})}\BibitemShut {NoStop}%
\bibitem [{\citenamefont {Da{\u{g}}}\ \emph {et~al.}(2019)\citenamefont
  {Da{\u{g}}}, \citenamefont {Sun},\ and\ \citenamefont
  {Duan}}]{daug2019detection}%
  \BibitemOpen
  \bibfield  {author} {\bibinfo {author} {\bibfnamefont {C.~B.}\ \bibnamefont
  {Da{\u{g}}}}, \bibinfo {author} {\bibfnamefont {K.}~\bibnamefont {Sun}},\
  and\ \bibinfo {author} {\bibfnamefont {L.-M.}\ \bibnamefont {Duan}},\
  }\bibfield  {title} {\bibinfo {title} {Detection of quantum phases via
  out-of-time-order correlators},\ }\href
  {https://doi.org/10.1103/PhysRevLett.123.140602} {\bibfield  {journal}
  {\bibinfo  {journal} {Physical Review Letters}\ }\textbf {\bibinfo {volume}
  {123}},\ \bibinfo {pages} {140602} (\bibinfo {year} {2019})}\BibitemShut
  {NoStop}%
\bibitem [{\citenamefont {Alhambra}\ \emph {et~al.}(2020)\citenamefont
  {Alhambra}, \citenamefont {Riddell},\ and\ \citenamefont
  {Garc{\'\i}a-Pintos}}]{alhambra2020time}%
  \BibitemOpen
  \bibfield  {author} {\bibinfo {author} {\bibfnamefont {{\'A}.~M.}\
  \bibnamefont {Alhambra}}, \bibinfo {author} {\bibfnamefont {J.}~\bibnamefont
  {Riddell}},\ and\ \bibinfo {author} {\bibfnamefont {L.~P.}\ \bibnamefont
  {Garc{\'\i}a-Pintos}},\ }\bibfield  {title} {\bibinfo {title} {Time evolution
  of correlation functions in quantum many-body systems},\ }\href
  {https://doi.org/PhysRevLett.124.110605} {\bibfield  {journal} {\bibinfo
  {journal} {Physical Review Letters}\ }\textbf {\bibinfo {volume} {124}},\
  \bibinfo {pages} {110605} (\bibinfo {year} {2020})}\BibitemShut {NoStop}%
\bibitem [{\citenamefont {Bhatia}(2013)}]{bhatia2013matrix}%
  \BibitemOpen
  \bibfield  {author} {\bibinfo {author} {\bibfnamefont {R.}~\bibnamefont
  {Bhatia}},\ }\href {https://doi.org/10.1007/978-1-4612-0653-8} {\emph
  {\bibinfo {title} {Matrix analysis}}},\ Vol.\ \bibinfo {volume} {169}\
  (\bibinfo  {publisher} {Springer-Verlag},\ \bibinfo {year}
  {2013})\BibitemShut {NoStop}%
\bibitem [{\citenamefont {Huang}\ \emph {et~al.}(2019)\citenamefont {Huang},
  \citenamefont {Brandao}, \citenamefont {Zhang} \emph
  {et~al.}}]{huang2019finite}%
  \BibitemOpen
  \bibfield  {author} {\bibinfo {author} {\bibfnamefont {Y.}~\bibnamefont
  {Huang}}, \bibinfo {author} {\bibfnamefont {F.~G.}\ \bibnamefont {Brandao}},
  \bibinfo {author} {\bibfnamefont {Y.-L.}\ \bibnamefont {Zhang}}, \emph
  {et~al.},\ }\bibfield  {title} {\bibinfo {title} {Finite-size scaling of
  out-of-time-ordered correlators at late times},\ }\href
  {https://doi.org/10.1103/PhysRevLett.123.010601} {\bibfield  {journal}
  {\bibinfo  {journal} {Physical Review Letters}\ }\textbf {\bibinfo {volume}
  {123}},\ \bibinfo {pages} {010601} (\bibinfo {year} {2019})}\BibitemShut
  {NoStop}%
\bibitem [{\citenamefont {Harrow}\ \emph {et~al.}(2019)\citenamefont {Harrow},
  \citenamefont {Kong}, \citenamefont {Liu}, \citenamefont {Mehraban},\ and\
  \citenamefont {Shor}}]{harrow2019separation}%
  \BibitemOpen
  \bibfield  {author} {\bibinfo {author} {\bibfnamefont {A.~W.}\ \bibnamefont
  {Harrow}}, \bibinfo {author} {\bibfnamefont {L.}~\bibnamefont {Kong}},
  \bibinfo {author} {\bibfnamefont {Z.-W.}\ \bibnamefont {Liu}}, \bibinfo
  {author} {\bibfnamefont {S.}~\bibnamefont {Mehraban}},\ and\ \bibinfo
  {author} {\bibfnamefont {P.~W.}\ \bibnamefont {Shor}},\ }\bibfield  {title}
  {\bibinfo {title} {A separation of out-of-time-ordered correlator and
  entanglement},\ }\href {https://arxiv.org/abs/1906.02219} {\bibfield
  {journal} {\bibinfo  {journal} {arXiv:1906.02219}\ } (\bibinfo {year}
  {2019})}\BibitemShut {NoStop}%
\bibitem [{\citenamefont {Deutsch}(1991)}]{deutsch1991quantum}%
  \BibitemOpen
  \bibfield  {author} {\bibinfo {author} {\bibfnamefont {J.~M.}\ \bibnamefont
  {Deutsch}},\ }\bibfield  {title} {\bibinfo {title} {Quantum statistical
  mechanics in a closed system},\ }\href
  {https://doi.org/10.1103/PhysRevA.43.2046} {\bibfield  {journal} {\bibinfo
  {journal} {Physical Review A}\ }\textbf {\bibinfo {volume} {43}},\ \bibinfo
  {pages} {2046} (\bibinfo {year} {1991})}\BibitemShut {NoStop}%
\bibitem [{\citenamefont {Srednicki}(1994)}]{srednicki1994chaos}%
  \BibitemOpen
  \bibfield  {author} {\bibinfo {author} {\bibfnamefont {M.}~\bibnamefont
  {Srednicki}},\ }\bibfield  {title} {\bibinfo {title} {Chaos and quantum
  thermalization},\ }\href {https://doi.org/10.1103/PhysRevE.50.888} {\bibfield
   {journal} {\bibinfo  {journal} {Physical Review E}\ }\textbf {\bibinfo
  {volume} {50}},\ \bibinfo {pages} {888} (\bibinfo {year} {1994})}\BibitemShut
  {NoStop}%
\bibitem [{\citenamefont {Rigol}\ \emph {et~al.}(2008)\citenamefont {Rigol},
  \citenamefont {Dunjko},\ and\ \citenamefont
  {Olshanii}}]{rigol2008thermalization}%
  \BibitemOpen
  \bibfield  {author} {\bibinfo {author} {\bibfnamefont {M.}~\bibnamefont
  {Rigol}}, \bibinfo {author} {\bibfnamefont {V.}~\bibnamefont {Dunjko}},\ and\
  \bibinfo {author} {\bibfnamefont {M.}~\bibnamefont {Olshanii}},\ }\bibfield
  {title} {\bibinfo {title} {Thermalization and its mechanism for generic
  isolated quantum systems},\ }\href {https://doi.org/10.1038/nature06838}
  {\bibfield  {journal} {\bibinfo  {journal} {Nature}\ }\textbf {\bibinfo
  {volume} {452}},\ \bibinfo {pages} {854} (\bibinfo {year}
  {2008})}\BibitemShut {NoStop}%
\bibitem [{\citenamefont {D'Alessio}\ \emph {et~al.}(2016)\citenamefont
  {D'Alessio}, \citenamefont {Kafri}, \citenamefont {Polkovnikov},\ and\
  \citenamefont {Rigol}}]{d2016quantum}%
  \BibitemOpen
  \bibfield  {author} {\bibinfo {author} {\bibfnamefont {L.}~\bibnamefont
  {D'Alessio}}, \bibinfo {author} {\bibfnamefont {Y.}~\bibnamefont {Kafri}},
  \bibinfo {author} {\bibfnamefont {A.}~\bibnamefont {Polkovnikov}},\ and\
  \bibinfo {author} {\bibfnamefont {M.}~\bibnamefont {Rigol}},\ }\bibfield
  {title} {\bibinfo {title} {From quantum chaos and eigenstate thermalization
  to statistical mechanics and thermodynamics},\ }\href
  {https://doi.org/10.1080/00018732.2016.1198134} {\bibfield  {journal}
  {\bibinfo  {journal} {Advances in Physics}\ }\textbf {\bibinfo {volume}
  {65}},\ \bibinfo {pages} {239} (\bibinfo {year} {2016})}\BibitemShut
  {NoStop}%
\bibitem [{\citenamefont {Huang}(2019)}]{huang2019universal}%
  \BibitemOpen
  \bibfield  {author} {\bibinfo {author} {\bibfnamefont {Y.}~\bibnamefont
  {Huang}},\ }\bibfield  {title} {\bibinfo {title} {Universal eigenstate
  entanglement of chaotic local hamiltonians},\ }\href
  {https://doi.org/10.1016/j.nuclphysb.2018.09.013} {\bibfield  {journal}
  {\bibinfo  {journal} {Nuclear Physics B}\ }\textbf {\bibinfo {volume}
  {938}},\ \bibinfo {pages} {594} (\bibinfo {year} {2019})}\BibitemShut
  {NoStop}%
\bibitem [{\citenamefont {Lu}\ and\ \citenamefont
  {Grover}(2019)}]{lu2019renyi}%
  \BibitemOpen
  \bibfield  {author} {\bibinfo {author} {\bibfnamefont {T.-C.}\ \bibnamefont
  {Lu}}\ and\ \bibinfo {author} {\bibfnamefont {T.}~\bibnamefont {Grover}},\
  }\bibfield  {title} {\bibinfo {title} {Renyi entropy of chaotic
  eigenstates},\ }\href {https://doi.org/10.1103/PhysRevE.99.032111} {\bibfield
   {journal} {\bibinfo  {journal} {Physical Review E}\ }\textbf {\bibinfo
  {volume} {99}},\ \bibinfo {pages} {032111} (\bibinfo {year}
  {2019})}\BibitemShut {NoStop}%
\bibitem [{\citenamefont {Horodecki}\ \emph {et~al.}(2009)\citenamefont
  {Horodecki}, \citenamefont {Horodecki}, \citenamefont {Horodecki},\ and\
  \citenamefont {Horodecki}}]{horodecki2009quantum}%
  \BibitemOpen
  \bibfield  {author} {\bibinfo {author} {\bibfnamefont {R.}~\bibnamefont
  {Horodecki}}, \bibinfo {author} {\bibfnamefont {P.}~\bibnamefont
  {Horodecki}}, \bibinfo {author} {\bibfnamefont {M.}~\bibnamefont
  {Horodecki}},\ and\ \bibinfo {author} {\bibfnamefont {K.}~\bibnamefont
  {Horodecki}},\ }\bibfield  {title} {\bibinfo {title} {Quantum entanglement},\
  }\href {https://doi.org/10.1103/RevModPhys.81.865} {\bibfield  {journal}
  {\bibinfo  {journal} {Review of Modern Physics}\ }\textbf {\bibinfo {volume}
  {81}},\ \bibinfo {pages} {865} (\bibinfo {year} {2009})}\BibitemShut
  {NoStop}%
\bibitem [{\citenamefont {Bose}\ and\ \citenamefont
  {Vedral}(2000)}]{bose2000mixedness}%
  \BibitemOpen
  \bibfield  {author} {\bibinfo {author} {\bibfnamefont {S.}~\bibnamefont
  {Bose}}\ and\ \bibinfo {author} {\bibfnamefont {V.}~\bibnamefont {Vedral}},\
  }\bibfield  {title} {\bibinfo {title} {Mixedness and teleportation},\ }\href
  {https://doi.org/10.1103/PhysRevA.61.040101} {\bibfield  {journal} {\bibinfo
  {journal} {Physical Review A}\ }\textbf {\bibinfo {volume} {61}},\ \bibinfo
  {pages} {040101} (\bibinfo {year} {2000})}\BibitemShut {NoStop}%
\bibitem [{\citenamefont {Lubkin}(1978)}]{lubkin1978entropy}%
  \BibitemOpen
  \bibfield  {author} {\bibinfo {author} {\bibfnamefont {E.}~\bibnamefont
  {Lubkin}},\ }\bibfield  {title} {\bibinfo {title} {Entropy of an $n$-system
  from its correlation with a $k$-reservoir},\ }\href
  {https://doi.org/10.1063/1.523763} {\bibfield  {journal} {\bibinfo  {journal}
  {Journal of Mathematical Physics}\ }\textbf {\bibinfo {volume} {19}},\
  \bibinfo {pages} {1028} (\bibinfo {year} {1978})}\BibitemShut {NoStop}%
\bibitem [{\citenamefont {Hamma}\ \emph {et~al.}(2012)\citenamefont {Hamma},
  \citenamefont {Santra},\ and\ \citenamefont {Zanardi}}]{hamma2012quantum}%
  \BibitemOpen
  \bibfield  {author} {\bibinfo {author} {\bibfnamefont {A.}~\bibnamefont
  {Hamma}}, \bibinfo {author} {\bibfnamefont {S.}~\bibnamefont {Santra}},\ and\
  \bibinfo {author} {\bibfnamefont {P.}~\bibnamefont {Zanardi}},\ }\bibfield
  {title} {\bibinfo {title} {Quantum entanglement in random physical states},\
  }\href {https://doi.org/10.1103/PhysRevLett.109.040502} {\bibfield  {journal}
  {\bibinfo  {journal} {Physical Review Letters}\ }\textbf {\bibinfo {volume}
  {109}},\ \bibinfo {pages} {040502} (\bibinfo {year} {2012})}\BibitemShut
  {NoStop}%
\bibitem [{\citenamefont {Fortes}\ \emph {et~al.}(2019)\citenamefont {Fortes},
  \citenamefont {Garc\'{\i}a-Mata}, \citenamefont {Jalabert},\ and\
  \citenamefont {Wisniacki}}]{PhysRevE.100.042201}%
  \BibitemOpen
  \bibfield  {author} {\bibinfo {author} {\bibfnamefont {E.~M.}\ \bibnamefont
  {Fortes}}, \bibinfo {author} {\bibfnamefont {I.}~\bibnamefont
  {Garc\'{\i}a-Mata}}, \bibinfo {author} {\bibfnamefont {R.~A.}\ \bibnamefont
  {Jalabert}},\ and\ \bibinfo {author} {\bibfnamefont {D.~A.}\ \bibnamefont
  {Wisniacki}},\ }\bibfield  {title} {\bibinfo {title} {Gauging classical and
  quantum integrability through out-of-time-ordered correlators},\ }\href
  {https://doi.org/10.1103/PhysRevE.100.042201} {\bibfield  {journal} {\bibinfo
   {journal} {Physical Review E}\ }\textbf {\bibinfo {volume} {100}},\ \bibinfo
  {pages} {042201} (\bibinfo {year} {2019})}\BibitemShut {NoStop}%
\bibitem [{\citenamefont {Garc\'{\i}a-Mata}\ \emph {et~al.}(2018)\citenamefont
  {Garc\'{\i}a-Mata}, \citenamefont {Saraceno}, \citenamefont {Jalabert},
  \citenamefont {Roncaglia},\ and\ \citenamefont
  {Wisniacki}}]{PhysRevLett.121.210601}%
  \BibitemOpen
  \bibfield  {author} {\bibinfo {author} {\bibfnamefont {I.}~\bibnamefont
  {Garc\'{\i}a-Mata}}, \bibinfo {author} {\bibfnamefont {M.}~\bibnamefont
  {Saraceno}}, \bibinfo {author} {\bibfnamefont {R.~A.}\ \bibnamefont
  {Jalabert}}, \bibinfo {author} {\bibfnamefont {A.~J.}\ \bibnamefont
  {Roncaglia}},\ and\ \bibinfo {author} {\bibfnamefont {D.~A.}\ \bibnamefont
  {Wisniacki}},\ }\bibfield  {title} {\bibinfo {title} {Chaos signatures in the
  short and long time behavior of the out-of-time ordered correlator},\ }\href
  {https://doi.org/10.1103/PhysRevLett.121.210601} {\bibfield  {journal}
  {\bibinfo  {journal} {Physical Review Letters}\ }\textbf {\bibinfo {volume}
  {121}},\ \bibinfo {pages} {210601} (\bibinfo {year} {2018})}\BibitemShut
  {NoStop}%
\bibitem [{\citenamefont {Rammensee}\ \emph {et~al.}(2018)\citenamefont
  {Rammensee}, \citenamefont {Urbina},\ and\ \citenamefont
  {Richter}}]{PhysRevLett.121.124101}%
  \BibitemOpen
  \bibfield  {author} {\bibinfo {author} {\bibfnamefont {J.}~\bibnamefont
  {Rammensee}}, \bibinfo {author} {\bibfnamefont {J.~D.}\ \bibnamefont
  {Urbina}},\ and\ \bibinfo {author} {\bibfnamefont {K.}~\bibnamefont
  {Richter}},\ }\bibfield  {title} {\bibinfo {title} {Many-body quantum
  interference and the saturation of out-of-time-order correlators},\ }\href
  {https://doi.org/10.1103/PhysRevLett.121.124101} {\bibfield  {journal}
  {\bibinfo  {journal} {Physiscal Review Letters}\ }\textbf {\bibinfo {volume}
  {121}},\ \bibinfo {pages} {124101} (\bibinfo {year} {2018})}\BibitemShut
  {NoStop}%
\bibitem [{\citenamefont {Bengtsson}\ and\ \citenamefont
  {{\.Z}yczkowski}(2017)}]{bengtsson2017geometry}%
  \BibitemOpen
  \bibfield  {author} {\bibinfo {author} {\bibfnamefont {I.}~\bibnamefont
  {Bengtsson}}\ and\ \bibinfo {author} {\bibfnamefont {K.}~\bibnamefont
  {{\.Z}yczkowski}},\ }\href {https://doi.org/10.1017/CBO9780511535048} {\emph
  {\bibinfo {title} {Geometry of quantum states: An introduction to quantum
  entanglement}}}\ (\bibinfo  {publisher} {Cambridge University Press},\
  \bibinfo {year} {2017})\BibitemShut {NoStop}%
\bibitem [{\citenamefont {Ekert}\ \emph {et~al.}(2002)\citenamefont {Ekert},
  \citenamefont {Alves}, \citenamefont {Oi}, \citenamefont {Horodecki},
  \citenamefont {Horodecki},\ and\ \citenamefont {Kwek}}]{ekert2002direct}%
  \BibitemOpen
  \bibfield  {author} {\bibinfo {author} {\bibfnamefont {A.~K.}\ \bibnamefont
  {Ekert}}, \bibinfo {author} {\bibfnamefont {C.~M.}\ \bibnamefont {Alves}},
  \bibinfo {author} {\bibfnamefont {D.~K.}\ \bibnamefont {Oi}}, \bibinfo
  {author} {\bibfnamefont {M.}~\bibnamefont {Horodecki}}, \bibinfo {author}
  {\bibfnamefont {P.}~\bibnamefont {Horodecki}},\ and\ \bibinfo {author}
  {\bibfnamefont {L.~C.}\ \bibnamefont {Kwek}},\ }\bibfield  {title} {\bibinfo
  {title} {Direct estimations of linear and nonlinear functionals of a quantum
  state},\ }\href {https://doi.org/10.1103/PhysRevLett.88.217901} {\bibfield
  {journal} {\bibinfo  {journal} {Physical Review Letters}\ }\textbf {\bibinfo
  {volume} {88}},\ \bibinfo {pages} {217901} (\bibinfo {year}
  {2002})}\BibitemShut {NoStop}%
\bibitem [{\citenamefont {Bovino}\ \emph {et~al.}(2005)\citenamefont {Bovino},
  \citenamefont {Castagnoli}, \citenamefont {Ekert}, \citenamefont {Horodecki},
  \citenamefont {Alves},\ and\ \citenamefont {Sergienko}}]{bovino2005direct}%
  \BibitemOpen
  \bibfield  {author} {\bibinfo {author} {\bibfnamefont {F.~A.}\ \bibnamefont
  {Bovino}}, \bibinfo {author} {\bibfnamefont {G.}~\bibnamefont {Castagnoli}},
  \bibinfo {author} {\bibfnamefont {A.}~\bibnamefont {Ekert}}, \bibinfo
  {author} {\bibfnamefont {P.}~\bibnamefont {Horodecki}}, \bibinfo {author}
  {\bibfnamefont {C.~M.}\ \bibnamefont {Alves}},\ and\ \bibinfo {author}
  {\bibfnamefont {A.~V.}\ \bibnamefont {Sergienko}},\ }\bibfield  {title}
  {\bibinfo {title} {Direct measurement of nonlinear properties of bipartite
  quantum states},\ }\href {https://doi.org/10.1103/PhysRevLett.95.240407}
  {\bibfield  {journal} {\bibinfo  {journal} {Physical Review Letters}\
  }\textbf {\bibinfo {volume} {95}},\ \bibinfo {pages} {240407} (\bibinfo
  {year} {2005})}\BibitemShut {NoStop}%
\bibitem [{\citenamefont {Moura~Alves}\ and\ \citenamefont
  {Jaksch}(2004)}]{PhysRevLett.93.110501}%
  \BibitemOpen
  \bibfield  {author} {\bibinfo {author} {\bibfnamefont {C.}~\bibnamefont
  {Moura~Alves}}\ and\ \bibinfo {author} {\bibfnamefont {D.}~\bibnamefont
  {Jaksch}},\ }\bibfield  {title} {\bibinfo {title} {Multipartite entanglement
  detection in bosons},\ }\href {https://doi.org/10.1103/PhysRevLett.93.110501}
  {\bibfield  {journal} {\bibinfo  {journal} {Physical Review Letters}\
  }\textbf {\bibinfo {volume} {93}},\ \bibinfo {pages} {110501} (\bibinfo
  {year} {2004})}\BibitemShut {NoStop}%
\bibitem [{\citenamefont {Daley}\ \emph {et~al.}(2012)\citenamefont {Daley},
  \citenamefont {Pichler}, \citenamefont {Schachenmayer},\ and\ \citenamefont
  {Zoller}}]{daley2012measuring}%
  \BibitemOpen
  \bibfield  {author} {\bibinfo {author} {\bibfnamefont {A.}~\bibnamefont
  {Daley}}, \bibinfo {author} {\bibfnamefont {H.}~\bibnamefont {Pichler}},
  \bibinfo {author} {\bibfnamefont {J.}~\bibnamefont {Schachenmayer}},\ and\
  \bibinfo {author} {\bibfnamefont {P.}~\bibnamefont {Zoller}},\ }\bibfield
  {title} {\bibinfo {title} {Measuring entanglement growth in quench dynamics
  of bosons in an optical lattice},\ }\href
  {https://doi.org/10.1103/PhysRevLett.109.020505} {\bibfield  {journal}
  {\bibinfo  {journal} {Physical Review Letters}\ }\textbf {\bibinfo {volume}
  {109}},\ \bibinfo {pages} {020505} (\bibinfo {year} {2012})}\BibitemShut
  {NoStop}%
\bibitem [{\citenamefont {Islam}\ \emph {et~al.}(2015)\citenamefont {Islam},
  \citenamefont {Ma}, \citenamefont {Preiss}, \citenamefont {Tai},
  \citenamefont {Lukin}, \citenamefont {Rispoli},\ and\ \citenamefont
  {Greiner}}]{islam2015measuring}%
  \BibitemOpen
  \bibfield  {author} {\bibinfo {author} {\bibfnamefont {R.}~\bibnamefont
  {Islam}}, \bibinfo {author} {\bibfnamefont {R.}~\bibnamefont {Ma}}, \bibinfo
  {author} {\bibfnamefont {P.~M.}\ \bibnamefont {Preiss}}, \bibinfo {author}
  {\bibfnamefont {M.~E.}\ \bibnamefont {Tai}}, \bibinfo {author} {\bibfnamefont
  {A.}~\bibnamefont {Lukin}}, \bibinfo {author} {\bibfnamefont
  {M.}~\bibnamefont {Rispoli}},\ and\ \bibinfo {author} {\bibfnamefont
  {M.}~\bibnamefont {Greiner}},\ }\bibfield  {title} {\bibinfo {title}
  {Measuring entanglement entropy in a quantum many-body system},\ }\href
  {https://doi.org/10.1038/nature15750} {\bibfield  {journal} {\bibinfo
  {journal} {Nature}\ }\textbf {\bibinfo {volume} {528}},\ \bibinfo {pages}
  {77} (\bibinfo {year} {2015})}\BibitemShut {NoStop}%
\bibitem [{\citenamefont {Brydges}\ \emph {et~al.}(2019)\citenamefont
  {Brydges}, \citenamefont {Elben}, \citenamefont {Jurcevic}, \citenamefont
  {Vermersch}, \citenamefont {Maier}, \citenamefont {Lanyon}, \citenamefont
  {Zoller}, \citenamefont {Blatt},\ and\ \citenamefont
  {Roos}}]{brydges2019probing}%
  \BibitemOpen
  \bibfield  {author} {\bibinfo {author} {\bibfnamefont {T.}~\bibnamefont
  {Brydges}}, \bibinfo {author} {\bibfnamefont {A.}~\bibnamefont {Elben}},
  \bibinfo {author} {\bibfnamefont {P.}~\bibnamefont {Jurcevic}}, \bibinfo
  {author} {\bibfnamefont {B.}~\bibnamefont {Vermersch}}, \bibinfo {author}
  {\bibfnamefont {C.}~\bibnamefont {Maier}}, \bibinfo {author} {\bibfnamefont
  {B.~P.}\ \bibnamefont {Lanyon}}, \bibinfo {author} {\bibfnamefont
  {P.}~\bibnamefont {Zoller}}, \bibinfo {author} {\bibfnamefont
  {R.}~\bibnamefont {Blatt}},\ and\ \bibinfo {author} {\bibfnamefont {C.~F.}\
  \bibnamefont {Roos}},\ }\bibfield  {title} {\bibinfo {title} {Probing
  {R}{\'e}nyi entanglement entropy via randomized measurements},\ }\href
  {https://doi.org/10.1126/science.aau4963} {\bibfield  {journal} {\bibinfo
  {journal} {Science}\ }\textbf {\bibinfo {volume} {364}},\ \bibinfo {pages}
  {260} (\bibinfo {year} {2019})}\BibitemShut {NoStop}%
\bibitem [{\citenamefont {Elben}\ \emph {et~al.}(2019)\citenamefont {Elben},
  \citenamefont {Vermersch}, \citenamefont {Roos},\ and\ \citenamefont
  {Zoller}}]{elben2019statistical}%
  \BibitemOpen
  \bibfield  {author} {\bibinfo {author} {\bibfnamefont {A.}~\bibnamefont
  {Elben}}, \bibinfo {author} {\bibfnamefont {B.}~\bibnamefont {Vermersch}},
  \bibinfo {author} {\bibfnamefont {C.~F.}\ \bibnamefont {Roos}},\ and\
  \bibinfo {author} {\bibfnamefont {P.}~\bibnamefont {Zoller}},\ }\bibfield
  {title} {\bibinfo {title} {Statistical correlations between locally
  randomized measurements: A toolbox for probing entanglement in many-body
  quantum states},\ }\href {https://doi.org/10.1103/PhysRevA.99.052323}
  {\bibfield  {journal} {\bibinfo  {journal} {Physical Review A}\ }\textbf
  {\bibinfo {volume} {99}},\ \bibinfo {pages} {052323} (\bibinfo {year}
  {2019})}\BibitemShut {NoStop}%
\bibitem [{\citenamefont {Huang}\ \emph {et~al.}(2020)\citenamefont {Huang},
  \citenamefont {Kueng},\ and\ \citenamefont {Preskill}}]{huang2020predicting}%
  \BibitemOpen
  \bibfield  {author} {\bibinfo {author} {\bibfnamefont {H.-Y.}\ \bibnamefont
  {Huang}}, \bibinfo {author} {\bibfnamefont {R.}~\bibnamefont {Kueng}},\ and\
  \bibinfo {author} {\bibfnamefont {J.}~\bibnamefont {Preskill}},\ }\bibfield
  {title} {\bibinfo {title} {Predicting many properties of a quantum system
  from very few measurements},\ }\href
  {https://doi.org/10.1038/s41567-020-0932-7} {\bibfield  {journal} {\bibinfo
  {journal} {Nature Physics}\ }\textbf {\bibinfo {volume} {16}},\ \bibinfo
  {pages} {1050–1057} (\bibinfo {year} {2020})}\BibitemShut {NoStop}%
\bibitem [{\citenamefont {Elben}\ \emph {et~al.}(2020)\citenamefont {Elben},
  \citenamefont {Kueng}, \citenamefont {Huang}, \citenamefont {van Bijnen},
  \citenamefont {Kokail}, \citenamefont {Dalmonte}, \citenamefont {Calabrese},
  \citenamefont {Kraus}, \citenamefont {Preskill}, \citenamefont {Zoller},\
  and\ \citenamefont {Vermersch}}]{elben2020mixed}%
  \BibitemOpen
  \bibfield  {author} {\bibinfo {author} {\bibfnamefont {A.}~\bibnamefont
  {Elben}}, \bibinfo {author} {\bibfnamefont {R.}~\bibnamefont {Kueng}},
  \bibinfo {author} {\bibfnamefont {H.-Y.~R.}\ \bibnamefont {Huang}}, \bibinfo
  {author} {\bibfnamefont {R.}~\bibnamefont {van Bijnen}}, \bibinfo {author}
  {\bibfnamefont {C.}~\bibnamefont {Kokail}}, \bibinfo {author} {\bibfnamefont
  {M.}~\bibnamefont {Dalmonte}}, \bibinfo {author} {\bibfnamefont
  {P.}~\bibnamefont {Calabrese}}, \bibinfo {author} {\bibfnamefont
  {B.}~\bibnamefont {Kraus}}, \bibinfo {author} {\bibfnamefont
  {J.}~\bibnamefont {Preskill}}, \bibinfo {author} {\bibfnamefont
  {P.}~\bibnamefont {Zoller}},\ and\ \bibinfo {author} {\bibfnamefont
  {B.}~\bibnamefont {Vermersch}},\ }\bibfield  {title} {\bibinfo {title}
  {Mixed-state entanglement from local randomized measurements},\ }\href
  {https://doi.org/10.1103/PhysRevLett.125.200501} {\bibfield  {journal}
  {\bibinfo  {journal} {Physical Review Letters}\ }\textbf {\bibinfo {volume}
  {125}},\ \bibinfo {pages} {200501} (\bibinfo {year} {2020})}\BibitemShut
  {NoStop}%
\bibitem [{\citenamefont {Vermersch}\ \emph {et~al.}(2019)\citenamefont
  {Vermersch}, \citenamefont {Elben}, \citenamefont {Sieberer}, \citenamefont
  {Yao},\ and\ \citenamefont {Zoller}}]{vermersch2019probing}%
  \BibitemOpen
  \bibfield  {author} {\bibinfo {author} {\bibfnamefont {B.}~\bibnamefont
  {Vermersch}}, \bibinfo {author} {\bibfnamefont {A.}~\bibnamefont {Elben}},
  \bibinfo {author} {\bibfnamefont {L.~M.}\ \bibnamefont {Sieberer}}, \bibinfo
  {author} {\bibfnamefont {N.~Y.}\ \bibnamefont {Yao}},\ and\ \bibinfo {author}
  {\bibfnamefont {P.}~\bibnamefont {Zoller}},\ }\bibfield  {title} {\bibinfo
  {title} {Probing scrambling using statistical correlations between randomized
  measurements},\ }\href {https://doi.org/10.1103/PhysRevX.9.021061} {\bibfield
   {journal} {\bibinfo  {journal} {Physical Review X}\ }\textbf {\bibinfo
  {volume} {9}},\ \bibinfo {pages} {021061} (\bibinfo {year}
  {2019})}\BibitemShut {NoStop}%
\bibitem [{\citenamefont {Joshi}\ \emph {et~al.}(2020)\citenamefont {Joshi},
  \citenamefont {Elben}, \citenamefont {Vermersch}, \citenamefont {Brydges},
  \citenamefont {Maier}, \citenamefont {Zoller}, \citenamefont {Blatt},\ and\
  \citenamefont {Roos}}]{joshi2020quantum}%
  \BibitemOpen
  \bibfield  {author} {\bibinfo {author} {\bibfnamefont {M.~K.}\ \bibnamefont
  {Joshi}}, \bibinfo {author} {\bibfnamefont {A.}~\bibnamefont {Elben}},
  \bibinfo {author} {\bibfnamefont {B.}~\bibnamefont {Vermersch}}, \bibinfo
  {author} {\bibfnamefont {T.}~\bibnamefont {Brydges}}, \bibinfo {author}
  {\bibfnamefont {C.}~\bibnamefont {Maier}}, \bibinfo {author} {\bibfnamefont
  {P.}~\bibnamefont {Zoller}}, \bibinfo {author} {\bibfnamefont
  {R.}~\bibnamefont {Blatt}},\ and\ \bibinfo {author} {\bibfnamefont {C.~F.}\
  \bibnamefont {Roos}},\ }\bibfield  {title} {\bibinfo {title} {Quantum
  information scrambling in a trapped-ion quantum simulator with tunable range
  interactions},\ }\href {https://doi.org/10.1103/PhysRevLett.124.240505}
  {\bibfield  {journal} {\bibinfo  {journal} {Physical Review Letters}\
  }\textbf {\bibinfo {volume} {124}},\ \bibinfo {pages} {240505} (\bibinfo
  {year} {2020})}\BibitemShut {NoStop}%
\bibitem [{\citenamefont {Nielsen}\ and\ \citenamefont
  {Chuang}(2000)}]{nielsen2002quantum}%
  \BibitemOpen
  \bibfield  {author} {\bibinfo {author} {\bibfnamefont {M.~A.}\ \bibnamefont
  {Nielsen}}\ and\ \bibinfo {author} {\bibfnamefont {I.}~\bibnamefont
  {Chuang}},\ }\href@noop {} {\emph {\bibinfo {title} {Quantum computation and
  quantum information}}}\ (\bibinfo  {publisher} {Cambridge University Press},\
  \bibinfo {year} {2000})\BibitemShut {NoStop}%
\bibitem [{\citenamefont {Kitaev}(1997)}]{kitaev1997quantum}%
  \BibitemOpen
  \bibfield  {author} {\bibinfo {author} {\bibfnamefont {A.~Y.}\ \bibnamefont
  {Kitaev}},\ }\bibfield  {title} {\bibinfo {title} {Quantum computations:
  algorithms and error correction},\ }\href
  {https://doi.org/10.1070/RM1997v052n06ABEH002155} {\bibfield  {journal}
  {\bibinfo  {journal} {Russian Mathematical Surveys}\ }\textbf {\bibinfo
  {volume} {52}},\ \bibinfo {pages} {1191} (\bibinfo {year}
  {1997})}\BibitemShut {NoStop}%
\bibitem [{\citenamefont {Kitaev}\ \emph {et~al.}(2002)\citenamefont {Kitaev},
  \citenamefont {Shen},\ and\ \citenamefont {Vyalyi}}]{kitaev2002classical}%
  \BibitemOpen
  \bibfield  {author} {\bibinfo {author} {\bibfnamefont {A.~Y.}\ \bibnamefont
  {Kitaev}}, \bibinfo {author} {\bibfnamefont {A.}~\bibnamefont {Shen}},\ and\
  \bibinfo {author} {\bibfnamefont {M.~N.}\ \bibnamefont {Vyalyi}},\ }\href
  {https://doi.org/10.1090/gsm/047} {\emph {\bibinfo {title} {Classical and
  quantum computation}}}\ (\bibinfo  {publisher} {American Mathematical
  Society},\ \bibinfo {year} {2002})\BibitemShut {NoStop}%
\bibitem [{\citenamefont {Wilde}(2013)}]{wilde2013quantum}%
  \BibitemOpen
  \bibfield  {author} {\bibinfo {author} {\bibfnamefont {M.~M.}\ \bibnamefont
  {Wilde}},\ }\href {https://doi.org/10.1017/CBO9781139525343} {\emph {\bibinfo
  {title} {Quantum information theory}}}\ (\bibinfo  {publisher} {Cambridge
  University Press},\ \bibinfo {year} {2013})\BibitemShut {NoStop}%
\bibitem [{\citenamefont {Akila}\ \emph {et~al.}(2016)\citenamefont {Akila},
  \citenamefont {Waltner}, \citenamefont {Gutkin},\ and\ \citenamefont
  {Guhr}}]{akila2016particle}%
  \BibitemOpen
  \bibfield  {author} {\bibinfo {author} {\bibfnamefont {M.}~\bibnamefont
  {Akila}}, \bibinfo {author} {\bibfnamefont {D.}~\bibnamefont {Waltner}},
  \bibinfo {author} {\bibfnamefont {B.}~\bibnamefont {Gutkin}},\ and\ \bibinfo
  {author} {\bibfnamefont {T.}~\bibnamefont {Guhr}},\ }\bibfield  {title}
  {\bibinfo {title} {Particle-time duality in the kicked {I}sing spin chain},\
  }\href {https://doi.org/10.1088/1751-8113/49/37/375101} {\bibfield  {journal}
  {\bibinfo  {journal} {Journal of Physics A: Mathematical and Theoretical}\
  }\textbf {\bibinfo {volume} {49}},\ \bibinfo {pages} {375101} (\bibinfo
  {year} {2016})}\BibitemShut {NoStop}%
\bibitem [{\citenamefont {Bertini}\ \emph {et~al.}(2019)\citenamefont
  {Bertini}, \citenamefont {Kos},\ and\ \citenamefont
  {Prosen}}]{PhysRevLett.123.210601}%
  \BibitemOpen
  \bibfield  {author} {\bibinfo {author} {\bibfnamefont {B.}~\bibnamefont
  {Bertini}}, \bibinfo {author} {\bibfnamefont {P.}~\bibnamefont {Kos}},\ and\
  \bibinfo {author} {\bibfnamefont {T.}~\bibnamefont {Prosen}},\ }\bibfield
  {title} {\bibinfo {title} {Exact correlation functions for dual-unitary
  lattice models in $1+1$ dimensions},\ }\href
  {https://doi.org/10.1103/PhysRevLett.123.210601} {\bibfield  {journal}
  {\bibinfo  {journal} {Physiscal Review Letters}\ }\textbf {\bibinfo {volume}
  {123}},\ \bibinfo {pages} {210601} (\bibinfo {year} {2019})}\BibitemShut
  {NoStop}%
\bibitem [{\citenamefont {Piroli}\ \emph {et~al.}(2020)\citenamefont {Piroli},
  \citenamefont {Bertini}, \citenamefont {Cirac},\ and\ \citenamefont
  {Prosen}}]{PhysRevB.101.094304}%
  \BibitemOpen
  \bibfield  {author} {\bibinfo {author} {\bibfnamefont {L.}~\bibnamefont
  {Piroli}}, \bibinfo {author} {\bibfnamefont {B.}~\bibnamefont {Bertini}},
  \bibinfo {author} {\bibfnamefont {J.~I.}\ \bibnamefont {Cirac}},\ and\
  \bibinfo {author} {\bibfnamefont {T.}~\bibnamefont {Prosen}},\ }\bibfield
  {title} {\bibinfo {title} {Exact dynamics in dual-unitary quantum circuits},\
  }\href {https://doi.org/10.1103/PhysRevB.101.094304} {\bibfield  {journal}
  {\bibinfo  {journal} {Physiscal Review B}\ }\textbf {\bibinfo {volume}
  {101}},\ \bibinfo {pages} {094304} (\bibinfo {year} {2020})}\BibitemShut
  {NoStop}%
\bibitem [{\citenamefont {Bertini}\ \emph {et~al.}(2020)\citenamefont
  {Bertini}, \citenamefont {Kos},\ and\ \citenamefont
  {Prosen}}]{10.21468/SciPostPhys.8.4.067}%
  \BibitemOpen
  \bibfield  {author} {\bibinfo {author} {\bibfnamefont {B.}~\bibnamefont
  {Bertini}}, \bibinfo {author} {\bibfnamefont {P.}~\bibnamefont {Kos}},\ and\
  \bibinfo {author} {\bibfnamefont {T.}~\bibnamefont {Prosen}},\ }\bibfield
  {title} {\bibinfo {title} {{Operator Entanglement in Local Quantum Circuits
  {I}: {C}haotic Dual-Unitary Circuits}},\ }\href
  {https://doi.org/10.21468/SciPostPhys.8.4.067} {\bibfield  {journal}
  {\bibinfo  {journal} {SciPost Phys.}\ }\textbf {\bibinfo {volume} {8}},\
  \bibinfo {pages} {67} (\bibinfo {year} {2020})}\BibitemShut {NoStop}%
\bibitem [{\citenamefont {Rather}\ \emph {et~al.}(2020)\citenamefont {Rather},
  \citenamefont {Aravinda},\ and\ \citenamefont
  {Lakshminarayan}}]{PhysRevLett.125.070501}%
  \BibitemOpen
  \bibfield  {author} {\bibinfo {author} {\bibfnamefont {S.~A.}\ \bibnamefont
  {Rather}}, \bibinfo {author} {\bibfnamefont {S.}~\bibnamefont {Aravinda}},\
  and\ \bibinfo {author} {\bibfnamefont {A.}~\bibnamefont {Lakshminarayan}},\
  }\bibfield  {title} {\bibinfo {title} {Creating ensembles of dual unitary and
  maximally entangling quantum evolutions},\ }\href
  {https://doi.org/10.1103/PhysRevLett.125.070501} {\bibfield  {journal}
  {\bibinfo  {journal} {Physical Review Letters}\ }\textbf {\bibinfo {volume}
  {125}},\ \bibinfo {pages} {070501} (\bibinfo {year} {2020})}\BibitemShut
  {NoStop}%
\bibitem [{\citenamefont {Anderson}\ \emph {et~al.}(2010)\citenamefont
  {Anderson}, \citenamefont {Guionnet},\ and\ \citenamefont
  {Zeitouni}}]{anderson2010introduction}%
  \BibitemOpen
  \bibfield  {author} {\bibinfo {author} {\bibfnamefont {G.~W.}\ \bibnamefont
  {Anderson}}, \bibinfo {author} {\bibfnamefont {A.}~\bibnamefont {Guionnet}},\
  and\ \bibinfo {author} {\bibfnamefont {O.}~\bibnamefont {Zeitouni}},\ }\href
  {https://doi.org/10.1017/CBO9780511801334} {\emph {\bibinfo {title} {An
  introduction to random matrices}}}\ (\bibinfo  {publisher} {Cambridge
  university press},\ \bibinfo {year} {2010})\BibitemShut {NoStop}%
\bibitem [{\citenamefont {Eisert}\ \emph {et~al.}(2010)\citenamefont {Eisert},
  \citenamefont {Cramer},\ and\ \citenamefont {Plenio}}]{eisert2010colloquium}%
  \BibitemOpen
  \bibfield  {author} {\bibinfo {author} {\bibfnamefont {J.}~\bibnamefont
  {Eisert}}, \bibinfo {author} {\bibfnamefont {M.}~\bibnamefont {Cramer}},\
  and\ \bibinfo {author} {\bibfnamefont {M.~B.}\ \bibnamefont {Plenio}},\
  }\bibfield  {title} {\bibinfo {title} {Colloquium: {A}rea laws for the
  entanglement entropy},\ }\href {https://doi.org/10.1103/RevModPhys.82.277}
  {\bibfield  {journal} {\bibinfo  {journal} {Reviews of Modern Physics}\
  }\textbf {\bibinfo {volume} {82}},\ \bibinfo {pages} {277} (\bibinfo {year}
  {2010})}\BibitemShut {NoStop}%
\bibitem [{\citenamefont {Goodman}\ and\ \citenamefont
  {Wallach}(2009)}]{goodman2009symmetry}%
  \BibitemOpen
  \bibfield  {author} {\bibinfo {author} {\bibfnamefont {R.}~\bibnamefont
  {Goodman}}\ and\ \bibinfo {author} {\bibfnamefont {N.~R.}\ \bibnamefont
  {Wallach}},\ }\href {https://doi.org/10.1007/978-0-387-79852-3} {\emph
  {\bibinfo {title} {Symmetry, representations, and invariants}}}\ (\bibinfo
  {publisher} {Springer},\ \bibinfo {year} {2009})\BibitemShut {NoStop}%
\bibitem [{\citenamefont {Watrous}(2011)}]{watrous11stackexchange}%
  \BibitemOpen
  \bibfield  {author} {\bibinfo {author} {\bibfnamefont {J.}~\bibnamefont
  {Watrous}},\ }\href {https://cstheory.stackexchange.com/q/4920} {\bibinfo
  {title} {Is there any connection between the diamond norm and the distance of
  the associated states?}},\ \bibinfo {howpublished} {Theoretical Computer
  Science Stack Exchange} (\bibinfo {year} {2011})\BibitemShut {NoStop}%
\bibitem [{\citenamefont {Webb}(2016)}]{webb2015clifford}%
  \BibitemOpen
  \bibfield  {author} {\bibinfo {author} {\bibfnamefont {Z.}~\bibnamefont
  {Webb}},\ }\bibfield  {title} {\bibinfo {title} {The {C}lifford group forms a
  unitary 3-design},\ }\href {https://arxiv.org/abs/1510.02769} {\bibfield
  {journal} {\bibinfo  {journal} {Quantum Inf. Comput.}\ }\textbf {\bibinfo
  {volume} {16}},\ \bibinfo {pages} {1379} (\bibinfo {year}
  {2016})}\BibitemShut {NoStop}%
\end{thebibliography}%

\onecolumngrid
\widetext
\newpage

\appendix

\section*{\large Appendices}

\section{Proofs} \label{sec:app:proofs}

Here we restate the Theorems and Propositions, as well as other mathematical claims appearing in the main text, and give their proof.

\subsection*{Theorem~\ref{prop:GAverageUnitaries}}

\GAverageUnitaries*
\begin{proof}
Let $S$ be the operator  over $\mathcal H \otimes \mathcal H'$ that swaps $\mathcal H$ with its replica $\mathcal H'$. Then for any operators $X,Y$ acting over $\mathcal H$ it holds that
\begin{align} \label{eq:app:swap_trick}
\Tr \left( X Y \right) = \Tr \left[ S (X \otimes Y) \right] ,
\end{align}
as it can be easily verified by expressing both sides in a basis.
Notice that in our case, where $\mathcal H$ carries a bipartition, one can further decompose $S = S_{A A'} S_{B B'}$.

Using the above identity the OTOC averaging in Eq.~\eqref{eq:defnition_G} can be written as
\begin{align*}
G(t) &= 1 - \frac{1}{d} \Real \int dV dW \Tr \left(S \, V_A^\dagger(t) W_B^\dagger \otimes  V_A(t) W_B  \right) \\
 & =  1 - \frac{1}{d} \Real \int dV dW \Tr \left(S  U_t^{\dagger \otimes 2} (V_A^\dagger \otimes  V_A)  U_t^{\otimes 2}  (W_B^\dagger \otimes  W_B)  \right) \\
 & = 1 - \frac{1}{d} \Real \Tr \left[ S U_t^{\dagger \otimes 2} \left( \int dV V_A^\dagger \otimes  V_A   \right) U_t^{\otimes 2} \left( \int dW W_B^\dagger \otimes  W_B  \right) \right] .
\end{align*}
Now the two independent averages can be easily performed since for unitary operators over $\mathcal H \cong \mathbb C^d$ the corresponding Haar integrals evaluate to
\begin{align} \label{eq:app:u_avg_vectorized}
\int dU U \otimes U^\dagger = \frac{S}{d}
\end{align}
where $S$ is again the swap operator over the doubled space.

A quick way to prove the well-known identity~\eqref{eq:app:u_avg_vectorized} is by using Eq.~\eqref{eq:app:swap_trick} to write
\begin{align*}
U X U^\dagger = \tr_{\mathcal H'} \left[ (U \otimes U^\dagger)( X \otimes I) S \right]
\end{align*}
and then using the fact that
\begin{align}
\int dU U X U^\dagger = \frac{\Tr(X)}{d}
\end{align}
which follows directly from the left/right invariance of the Haar measure~\cite{watrous2018theory}.

Using Eq.~\eqref{eq:app:u_avg_vectorized} twice, we get
\begin{align*}
G(t) &= 1 - \frac{1}{d} \Real \Tr \left( S U_t^{\dagger \otimes 2} \frac{S_{AA'}}{d_A} U_t^{\otimes 2} \frac{S_{BB'}}{d_B} \right) \\
 & = 1 - \frac{1}{d^2} \Tr \left( S_{AA'}  U_t^{\otimes 2}  S_{AA'} U_t^{\dagger \otimes 2} \right) . 
\end{align*}
Since $\left[ S , X^{\otimes 2} \right] = 0$ for all operators $X$, the analogous expression for $BB'$ holds, i.e.,
\begin{align}
G(t) = 1 - \frac{1}{d^2} \Tr \left( S_{BB'}  U_t^{\otimes 2}  S_{BB'} U_t^{\dagger \otimes 2} \right).
\end{align}
\end{proof}

Notice that the symmetry of the Haar measure forces the bipartite OTOC to be time reversal invariant, i.e., $G(t) = G(-t)$.

Finally, we also note that that there is a straightforward generalization of \autoref{prop:GAverageUnitaries} to any finite temperature thermal state. Following similar steps as above, one gets for for the thermal version of the bipartite OTOC
\begin{align}
G(t) = 1 - \frac{1}{d} \Real \Tr \left( (\rho_\beta \otimes I_{A'B'}) U_t^{\dagger\, \otimes 2} S_{AA'}  U_t^{ \otimes 2}  S_{AA'}\right).
\end{align}

\subsection*{Theorem~\ref{prop:entangling_power}}

\EntanglingPower*

Before giving the proof, let us first recall the definitions of operator entanglement~\cite{zanardi2001entanglement} and entangling power~\cite{zanardi2000entangling}.

The main idea behind operator entanglement is to first express the unitary evolution $U$ (over the bipartite Hilbert space $\mathcal H_{AB}$) as a state in the doubled space $\mathcal H_{AB}\otimes \mathcal H_{A'B'}$ via
\begin{align}
    \ket{U} =  U \otimes I_{A'B'} \ket{\phi^+}
\end{align}
for the maximally entangled state $\ket{\phi^+} = \frac{1}{\sqrt{d}} \sum_{i=1}^d \ket{i}_{AB}\ket{i}_{A'B'}$ and then evaluate the linear entropy of the state $\sigma_U = \tr_{BB'} \left(  \ket{U} \! \bra{U} \right)$, i.e.,
\begin{align} \label{eq:app_operator_entanglement}
    E_{\mathrm{op}} (U) \coloneqq S_{\mathrm{lin}} (\sigma_U) = 1 - \Tr (\sigma_U^2) . 
\end{align}

The entangling power~\cite{zanardi2000entangling} of a quantum evolution $U$ over a bipartite quantum system $\mathcal H = \mathcal H_A \otimes \mathcal H_B$ is defined as the average entanglement that the evolution generates when acting on random separable pure states. More specifically,
\begin{align}
    e_{\mathrm{P}} (U) \coloneqq \int dV dW E \left[   U \left( \ket{\psi_V}_A \ket{\psi_W}_B \right) \right],
\end{align}
where $\ket{\psi_V}_A = V \ket{\psi_0}_A$ corresponds to Haar random pure states over $A$ ($\ket{\psi_0}_A$ is an irrelevant reference state), and similarly for B, while $E(\ket{\psi_{AB}}) \coloneqq S_{\mathrm{lin}}\left( \tr_{ B}\ket{\psi_{AB}}\!\bra{\psi_{AB}} \right)$ is the entanglement of the resulting state, as measured by the linear entropy.

\begin{proof}
\textbf{(i)}~The key observation here is that the bipartite OTOC $G_{U_t}$, in the form of  Eq.~\eqref{eq:G_main}, coincides with the operator entanglement $E(U_t)$ as defined in Ref.~\cite{zanardi2001entanglement} (see Eq.~(6) therein).
Evaluating the expression~\eqref{eq:app_operator_entanglement}, as in the proof of \autoref{prop:GAverageUnitaries}, one obtains exactly Eq.~\eqref{eq:G_main}, hence $E_{\mathrm{op}} (U_t) = G_{U_t}$.

\textbf{(ii)} For the symmetric case $d_A = d_B$, the result follows by combining the first part of the current Theorem and Eq.~(12) of Ref.~\cite{zanardi2001entanglement}.

Finally, we note that by direct substitution, one has $G_{S_{AB}} = 1 - 1/d$.
\end{proof}

\subsection*{Proposition~\ref{prop:GDoubleConcentration}}

\GDoubleConcentration*

The proof relies on measure concentration and, in particular, Levy's lemma which we shall recall shortly (see, e.g.,~\cite{anderson2010introduction}). Below we are also going use various operator (Schatten) $k$-norms~\cite{bhatia2013matrix}; the latter are defined as $\left\| X  \right\|_k \coloneqq \left(  \sum_i s^k_i  \right)^{1/k}$ where $\{ s_i \}_i$ are the singular values of $X$. The case $\left\| X \right\|_\infty \coloneqq \max_i \left\{  s_i \right\}_i$ corresponds to the usual operator norm. For  $k \ge l$, one always has $\| X \|_k \le \| X \|_l$.

We also remind the reader that a function $f: U(d) \to \mathbb R$ is said to be Lipschitz continuous with constant $K$ if it satisfies
\begin{align}
\left| f(V) - f(W)  \right| \le K \left\| V - W \right\|_2
\end{align}
for all $V,W \in U(d)$. For brevity, in this section we denote the Haar averages as $\braket{(\cdot)}_U$ and also occasionally drop the explicit time dependence.

\begin{theorem*}[Levy's lemma] Let $U \in U(d)$ be distributed according to the Haar measure and $f: U(d) \to \mathbb R$ be a Lipschitz continuous function. Then for any $\epsilon > 0$
\begin{align}
\Prob\{ \left| f(U) - \braket{f(U)}_U \right| \ge \epsilon \} \le \exp \left( - \frac{d \epsilon^2}{4 K^2} \right) ,
\end{align}
where $K$ is a Lipschitz constant.
\end{theorem*}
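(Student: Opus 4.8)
The plan is to obtain Levy's lemma as a manifestation of the concentration-of-measure phenomenon on the compact Lie group $U(d)$, viewed as a Riemannian manifold. The feature that produces the Gaussian tail, and in particular the decisive dimension factor $d$ in the exponent, is a lower bound on the Ricci curvature of $U(d)$ that grows linearly in $d$. Concretely, I would endow $U(d)$ with the bi-invariant Riemannian metric induced by the Hilbert-Schmidt inner product $\braket{X,Y} = \Tr(X^\dagger Y)$ on the Lie algebra $\mathfrak u(d)$; with the correct normalization its Riemannian volume is the Haar measure, and the geodesic distance $d_g$ dominates the chordal (Frobenius) distance, $\| U - V \|_2 \le d_g(U,V)$.

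The first and only genuinely nontrivial step is the curvature estimate. For a bi-invariant metric on a compact group the Levi-Civita connection is $\nabla_X Y = \tfrac12 [X,Y]$, from which one derives the classical formula $\mathrm{Ric}(X,X) = \tfrac14 \sum_i \| [X, e_i] \|_2^2$ for a unit $X$ and an orthonormal basis $\{ e_i \}$ of $\mathfrak u(d)$. Evaluating this sum with the structure constants of $\mathfrak u(d)$ gives $\mathrm{Ric} \ge \kappa$ with $\kappa = \Theta(d)$ on the $SU(d)$ part. Here lies the one real subtlety: the determinant ($U(1)$) direction is flat, so strict positivity fails on all of $U(d)$; the standard remedy is to establish the bound on $SU(d)$, which is Einstein with constant $\Theta(d)$, and treat the single abelian phase separately, its one-dimensional contribution being subleading.

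With the curvature bound in hand I would invoke the Gromov-Lévy / Bakry-Émery concentration theorem: a complete Riemannian manifold whose Ricci curvature satisfies $\mathrm{Ric} \ge \kappa > 0$, equipped with its normalized Riemannian measure $\mu$, obeys $\mu\{ | f - \mathbb E_\mu f | \ge \epsilon \} \le \exp(-\kappa \epsilon^2 / (2 L^2))$ for every $f$ that is $L$-Lipschitz with respect to $d_g$. Since $\| U - V \|_2 \le d_g(U,V)$, any $f$ that is $K$-Lipschitz in the Frobenius distance is a fortiori $K$-Lipschitz in the geodesic distance, so the hypothesis holds with $L = K$. Substituting $\kappa = \Theta(d)$ yields an exponent $\propto d\epsilon^2/K^2$, and tracking the normalization of the metric (which fixes $\kappa = d/2$) pins down the constant $4$ in the denominator of the stated bound.

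As a cross-check I would note the equivalent functional-analytic route: the curvature-dimension condition $\mathrm{CD}(\kappa,\infty)$ implies a logarithmic Sobolev inequality with constant $\propto 1/\kappa$ for the heat semigroup on $U(d)$, and the Herbst argument converts this LSI into precisely the Gaussian concentration inequality above. In both routes the main obstacle is identical, namely the $\Theta(d)$ lower bound on the Ricci curvature together with the careful handling of the flat determinant direction; the remaining passage from curvature to tail bound is standard and dimension-independent in form.
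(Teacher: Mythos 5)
The paper never proves this statement: Levy's lemma is imported as a known external result (with a pointer to Anderson--Guionnet--Zeitouni) and used downstream as a black box, so there is no internal proof to compare against. Your proposal reconstructs what is essentially the standard proof found in that cited literature, and it is correct in outline: the bi-invariant Hilbert--Schmidt metric whose normalized volume is the Haar measure, the comparison $\|U-V\|_2 \le d_g(U,V)$ (so Frobenius-Lipschitz implies geodesic-Lipschitz with the same constant, which is the direction one needs), the curvature identity $\mathrm{Ric}(X,X)=\tfrac14\sum_i \big\| [X,e_i] \big\|_2^2$ giving $\mathrm{Ric}=\tfrac{d}{2}\,g$ on $SU(d)$, and then Bakry--\'Emery/Gromov--L\'evy (equivalently, log-Sobolev plus the Herbst argument). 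The bookkeeping is also right: $\kappa = d/2$ turns $\exp\!\left(-\kappa\epsilon^2/2K^2\right)$ into $\exp\!\left(-d\epsilon^2/4K^2\right)$.

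The one thin spot is exactly the one you flag, and it needs more than the phrase ``subleading'': on $U(d)$ the Ricci bound degenerates along $iI$, and the clean constant is not automatically recovered by handling the phase ``separately.'' The standard repair is a coset argument: write $U = e^{i\theta}V$ with $\theta$ uniform on $[0,2\pi/d)$ and $V$ Haar on $SU(d)$ (this pushes forward to Haar on $U(d)$), apply the $SU(d)$ bound at each fixed $\theta$, and control the drift of the conditional mean $\theta \mapsto \braket{f(e^{i\theta}V)}_V$ using $\big\| e^{i\theta}I - e^{i\theta'}I \big\|_2 = \sqrt{d}\,|e^{i\theta}-e^{i\theta'}|$, which over the short interval $[0,2\pi/d)$ contributes only $O(K/\sqrt{d})$. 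This yields Gaussian concentration on all of $U(d)$ but generically with degraded constants (and the two-sided bound genuinely carries a prefactor $2$, which both you and the paper's statement silently drop). Worth noting: for every application the paper actually makes --- $f_W(V)=C_{V_A,W_B}(t)$, $g(W)=\braket{C_{V_A,W_B}(t)}_V$, and $S_{\mathrm{lin}}\big[\Uplambda_t^{(\chi)}(\ket{\psi_V}\!\bra{\psi_V})\big]$ --- the function is invariant under a global phase $V \mapsto e^{i\alpha}V$, so it is constant along the flat direction, the conditional means do not drift at all, and the pristine $SU(d)$ constant applies verbatim; this is why the subtlety you identified is harmless in this paper.
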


During the course of the proof of \autoref{prop:GDoubleConcentration}, the following two continuity results will come in handy.

\begin{lemma} \label{lemma:app:lipschitz}
\begin{enumerate}[(i)]
\item The function $f_{W}(V) : U(d_A) \to \mathbb R$ with $f_{W}(V) \coloneqq C_{V_A,W_B}(t) $ is Lipschitz continuous with constant $K_f = 2$ for all $t \in \mathbb R$ and $W \in U(d_B)$.
\item The function $g(W) : U(d_B) \to \mathbb R$ with $g(W) \coloneqq \braket{C_{V_A,W_B}(t)}_{V} $ is Lipschitz continuous with constant $K_g = 2/d_A$ for all $t \in \mathbb R$.
\end{enumerate}
\end{lemma}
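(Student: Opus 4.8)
The plan is to prove both continuity estimates by reducing $C_{V_A,W_B}(t)$ (at $\beta=0$) to a single trace and controlling it with Cauchy--Schwarz, $|\Tr(XY)| \le \|X\|_2\|Y\|_2$, together with unitarity and the behaviour of the Hilbert--Schmidt norm under tensoring with $I$ and under conjugation by $U_t$.

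For part (i), I would start from $C_{V_A,W_B}(t) = 1 - \tfrac{1}{d}\Real\Tr\big(V_A^\dagger(t) W_B^\dagger V_A(t) W_B\big)$ and write $f_W(V_1) - f_W(V_2)$ as a single trace. Setting $X_i \coloneqq U_t^\dagger (V_i \otimes I_B) U_t$, the telescoping identity $X_1^\dagger W_B^\dagger X_1 W_B - X_2^\dagger W_B^\dagger X_2 W_B = (X_1^\dagger - X_2^\dagger)W_B^\dagger X_1 W_B + X_2^\dagger W_B^\dagger (X_1 - X_2) W_B$ splits it into two pieces, each bounded by Cauchy--Schwarz. Since $W_B$ is unitary (so it preserves $\|\cdot\|_2$) and $\|X_i\|_2 = \sqrt{d}$, the estimate becomes proportional to $\|X_1 - X_2\|_2$. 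The key geometric step is $\|X_1 - X_2\|_2 = \|(V_1 - V_2)\otimes I_B\|_2 = \sqrt{d_B}\,\|V_1 - V_2\|_2$, using invariance under conjugation by $U_t$ and multiplicativity under tensor products. Collecting factors gives $|f_W(V_1) - f_W(V_2)| \le \tfrac{2}{\sqrt{d_A}}\|V_1 - V_2\|_2 \le 2\|V_1 - V_2\|_2$, uniformly in $t$ and $W$, hence $K_f = 2$.

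For part (ii) the essential point is that the average over $V$ must be carried out \emph{before} estimating, since it is the averaging that flattens the $W$-dependence. Along the lines of the proof of Theorem~\ref{prop:GAverageUnitaries}, the twirl $\int dV\, V_A^\dagger X V_A = \tfrac{I_A}{d_A}\otimes \Tr_A X$ collapses $g$ into the closed form $g(W) = 1 - \tfrac{1}{d_B}\big\| \Uplambda^{(B)}_t(W^\dagger)\big\|_2^2$, with $\Uplambda^{(B)}_t$ the reduced unital channel of \autoref{prop:reduced_dynamics}. I would then bound its Lipschitz constant by differentiating along $W(s) = e^{isH}W$ and using that a unital channel is a contraction in Hilbert--Schmidt norm (obtained by interpolating its trace-norm and operator-norm contractivity). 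This reduces $|g(W_1)-g(W_2)|$ to a product $\|\Uplambda^{(B)}_t(W^\dagger)\|_2 \cdot \|\Uplambda^{(B)}_t(W^\dagger H)\|_2$, each factor tamed by $\|\cdot\|_2$-contractivity; it is precisely this channel that encodes the improvement over part (i).

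The main obstacle is pinning down the precise constant in (ii). A naive reverse--triangle estimate $\big|\,\|a\|_2^2 - \|b\|_2^2\,\big| \le \|a-b\|_2(\|a\|_2+\|b\|_2)$ discards the benefit of the averaging — it is already badly loose for trivial dynamics, where $g$ is constant yet the individual norms are of order $\sqrt{d}$ — so the averaging must be exploited through the contractivity of $\Uplambda^{(B)}_t$ rather than term by term. I expect the contraction argument to yield a constant governed by the dimension of the subsystem being averaged over, so matching it to the stated $K_g = 2/d_A$ requires careful bookkeeping of the partial-trace factors $\|\Tr_A(\cdot)\|_2 \le \sqrt{d_A}\|\cdot\|_2$ against the $1/d_B$ prefactor; this accounting is the delicate part. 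These two constants are exactly what feed the two applications of Levy's lemma (on $U(d_A)$ and then on $U(d_B)$) behind \autoref{prop:GDoubleConcentration}.
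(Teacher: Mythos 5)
Your part (i) is fine and in fact sharper than what is claimed: applying Cauchy--Schwarz in the original (undoubled) space to the two telescoped traces gives $|f_W(V_1)-f_W(V_2)|\le \frac{2}{\sqrt{d_A}}\|V_1-V_2\|_2$, i.e.\ $K_f=2/\sqrt{d_A}\le 2$. The paper instead works in the replica space and uses the H\"older bound $|\Tr(MN)|\le\|M\|_\infty\|N\|_1$ after the same telescoping; both routes reach $K_f=2$, and yours is the cleaner one.

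Part (ii) is where the substantive issue sits, and it is not the bookkeeping problem you anticipate. Your closed form $g(W)=1-\tfrac{1}{d_B}\big\|\Uplambda^{(B)}_t(W)\big\|_2^2$ is correct, and your contraction argument honestly delivers $K_g=2/\sqrt{d_B}$: each derivative term is bounded by $\big\|\Uplambda^{(B)}_t(W)\big\|_2\,\big\|\Uplambda^{(B)}_t(WH)\big\|_2\le\sqrt{d_B}\,\|H\|_2$, and the prefactor $1/d_B$ leaves $1/\sqrt{d_B}$. You defer "matching it to the stated $K_g=2/d_A$" to careful accounting --- but no accounting will get you there, because that constant is not attainable. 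Take $d_B=2$, $d_A=8$, and let $U_t$ swap $B$ with one qubit factor of $A$; then $\Uplambda^{(B)}_t$ is completely depolarizing and $g(W)=1-|\Tr W|^2/4$. With $W_1=I$ and $W_2=\sigma^z$ one finds $|g(W_1)-g(W_2)|=1$ while $\|W_1-W_2\|_2=2$, a Lipschitz ratio of $1/2>2/d_A=1/4$. More generally, $g(I)=0$ identically while $g$ reaches $O(1)$ at $\|\cdot\|_2$-distance $O(\sqrt{d_B})$ from $I$ for scrambling dynamics, so no bound of the form $2/d_A$ can hold uniformly in $d_A$; your $2/\sqrt{d_B}$ is essentially the right scale. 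The paper's own derivation of $2/d_A$ hinges on the step $\|I\otimes X_B\|_1=d$, whereas on the replicated space $\mathcal H\otimes\mathcal H'$ this trace norm equals $d^2$; repairing that step removes the advertised factor $1/d_A$.

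You should therefore settle for $K_g=\min\{2,\,2/\sqrt{d_B}\}$ and note the downstream consequence: the second application of Levy's lemma in \autoref{prop:GDoubleConcentration} then yields $\exp(-d_B^2\epsilon^2/64)$ rather than $\exp(-d_A^2d_B\epsilon^2/64)$, so together with your $K_f=2/\sqrt{d_A}$ one obtains $P(\epsilon)\le 2\exp(-d_{\min}^2\epsilon^2/64)$. This is still an exponential concentration statement, but it reproduces the $d_{\max}$ rate of Eq.~\eqref{eq:double_concentration} only when $d_{\min}^2\ge d_{\max}$.
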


\begin{proof}[Proof of lemma]
\textbf{(i)} Let $X,Y \in U(d_A)$. We need to show that
\begin{align*}
\left| f_W(X) - f_W (Y)  \right| \le K_f \left\| X - Y  \right\|_2.
\end{align*}

Following the proof of \autoref{prop:GAverageUnitaries}, we can express 
\begin{align*}
f_W(V) = 1 - \frac{1}{d} \Real \tr \left[ S  U_t^{\dagger \otimes 2} (V_A^\dagger \otimes  V_A)  U_t^{\otimes 2}  (W_B^\dagger \otimes  W_B)    \right]
\end{align*}
therefore
\begin{align*}
\left| f_W(X) - f_W (Y)  \right| & \le  \frac{1}{d} \left| \tr \left[ U_t^{ \otimes 2}   (W_B^\dagger \otimes  W_B) S U_t^{\dagger \otimes 2}  (X_A^\dagger \otimes  X_A - Y_A^\dagger \otimes  Y_A)  \right]  \right| \\
& \le \frac{1}{d} \big\| X_A^\dagger \otimes  X_A - Y_A^\dagger \otimes  Y_A \big\|_1 ,
\end{align*}
where in the last step we used the inequality $\left\| \Tr \left( A B \right) \right\| \le \left\| A  \right\|_1 \left\|  B \right\|_\infty$ and the fact that $\big\| U_t^{ \otimes 2}   (W_B^\dagger \otimes  W_B) S U_t^{\dagger \otimes 2} \big\|_\infty= 1$ since the operator within the norm is unitary.

In order to express the last norm as a function of the difference $X_A - Y_A$, we first add and subtract $Y^\dagger_A \otimes X_A$ and then use the triangle inequality. This  results in
\begin{align*}
\frac{1}{d} \big\| X_A^\dagger \otimes  X_A - Y_A^\dagger \otimes  Y_A \big\|_1 & \le \frac{1}{d} \left( \big\| (X_A^\dagger -Y_A^\dagger) \otimes X_A \big\|_1 + \big\| Y_A^\dagger \otimes (X_A - Y_A)  \big\|_1  \right) \\
& \le \frac{1}{d} \left(  \big\| X_A^\dagger - Y_A^\dagger  \big\|_\infty   \big\| I \otimes X_A \big\|_1 + \big\| X_A - Y_A  \big\|_\infty   \big\| Y_A^\dagger \otimes I  \big\|_1 \right)
\end{align*}
where for the last step we utilized the inequality $\left\| A B \right\|_1 \le \left\| A \right\|_1  \left\| B \right\|_\infty$.  Now notice that $\big\| I \otimes X_A \big\|_1  = d$ since $X_A$ is unitary, and similarly for $\big\| Y_A^\dagger \otimes I  \big\|_1$. Therefore we can bound
\begin{align*}
\left| f_W(X) - f_W (Y)  \right|  \le \big\| X_A - Y_A \big\|_\infty + \big\| X^\dagger_A - Y^\dagger_A \big\|_\infty \le  2 \big\| X_A - Y_A \big\|_\infty = 2 \big\| X - Y \big\|_\infty  \le 2 \big\| X - Y \big\|_2 ,
\end{align*}
from which clearly one can take $K_f = 2$.
\\ \\
\textbf{(ii)} First notice that the Haar average over $V_A = V \otimes I_B$ can be performed, as was done in the proof of \autoref{prop:GAverageUnitaries}. The result is
\begin{align*}
g(W) &= 1 - \frac{1}{d} \Real \Tr \left[ S U_t^{\dagger \otimes 2} \frac{S_{AA'}}{d_A} U_t^{\otimes 2}  W_B^\dagger \otimes  W_B   \right] \\
 & = 1 - \frac{1}{d} \Real \Tr \left[ U_t^{\dagger \otimes 2} \frac{S_{BB'}}{d_A} U_t^{\otimes 2}  W_B^\dagger \otimes  W_B   \right] .
\end{align*}
Considering the relevant difference, we can bound
\begin{align*}
\left| g(X) - g (Y)  \right| & \le \frac{1}{d_A} \frac{1}{d} \left|  \Tr \left[  U_t^{\dagger \otimes 2} S_{BB'} U_t^{\otimes 2}  (X_B^\dagger \otimes  X_B  -  Y_B^\dagger \otimes  Y_B ) \right]  \right| \\
& \le  \frac{1}{d_A} \frac{1}{d} \big\| X_B^\dagger \otimes  X_B - Y_B^\dagger \otimes  Y_B \big\|_1 .
\end{align*}
Now one can follow the exact same steps as in part (i); the result is identical except of the extra factor $1/d_A$ that carries through, which originates from the averaging. This results in
\begin{align*}
\left| g(X) - g (Y)  \right| \le \frac{2}{d_A}  \big\| X - Y \big\|_2
\end{align*}
from which one can take $K_g = 2/d_A$.
\end{proof}

Everything is now in place to give the proof of \autoref{prop:GDoubleConcentration}. 

\begin{proof}
Let $\epsilon > 0$. We want to show that, for $V \in U(d_A)$ and $W \in U(d_B)$ distributed independently according to the Haar measure, it holds
\begin{align*}
\Prob \left( \gamma \ge \epsilon \right) \le \exp\left( -\frac{\epsilon^2 d_{\max}}{64} \right)
\end{align*}
where $\gamma \coloneqq \left| C_{V_A,W_B} - G  \right|$ and by definition $G = \braket{C_{V_A,W_B}}_{V,W}$.

Let us consider any pair $V_A, W_B$ that satisfies  $\epsilon \le \gamma$. Then, from the triangle inequality also
\begin{align*}
\epsilon \le \alpha + \beta,
\end{align*}
where we set $\alpha  \coloneqq \left| C_{V_A,W_B} - \braket{C_{V_A,W_B}}_{V}  \right| $  and 
$\beta  \coloneqq \left| \braket{C_{V_A,W_B}}_{V} - G \right|$.
Hence we have for the corresponding probabilities
\begin{align*}
\Prob \left\{ \gamma \ge \epsilon \right\} \le \Prob \left\{ \alpha + \beta \ge \epsilon \right\} .
\end{align*}
However, if $\alpha + \beta \ge \epsilon$ then necessarily  $\alpha \ge \epsilon/2$ or $\beta \ge \epsilon/2$, therefore we also have
\begin{align*}
\Prob \left\{ \alpha + \beta \ge \epsilon \right\} \le \Prob \left( \left\{ \alpha  \ge \epsilon / 2 \right\} \cup \left\{ \beta  \ge \epsilon / 2 \right\} \right).
\end{align*}
Using the standard union bound over the last expression results in
\begin{align} \label{eq:app:ineq_abc}
\Prob \left\{ \gamma \ge \epsilon \right\} \le  \Prob  \left\{ \alpha  \ge \epsilon / 2 \right\} + \Prob   \left\{ \beta  \ge \epsilon / 2 \right\}.
\end{align}

The two Probabilities in Eq.~\eqref{eq:app:ineq_abc} can be bounded using Levy's lemma. For that, let us first define the auxiliary functions $f_{W}(V)$ and $g(W)$ as in \autoref{lemma:app:lipschitz}. Combining the Lipschitz continuity result from there with Levy's lemma, one gets measure concentration bounds
\begin{subequations} \label{eq:app:levy_bound_fg}
\begin{align}
\Prob_V \{ \left| C_{V_A,W_B} - \braket{C_{V_A,W_B}}_{V}  \right| \ge \epsilon/2 \} &\le \exp \left( - \frac{d_A \epsilon^2}{64} \right) \quad \forall W  \label{eq:app:levy_bound_f_only} \\
\Prob \{ \braket{C_{V_A,W_B}}_{V} - G \ge \epsilon/2 \} &\le \exp \left( - \frac{d_A^2 d_B \epsilon^2}{64} \right)
\end{align} 
\end{subequations}
We are almost done; it suffices to notice that the bound~\eqref{eq:app:levy_bound_f_only} is uniform in $W$, hence it is also applicable to $\Prob  \left\{ \alpha  \ge \epsilon / 2 \right\} $. Therefore we arrive at
\begin{align}
\Prob \{ \left| C_{V_A,W_B}(t) - G(t)  \right| \ge \epsilon \} \le  \exp \left( - \frac{d_A \epsilon^2}{64} \right) + \exp \left( - \frac{d_A^2 d_B \epsilon^2}{64} \right) \le 2  \exp \left( - \frac{d_A \epsilon^2}{64} \right)  .
\end{align}
Notice the resulting bound is independent of the dynamics, as long as the latter is unitary. Finally, one can obtain the analogous bound for $A \leftrightarrow B$ by inverting the roles of $V$ and $W$ in the proof. Therefore we obtain Eq.~\eqref{eq:double_concentration}.
\end{proof}

\subsection*{Proposition~\ref{prop:R_matrix}}

\Rmatrix*

Here we give a straightforward proof assuming the NRC holds exactly. For a more detailed discussion, see also the section of the proof of \autoref{prop:NRC_upper_bound}.

\begin{proof}
Our starting point is Eq.~\eqref{eq:G_main}, which we need to time average. Since the Hamiltonian is  by assumption nondegenerate, we can spectrally decompose $H = \sum_{k=1}^d E_k P_k$, where $P_k \coloneqq \ket{\phi_k} \! \bra{\phi_k}$. We then have
\begin{align*}
\overline{G(t)}^\mathrm{NRC} = 1 - \frac{1}{d^2} \sum_{klmn} \overline{ \exp \big[ i (E_k + E_l -E_m - E_n) t \big]  } \tr \left[ S_{AA'} (P_k \otimes P_l) \, S_{AA'} \, (P_m \otimes  P_n)  \right] .
\end{align*}
Time averaging the exponential results in
\begin{align*}
\overline{ \exp \big[ i (E_k + E_l -E_m - E_n) t \big]  } = \delta_{E_k + E_l -E_m - E_n,0} \stackrel{\mathclap{\scriptsize \mbox{NRC}}}{=\joinrel=} \delta_{k,m} \delta_{l,n} + \delta_{k,n} \delta_{l,m} - \delta_{k,l} \delta_{l,m} \delta_{m,n}
\end{align*}
where in the last step we used the fact that energy gaps are nondegenerate. Thus
\begin{align*}
\overline{G(t)}^\mathrm{NRC} & =1  -  \frac{1}{d^2} \Big( \sum_{kl}  \tr \left[ S_{AA'}    (P_k \otimes P_l) \, S_{AA'} \, (P_k \otimes  P_l)   \right]  +   \sum_{kl}  \tr \left[ S_{AA'}    (P_k \otimes P_l) \, S_{AA'} \, (P_l \otimes  P_k)   \right]   \\ & \hspace{0.55 \columnwidth} - \sum_{k}  \tr \left[ S_{AA'}    (P_k \otimes P_k) \, S_{AA'} \, (P_k \otimes  P_k)   \right]   \Big) \\
& = 1 -   \frac{1}{d^2} \Big( \sum_{kl} \big|  \tr   \left[(P_k \otimes P_l) \, S_{AA'}   \right] \big|^2 +  \sum_{kl} \big|  \tr   \left[(P_k \otimes P_l) \, S_{BB'}   \right] \big|^2 -  \sum_{k} \big|  \tr   \left[(P_k \otimes P_k) \, S_{AA'}   \right] \big|^2   \big) ,
\end{align*}
where for the second term we used that $P_l \otimes P_k = S (P_k \otimes P_l) S$ and $S = S_{AA'} S_{BB'}$.

Now, notice that partial traces can be formally performed, giving
\begin{align*}
 \tr _{AA'BB'}  \left[(P_k \otimes P_l) S_{AA'}   \right]  = \tr_{AA'} \left[  \tr_{BB'} (P_k \otimes P_l) S_{AA'} \right] = \tr_{AA'} \left[ (\rho_k^{(A)} \otimes \rho_l^{(A')} ) S_{AA'} \right] = \tr\left( \rho_k^{(A)} \rho_l^{(A)} \right) = R_{kl}^{(A)}, 
\end{align*}
and similarly
\begin{align*}
\tr _{AA'BB'}  \left[(P_k \otimes P_l) \, S_{BB'}   \right] &= R_{kl}^{(B)} \\
\tr _{AA'BB'}  \left[(P_k \otimes P_k) \, S_{AA'}   \right] &=\tr _{AA'BB'}  \left[(P_k \otimes P_k) \, S_{BB'}   \right] = R_{kk}^{(A)} = R_{kk}^{(B)}
\end{align*}
where in the last line we used the fact that the spectra of $\rho_k^{(A)}$ and $\rho_k^{(B)}$ are equal, up to (irrelevant for the trace) zeroes. The result follows by expressing the matrix 2-norm as $\left\| X \right\|_2^2 = \sum_{ij} \left| X_{ij} \right|^2$.
\end{proof}

\subsection*{Proposition~\ref{prop:bound_NRC}}

Before proceeding with the proof, let us briefly comment on the need of including the parameter $\alpha$, which corresponds to the fraction of the highly entangled eigenstates of the Hamiltonian. For certain Hamiltonian models (e.g., the class of gapped, local Hamiltonians over one-dimensional lattice systems) it is well known that the ground state follows an area law for the entanglement entropy~\cite{eisert2010colloquium}. Thus for larger system sizes $\epsilon$ cannot be chosen to be small for the ground state (and also possibly for the low lying excited states), even for the symmetric $d_A = d_B$ bipartition. Nevertheless, in the bulk of the spectrum, typical eigenstates are expected to obey instead a volume law, which is compatible with an $\epsilon$ that can be chosen to be suitably small. Therefore, we expect that, for certain physically relevant models, a large fraction $\alpha$ can be assumed to satisfy this condition.

\boundNRC*

\begin{proof}

To simplify the notation, we assume $d_A \le d_B$. Let us also define $I = \{ k: E_{\max} - E(\ket{\phi_k}) \le \epsilon \}$, i.e., $I$ is the index set of those Hamiltonian eigenstates that deviate at most by $\epsilon$ from $E_{\max}$, while we use $\bar {I}$ to label the rest of the eigenstates. By assumption, $\left| I  \right| \ge \alpha d$.

First of all, notice that one can express the difference $E_{\max} - E(\ket{\psi_{AB}})$ as the distance
\begin{align*}
E_{\max} - E(\ket{\psi_{AB}}) = \Tr ( \rho_B^2 ) - 1/d_{B } = \big\|  \rho_B - I/d_B \big\|_2^2 \ge \big\|  \rho_A - I/d_A \big\|_2^2 = \Tr ( \rho_A^2 ) - 1/d_A \,\;.
\end{align*}
Setting for brevity $\Delta_k^{(\chi)} \coloneqq  \rho_k^{(\chi)} - I/d_\chi$ ($\chi = A,B$),
we have for all $k \in I$ that $E_{\max} - E(\ket{\phi_{k}}) = \big\|  \Delta_k^{(B)} \big\|_2^2 \le \epsilon$ and hence also $\big\|  \Delta_k^{(A)} \big\|_2^2 = \big\|  \rho_k^{(A)} - I/d_A \big\|_2^2 \le \epsilon $. It will be convenient for later to express
\begin{align} \label{eq:app:deltas_aux}
\big| \braket{\rho_k^{(\chi)},\rho_l^{(\chi)}} \big|^2 = \big| \braket{I / d_\chi + \Delta_k^{(\chi)},  I / d_\chi + \Delta_l^{(\chi)}} \big|^2 = \big| \frac{1}{d_\chi} + \braket{\Delta_k^{(\chi)} , \Delta_l^{(\chi)}} \big|^2 = \frac{1}{d_\chi^2} + \frac{2}{d_\chi} \braket{\Delta_k^{(\chi)} , \Delta_l^{(\chi)}} + \braket{\Delta_k^{(\chi)} , \Delta_l^{(\chi)}}^2.
\end{align}
Moreover, by the Cauchy-Schwartz inequality, 
\begin{subequations} \label{eq:app_estimates_Delta}
\begin{align}
\big| \braket{\Delta_k^{(\chi)} , \Delta_l^{(\chi)}} \big| \le \big\| \Delta_k^{(\chi)} \big\|_2 \big\| \Delta_l^{(\chi)} \big\|_2 
\end{align}
while
\begin{align}
\big\| \Delta_k^{(\chi)} \big\|_2 ^ 2 \le \begin{cases}
\epsilon  & \mbox{if } k \in I , \\
1 - \frac{1}{d_\chi} & \mbox{otherwise.}
\end{cases}
\end{align}
\end{subequations}

Let's start from Eq.~\eqref{eq:G_ave_R_matrix}. Using the fact that $\big\| R^{(A)}_D \big\|_2^2 = \big\| R^{(B)}_D \big\|_2^2$ and recalling $\overline{G_{\mathrm{ME}}(t)}^\mathrm{NRC} = (1 - 1/d)^2$ we get by the triangle inequality
\begin{align} \label{eq:app_three_terms}
\big| \overline{G_{\mathrm{ME}}(t)}^\mathrm{NRC} - \overline{G(t)}^\mathrm{NRC}  \Big| & \le   \Big| \frac{1}{d^2} \big\|     R^{(A)} \big\|_2^2 -  \frac{1}{d}  \Big| +  \Big| \frac{1}{d^2} \big\|    R^{(B)} \big\|_2^2 -  \frac{1}{d}  \Big|   + \frac{1}{d^2}  \big| \big\|  R_D^{(A)} \big\|_2^2 -  1 \big| .
\end{align}
To bound the first term we write
\begin{align*}
   \Big| \frac{1}{d^2} \big\|     R^{(A)} \big\|_2^2 -  \frac{1}{d}  \Big| = \Big| \frac{1}{d^2} \sum_{kl} \big| \braket{\rho_k^{(A)},\rho_l^{(A)}} \big|^2  - \frac{1}{d} \Big|  \le \frac{1}{d_A^2} - \frac{1}{d} 
+ \frac{1}{d^2} \sum_{kl} \left( \frac{2}{d_A} \big| \braket{\Delta_k^{(A)} , \Delta_l^{(A)}} \big| + \braket{\Delta_k^{(A)} , \Delta_l^{(A)}}^2 \right)   
\end{align*}
where we used Eq.~\eqref{eq:app:deltas_aux}. Splitting both of the sums as $\sum_{k} = \sum_{k \in I} + \sum_{k \notin I}$ and using Eqs.~\eqref{eq:app_estimates_Delta} we have
\begin{align*}
\frac{1}{d^2} \sum_{kl}  \big| \braket{\Delta_k^{(A)} , \Delta_l^{(A)}} \big| & \le  \epsilon \alpha^2 + 2 \alpha(1-\alpha) \sqrt{\epsilon \left( 1 - \frac{1}{d_A} \right)}  + (1 - \alpha)^2 \left(1 - \frac{1}{d_A}  \right)
\end{align*}
and
\begin{align*}
\frac{1}{d^2} \sum_{kl} \braket{\Delta_k^{(A)} , \Delta_l^{(A)}} ^2 & \le  \epsilon^2 \alpha^2 + 2 \alpha(1-\alpha) \epsilon \left( 1 - \frac{1}{d_A} \right)  + (1 - \alpha)^2 \left(1 - \frac{1}{d_A}  \right)^2 .
\end{align*}
Putting them together, and relaxing some inequalities for clarity, we obtain for the first term of Eq.~\eqref{eq:app_three_terms}
\begin{align*}
\Big| \frac{1}{d^2} \big\|     R^{(A)} \big\|_2^2 -  \frac{1}{d}  \Big| \le \frac{1}{d_A^2} - \frac{1}{d} + \alpha \epsilon \left(\frac{2}{d_A} + \epsilon \right)   + (1-\alpha)^2 (1 + \frac{2}{d_A})  + 2 (1-\alpha) (\epsilon + \sqrt{\epsilon}).
\end{align*}
Analogously for the second term of Eq.~\eqref{eq:app_three_terms},
\begin{align*}
   \Big| \frac{1}{d^2} \big\|     R^{(B)} \big\|_2^2 -  \frac{1}{d}  \Big|  \le   \frac{1}{d} - \frac{1}{d_B^2} + \alpha \epsilon \left(\frac{2}{d_A} + \epsilon \right)   + (1-\alpha)^2 (1 + \frac{2}{d_A})  + 2 (1-\alpha) (\epsilon + \sqrt{\epsilon}) .
\end{align*}
For the third one, we have
\begin{align*}
  \big\| R_D^{(A)} \big\|_2^2 = \sum_k \big| \braket{\rho_k^{(A)} , \rho_k^{(A)}} \big|^2  =  \frac{d_B}{d_A} + \frac{2}{d_A} \sum_k \braket{\Delta_k^{(A)} , \Delta_k^{(A)}} + \sum_k \braket{\Delta_k^{(A)} , \Delta_k^{(A)}}^2 .
\end{align*}
Using similar manipulations as above, and under the convention $d_A \le d_B$,
\begin{align*}
\frac{1}{d^2}  \big| \big\|  R_D^{(A)} \big\|_2^2 -  1 \big|  \le  \frac{1}{d^2} \left( \frac{d_B}{d_A} - 1 \right) + \frac{1}{d} \left[ \alpha \left( \frac{2 \epsilon}{d_A} + \epsilon^2 \right) + (1-\alpha) \left( \frac{2}{d_A} +1 \right)  \right]
\end{align*}

Putting the inequalities together, we have
\begin{multline}
\big| \overline{G_{\mathrm{ME}}(t)}^\mathrm{NRC} - \overline{G(t)}^\mathrm{NRC}  \big| \le \\  \frac{\lambda - 1 }{d^2} + \frac{\lambda^2 - 1}{d_B^2} + \alpha \left[ 2 \epsilon \left( \frac{2}{d_A}  + \frac{1}{d_A^2 d_B} \right)   +  \epsilon^2 \left( 2 + \frac{1}{d} \right) \right] + (1 - \alpha ) \left[ 2 (1-\alpha) \left( 1+ \frac{2}{d_A} \right) + \frac{2}{d} + 4\left( \epsilon + \sqrt{\epsilon} \right) \right]
\end{multline}
which can be relaxed to give the final result by using $\dfrac{\lambda^2 - 1}{d_B^2} \ge \dfrac{\lambda - 1 }{d^2}$.
\end{proof}

\subsection*{Theorem~\ref{prop:NRC_upper_bound}}

\NRCUpperBound*

Before giving the proof of the Theorem, we first briefly discuss some general facts regarding infinite time averages, their connection with the NRC and the NRC\textsuperscript{+}, and how they give rise to the corresponding estimates.

Let us consider unitary quantum dynamics $\mathcal U_t (\cdot) = U_t (\cdot) U^\dagger _t $ generated by a Hamiltonian $H = \sum_k \tilde E_k \Pi_k$, where $\Pi_k$ denotes the projector onto the $k\textsuperscript{th}$ eigenspace. As a warm-up, let us calculate the time average of the superoperator $\mathcal U_t$. The latter can be easily performed by noticing that $ \overline{\exp\big[-i (\tilde E_k - \tilde E_l) t \big]} = \delta_{kl}$. It results to
\begin{align}
\mathcal P_{ H} \coloneqq \overline{\mathcal U_t}  = \sum_k \Pi_k (\cdot) \Pi_k
\end{align}
which is the (Hilbert-Schmidt orthogonal) projector onto the commutant of the algebra generated by $\{ \Pi_k \}_k$, i.e., the projector whose range is the space of operators commuting with $H$.

The object of interest for us is, in fact, $\overline{\mathcal U_t^{\otimes 2}}$ since
\begin{align}
\overline{G(t)} = 1- \frac{1}{d^2} \braket{S_{AA'} , \overline{\mathcal U_t^{\otimes 2}} (S_{AA'})}.
\end{align}
Reasoning as above, it follows that the resulting superoperator is again a projector, whose range is the space of operators over the replicated Hilbert space $\mathcal H^{\otimes 2}$ that commute with $H^{(2)} \coloneqq H \otimes I + I \otimes H$. The projector can be explicitly expressed as
\begin{align}
\mathcal P_{ H^{(2)}} \coloneqq \, \overline{\mathcal U_t^{\otimes 2}} \,   =  \sum_{klmn } \delta_{\tilde E_k - \tilde E_m , \tilde E_l - \tilde E_n} \Pi_k \otimes \Pi_l (\cdot) \Pi_m \otimes \Pi_n 
\end{align}
To evaluate the above sum, let us for a moment examine what happens when the energy gaps $\{\tilde E_k - \tilde E_l \}_{kl}$ are nondegenerate. i.e.,
\begin{align}
\mathrm{NRC}^+: \qquad \tilde E_k + \tilde E_l = \tilde E_m + \tilde E_n  \Longleftrightarrow ( k = m  \ \wedge \, l = n ) \, \vee \, ( k = n  \ \wedge \, l = m ) .
\end{align}
We will refer to this condition over the spectrum as $\mathrm{NRC}^+$, since it constitutes a relaxed version of the $\mathrm{NRC}$.  Without any assumption over the spectrum, one can always separate two contributions
\begin{align}
\mathcal P_{\mathcal H^{(2)}} = \mathcal P_{\mathrm{NRC}^+} + \mathcal P_{\overline{\mathrm{NRC}^+}} 
\end{align} 
where
\begin{align}
\mathcal P_{\mathrm{NRC}^+} \coloneqq \sum_{kl }  \Pi_k \otimes \Pi_l (\cdot) \Pi_k \otimes \Pi_l + \sum_{k l}  \Pi_k \otimes \Pi_l (\cdot) \Pi_l \otimes \Pi_k - \sum_k  \Pi_k \otimes \Pi_k (\cdot) \Pi_k \otimes \Pi_k 
\end{align}
and $\mathcal P_{\overline{\mathrm{NRC}^+}} $ is any possibly remaining piece, which vanishes if and only if the Hamiltonian does indeed satisfy $\mathrm{NRC}^+$.

Disregarding $\mathcal P_{\overline{\mathrm{NRC}^+}} $, one gets the estimate
\begin{align}
 \overline{G(t)}^{\mathrm{NRC}^+} &\coloneqq 1 - \frac{1}{d^2} \tr\left[ S_{AA'} \mathcal P_{\mathrm{NRC}^+} \left( S_{AA'} \right)  \right]\\
& = 1  -  \frac{1}{d^2} \Big( \sum_{kl}  \tr \left[ S_{AA'}    (\Pi_k \otimes \Pi_l) \, S_{AA'} \, (\Pi_k \otimes  \Pi_l)   \right]  +   \sum_{kl}  \tr \left[ S_{AA'}    (\Pi_k \otimes \Pi_l) \, S_{AA'} \, (\Pi_l \otimes  \Pi_k)   \right] \nonumber  \\ & \hspace{0.45 \columnwidth} - \sum_{k}  \tr \left[ S_{AA'}    (\Pi_k \otimes \Pi_k) \, S_{AA'} \, (\Pi_k \otimes  \Pi_k)   \right]   \Big) , \label{eq:app:NRCp_expanded}
\end{align}
where the second equation follows from the proof of \autoref{prop:R_matrix}. Clearly, if all projectors $\{\Pi_k \}$ are rank-1, then Eq.~\eqref{eq:app:NRCp_expanded} collapses to the corresponding one for NRC, Eq.~\eqref{eq:G_ave_R_matrix}. Notice that one can evaluate $\overline{G(t)}^{\mathrm{NRC}^+}$ regardless of whether the Hamiltonian spectrum actually satisfies NRC\textsuperscript{+}, and obtain the NRC\textsuperscript{+} estimate mentioned in the main text.

Evidently, one can also express the NRC time average, Eq.~\eqref{eq:G_ave_R_matrix}, in terms of the corresponding projector
\begin{align}
 \overline{G(t)}^{\mathrm{NRC}} = 1 - \frac{1}{d^2} \tr\left[ S_{AA'} \mathcal P_{\mathrm{NRC}} \left( S_{AA'} \right)  \right].
\end{align}
If the Hamiltonian does not satisfy NRC, performing a (possibly nonunique) decomposition $H = \sum_k E_k \ket{\phi_k} \!\bra{\phi_k}$ and evaluating Eq.~\eqref{eq:G_ave_R_matrix} gives rise to the corresponding NRC estimate.

Finally, for the case of Haar random unitaries, one has the corresponding projector $\overline {\mathcal U^{\otimes 2} }^{\mathrm{Haar}}  \coloneqq \mathcal P_{\mathrm{Haar}}$ whose range is given by the algebra generated by $\{ I, S \}$~\cite{goodman2009symmetry}. We evaluate its explicit expression in the next section.

We are now ready to give the proof of \autoref{prop:NRC_upper_bound}.

\begin{proof}
The key observation here is that, by construction,  the range of each projector satisfies
\begin{align}
\Ran \left(  \mathcal P_{H^{(2)}}  \right) \supseteq \Ran \left(  \mathcal P_{\mathrm{NRC}^+} \right) \supseteq \Ran \left(  \mathcal P_{\mathrm{NRC}} \right) \supseteq \Ran \left(  \mathcal P_{\mathrm{Haar}} \right) .
\end{align}
Since all of the above are Hilbert-Schmidt orthogonal projectors, it also follows that
\begin{align}  \label{eq:app:superprojectors_inequality}
 \mathcal P_{H^{(2)}} \ge  \mathcal P_{\mathrm{NRC}^+} \ge  \mathcal P_{\mathrm{NRC}} \ge  \mathcal P_{\mathrm{Haar}} \,.
\end{align}
As a result,
\begin{align}
\braket{S_{AA'}, \mathcal P_{H^{(2)}}(S_{AA'})} \ge \braket{S_{AA'},  \mathcal P_{\mathrm{NRC}^+} (S_{AA'})} \ge \braket{S_{AA'}, \mathcal  \mathcal P_{\mathrm{NRC}}(S_{AA'})} \ge \braket{S_{AA'}, \mathcal P_{\mathrm{Haar}} (S_{AA'})},
\end{align}
from which Eq.~\eqref{eq:comparison_time_avgs_ineq} follows immediately.
\end{proof}




\subsection*{Proof of Eq.~\eqref{eq:Haar_average}}

The Haar average
\begin{align*}
\overline{G}^{\mathrm{Haar}} = \frac{(d_A^2-1)(d_B^2 - 1)}{d^2 - 1}
\end{align*}
can be derived using fact that $\overline{\mathcal U^{\otimes 2}}^{\mathrm{Haar}}$ is the CPTP orthogonal projector over the algebra generated by $\{ I, S \}$~\cite{goodman2009symmetry}, i.e.,
\begin{align}
\mathcal P_{\mathrm{Haar}} (X) \coloneqq \overline{\mathcal U^{\otimes 2}}^{\mathrm{Haar}} (X) = \frac{1}{2} \sum_{\alpha = \pm 1} \frac{I + \alpha S}{d(d+\alpha)} \braket{I + \alpha S,X} ,
\end{align}  
where $S$ swaps $\mathcal H$ and its duplicate $\mathcal H '$, as usual.
Plugging the above into Eq.~\eqref{eq:G_main}, one gets
\begin{align*}
\overline{G}^{\mathrm{Haar}} = 1 - \frac{1}{2d^2} \sum_{\alpha = \pm 1} \frac{\left|  \braket{I + \alpha S, S_{AA'}} \right|^2}{d(d+\alpha)}
\end{align*}
which, after some simple algebra, simplifies to the announced result.

\subsection*{Theorem~\ref{prop:entropy_production}}

\EntropyProduction*

\begin{proof}
Let us do the $\chi = A$ case. The result relies on the observation that one can express $S_{AA'}$ in Eq.~\eqref{eq:G_reduced_dynamics} through the Haar average~\cite{goodman2009symmetry}
\begin{align}
\int dU  \left( \ket{\psi_U}\!\bra{\psi_U} \right)^{\otimes 2} = \frac{1}{d_A(d_A+1)} \left( I_{AA'} + S_{AA'} \right).
\end{align}
Performing the substitution results in
\begin{align*}
G(t) &= 1  + \frac{1}{d_A^2} \tr \left( S_{AA'}   \right) - \frac{d_A + 1}{d_A} \int dU  \, \tr \left( S_{AA'} \big[ \Uplambda^{\!(A) }_t (\ket{\psi_U}\! \bra{\psi_U}) \big]^{\otimes 2} \right) \\
& = \frac{d_A+1}{d_A} \left( 1 - \int dU \,  \tr \left[   \big( \Uplambda^{\!(A) }_t (\ket{\psi_U}\!\bra{\psi_U}) \big)^2  \right] \right) \\
& = \frac{d_A+1}{d_A}  \int dU \, S_{\mathrm{lin}} \left[ \Uplambda^{\!(A)}_t  (\ket{\psi_U} \! \bra{\psi_U}) \right]
\end{align*}
where we used the fact that $ \Uplambda^{\!(A) }_t (I) = I$ and the identity of Eq.~\eqref{eq:app:swap_trick}.

The $\chi = B$ case follows similarly.
\end{proof}

\subsection*{Proof of Eq.~\eqref{eq:concentration_linear_entropy}}

We need to prove that
\begin{align}
\Prob \left\{ \Big| S_{\mathrm{lin}} \big[ \Uplambda^{\!(\chi) }_t \big(  \ket{\psi}\! \bra{\psi} \big) \big] - \frac{d_\chi}{d_\chi + 1} G(t) \Big| \ge \epsilon \right\} \le \exp \left(  - \frac{d_\chi \epsilon^2}{64} \right)
\end{align}
where $\ket{\psi}$ is a Haar random pure state. We will make use of the concentration of measure machinery, briefly presented before the proof of \autoref{prop:GDoubleConcentration}.

The result follows by the use of Levy's lemma and \autoref{prop:entropy_production}, if one shows that the function 
$f: U(d_\chi) \to \mathbb R$ with $f(V) \coloneqq S_{\mathrm{lin}} \big[ \Uplambda^{\!(\chi)}_t  (\ket{\psi_V} \! \bra{\psi_V}) \big]$ is Lipschitz continuous with $K = 4$. As before, we denote $\ket{\psi_V} \coloneqq V \ket{\psi_0}$ for some (irrelevant) reference state $\ket{\psi_0}$.

Indeed, let us show the Lipschitz continuity. We have
\begin{align*}
\big| f(V) - f(W)  \big| &= \big|  \big\| \Uplambda^{\!(\chi)}_t  (\ket{\psi_V} \! \bra{\psi_V})  \big\|_2^2   -    \big\| \Uplambda^{\!(\chi)}_t  (\ket{\psi_W} \! \bra{\psi_W})  \big\|_2^2  \big|  \\
& =   \Big(  \big\| \Uplambda^{\!(\chi)}_t  (\ket{\psi_V} \! \bra{\psi_V})  \big\|_2   +    \big\| \Uplambda^{\!(\chi)}_t  (\ket{\psi_W} \! \bra{\psi_W})  \big\|_2  \Big) \,   \Big|  \big\| \Uplambda^{\!(\chi)}_t  (\ket{\psi_V} \! \bra{\psi_V})  \big\|_2   -    \big\| \Uplambda^{\!(\chi)}_t  (\ket{\psi_W} \! \bra{\psi_W})  \big\|_2  \Big| \\
& \le 2  \big\| \Uplambda^{\!(\chi)}_t  (\ket{\psi_V} \! \bra{\psi_V})   -     \Uplambda^{\!(\chi)}_t  (\ket{\psi_W} \! \bra{\psi_W})   \big\|_1 \\
& \le 2 \Big\|  \mathcal U _t \Big(  \ket{\psi_V} \! \bra{\psi_V} \otimes \frac{I_{d_{\overline \chi}}}{d_{\overline \chi}} \Big) - \mathcal U _t \Big(  \ket{\psi_W} \! \bra{\psi_W} \otimes \frac{I_{d_{\overline \chi}}}{d_{\overline \chi}} \Big) \Big\|_1 \\ 
& \le 2 \big\| \big(\ket{\psi_V} \! \bra{\psi_V} - \ket{\psi_W} \! \bra{\psi_W}  \big) \otimes  \frac{I_{d_{\overline \chi}}}{d_{\overline \chi}} \big\|_1  = 2 \big\| \ket{\psi_V} \! \bra{\psi_V} - \ket{\psi_W} \! \bra{\psi_W}  \big\|_1  \,,
\end{align*}
where in the second to last line we used the monotonicity of the 1-norm under the partial trace and in the last line that it is unitarily invariant. Utilizing the inequality $\big\| X  \big\|_1 \le \sqrt{\Rank(X)} \left\| X \right\|_2 $, we have
\begin{align*}
\big| f(V) - f(W)  \big| &\le 2 \sqrt{2}  \big\| \ket{\psi_V} \! \bra{\psi_V} - \ket{\psi_W} \! \bra{\psi_W}  \big\|_2 = 4 \sqrt{1 - | \! \braket{\psi_V | \psi_W}  \! |^2 } \\
& \le 4 \sqrt{2 ( 1 - | \! \braket{\psi_V | \psi_W}  \!  |  )} \le  4 \sqrt{2 ( 1 - \Real  \braket{\psi_V | \psi_W}  )}  \\
& \le  4\| \ket{\psi_V} -  \ket{\psi_W}  \| \le 4 \| V - W \|_\infty \\
&\le 4 \| V - W \|_2
\end{align*}
hence one can take $K = 4$.

\subsection*{Proposition~\ref{prop:Choi}}

\Choi*

\begin{proof}
Let us first express the Choi states explicitly as
\begin{align*}
\rho_{\Uplambda^{\!(\chi) }_t} &= \big(  \Uplambda^{\!(\chi) }_t \otimes \mathcal I \big) \ket{\phi^+}\! \bra{\phi^+} = \frac{1}{d_\chi} \sum_{ij} \Uplambda^{\!(\chi) }_t \big(  \ket{i}\! \bra{j} \big) \otimes \ket{i} \!\bra{j} \\
\rho_{\mathcal T^{(\chi)}} & =  \big(  \mathcal T^{(\chi)} \otimes \mathcal I \big) \ket{\phi^+}\! \bra{\phi^+} = \left( \frac{I_\chi}{d_\chi} \right) ^{\!\otimes 2} .
\end{align*}
Writing $S_{\chi\chi'} = \sum_{i,j=1}^{d_\chi} \ket{i}\!\bra{j} \otimes \ket{j}\!\bra{i}$ one also has from Eq.~\eqref{eq:G_reduced_dynamics}
\begin{align*}
G(t) = 1 - \frac{1}{d_\chi^2} \sum_{ij} \big\|  \Uplambda^{\!(\chi) }_t \big( \ket{i} \! \bra{j} \big) \big\|_2^2 \,.
\end{align*}
Thus, expanding the Choi state distance,
\begin{align*}
\big\| \rho_{\Uplambda^{\!(\chi) }_t}  - \rho_{\mathcal T^{(\chi)}} \big\|_2^2 & = \braket{ \rho_{\Uplambda^{\!(\chi) }_t}   - \rho_{\mathcal T^{(\chi)}}, \rho_{\Uplambda^{\!(\chi) }_t}   - \rho_{\mathcal T^{(\chi)}}  } = \braket{ \rho_{\Uplambda^{\!(\chi) }_t}  , \rho_{\Uplambda^{\!(\chi) }_t} } - 2 \braket{ \rho_{\Uplambda^{\!(\chi) }_t}   , \rho_{\mathcal T^{(\chi)}}  } + \braket{  \rho_{\mathcal T^{(\chi)}},  \rho_{\mathcal T^{(\chi)}}  } \\
& = \big\|   \rho_{\Uplambda^{\!(\chi) }_t}  \big\|_2^2 - \frac{1}{d_\chi^2} = \frac{1}{d_\chi^2} \sum_{ij} \big\|  \Uplambda^{\!(\chi) }_t \big( \ket{i} \! \bra{j} \big) \big\|_2^2  -  \frac{1}{d_\chi^2} \\
& = 1 - G(t) -  \frac{1}{d_\chi^2} 
\end{align*}
which is what we wanted.
\end{proof}

\subsection*{Proof of $ \big\| \Uplambda^{\!(\chi)}_t - \mathcal T^{(\chi)} \big\|_{\lozenge} \le d_{\chi}^{3/2} \sqrt{ G_{\max}^{(\chi)}  -  G(t)}$ and an application on information spreading}

We first remind the reader that the diamond norm can be defined as $ \left\| \mathcal X \right\|_{\lozenge} \coloneqq \left\| \mathcal X \otimes \mathcal I_d \right\|_{1,1} $ where $\mathcal I_d$ denotes the identity quantum channel over $\mathcal H \cong \mathbb C^{d}$ and $\left\|  \mathcal X \right\|_{1,1} \coloneqq \sup_{\left\|  A  \right\|_1 = 1} \left\|  \mathcal X(A) \right\|_1$. One of the reasons for this definition is the property that $\left\| \mathcal X \otimes \mathcal Y \right\|_{\lozenge} =  \left\| \mathcal X \right\|_{\lozenge} \left\| \mathcal Y  \right\|_{\lozenge}$, which in general fails for the $\left\| \left( \cdot \right) \right\|_{1,1}$ norm (see, e.g.,~\cite{kitaev2002classical}).

Let us now prove that
\begin{align*}
\sqrt{ G_{\max}^{(\chi)}  -  G(t) } \le \big\| \Uplambda^{\!(\chi)}_t - \mathcal T^{(\chi)} \big\|_{\lozenge} \le d_{\chi}^{3/2} \sqrt{ G_{\max}^{(\chi)}  -  G(t)} \,. \label{eq:ineq_diamond}
\end{align*} 

\begin{proof}
The result follows easily by utilizing the inequalities
\begin{align}
 \big\|\rho_{\mathcal E_1}  - \rho_{\mathcal E_2} \big\|_1 \le \big\|  \mathcal E_1 - \mathcal E_2 \big\|_\lozenge \le d \big\|\rho_{\mathcal E_1}  - \rho_{\mathcal E_2} \big\|_1
\end{align}
that hold for any pair of CPTP maps. The inequality was reported by John Watrous in~\cite{watrous11stackexchange}.
The result follows by use of the inequality $\big\| X \big\|_1 \le \sqrt{d}  \big\| X \big\|_2$ and \autoref{prop:Choi}.
\end{proof}

As an additional application of Eq.~\eqref{eq:ineq_diamond}, we can utilize it to bound from above the fraction of time such that $\big\| \Uplambda^{\!(\chi)}_t - \mathcal T^{(\chi)} \big\|_{\lozenge}  \ge \epsilon $ holds true. This can be done by combining Eq.~\eqref{eq:ineq_diamond} with our earlier time averages. The result
\begin{align} \label{eq:Markov}
\Prob \big\{ t \; \big|  \; \big\| \Uplambda^{\!(\chi)}_t - \mathcal T^{(\chi)} \big\|_{\lozenge}  \ge \epsilon   \big\} \le \frac{2  d_{\chi}^{3/2}}{\epsilon  d_{\overline \chi} } \kappa \,,
\end{align} 
where $\kappa \coloneqq \sqrt{1 + \dfrac{d_{\overline \chi}^2}{2} \big( \overline{G}^{\mathrm{Haar}} - \overline{G(t)} \big)}$, demonstrates in yet another way that if $ d_{\overline \chi} \gg d_{\chi} $ and $\kappa = O(1)$ (i.e., the equilibration is sufficiently close to the Haar estimate), then the reduced evolution is necessarily close to the maximally mixing one for a large fraction of time.

\begin{proof}
Our starting point will be inequality~\eqref{eq:ineq_diamond}, $\big\| \Uplambda^{\!(\chi)}_t - \mathcal T^{(\chi)} \big\|_{\lozenge} \le d_{\chi}^{3/2} \sqrt{ G_{\max}^{(\chi)}  -  G(t)}\,$. By taking the time average of both sides, and then using the concavity of the square root, we obtain
\begin{align*}
\overline{\big\| \Uplambda^{\!(\chi)}_t - \mathcal T^{(\chi)} \big\|_{\lozenge}} \le d_{\chi}^{3/2} \sqrt{  G_{\max}^{(\chi)} - \overline {G(t)}} \le  d_{\chi}^{3/2} \sqrt{ \big( G_{\max}^{(\chi)} - \overline G ^{\mathrm{Haar}} \big)  + \big( \overline G ^{\mathrm{Haar}} - \overline {G(t)} \big) } \le 2 \frac{d_\chi^{3/2}}{d_{\overline \chi}} \kappa  \,,
\end{align*}
where we approximated the difference
\begin{align*}
 G_{\max}^{(\chi)} - \overline {G(t)}^{\mathrm{Haar}}  = \frac{(d_\chi^2 - 1)^2}{d_\chi^2 (d^2-1)} \le \frac{2}{d_{\overline \chi}^2} \,.
\end{align*}
Finally, Eq.~\eqref{eq:Markov} follows by the use of Markov's inequality.
\end{proof}

\section{Haar measure, unitary $k$-designs and the bipartite OTOC} \label{sec:app:measures}


Here we discuss in more details how the Haar measure in the definition of the bipartite OTOC, Eq.~\eqref{eq:defnition_G}, can be replaced by other possible averaging choices, in a way that Eq.~\eqref{eq:G_main} (and everything that stems from it) remains valid.

Let us first recall the definition of a (unitary) $k$-design~\cite{divincenzo2002quantum,renes2004symmetric,
scott2006tight,gross2007evenly,roberts2017chaos}. Consider an ensemble of unitary operators $\Lambda = \{ ( p_i, U_i ) \}_i$ and define the family of CPTP maps
\begin{align}
\mathcal E ^{(k)} _{\Lambda}  &\coloneqq  \sum_i p_i U_i^{\otimes k} (\cdot) U_i^{\dagger \otimes k} \label{eq:app:Haar_channel}  \\
\mathcal E ^{(k)} _{\mathrm{Haar}}  & \coloneqq \int dU \, U^{\otimes k} (\cdot) U^{\dagger \otimes k}  \label{eq:app:ensemble_channel}
\end{align}
for $ k \in \mathbb N $. The ensemble $\Lambda $ forms a $k$-design if $\mathcal E ^{(k)} _{\Lambda} = \mathcal E ^{(k)} _{\mathrm{Haar}}$. In words, a $k$-design emulates Haar averaging up to (at least) the $k\textsuperscript{th}$ moment.

Now, let us investigate what is the freedom over the possible probability measures of $V_A$ and $W_B$ in Eq.~\eqref{eq:defnition_G}, such that Eq.~\eqref{eq:G_main} holds true without modification. 
It is easy to see, by the proof of \autoref{prop:GAverageUnitaries}, that we are in fact looking for a unitary ensemble $\Lambda$ retaining the validity of Eq.~\eqref{eq:app:u_avg_vectorized}. In turn, the latter is just a vectorized form of the $1$-design condition $\mathcal E ^{(1)} _{\Lambda} = \mathcal E ^{(1)} _{\mathrm{Haar}}$. One can therefore substitute the Haar measure over $U(d_A)$ and $U(d_B)$ with $1$-designs over the corresponding spaces; the full Haar randomness is not probed by the OTOC~\cite{roberts2017chaos}.

Moreover, $1$-designs factorize, i.e., if $\Lambda_1 =  \{ ( p^{(1)}_i, U^{(1)}_i ) \}_i$ and $\Lambda_2 =  \{ ( p^{(2)}_j, U^{(2)}_j ) \}_j$ are 1-designs over $\mathcal H_A$ and $\mathcal H_B$ respectively, then $\Lambda_1 \otimes \Lambda_2 \coloneqq \{ ( p^{(1)}_i p^{(2)}_j, U^{(1)}_i \otimes U^{(2)}_j ) \}_{ij}$  is a 1-design over $\mathcal H = \mathcal H_A \otimes \mathcal H_B$. This follows just by the 1-design condition in the form of Eq.~\eqref{eq:app:u_avg_vectorized} and the fact that the swap operator over the duplicated space $\mathcal H \otimes \mathcal H'$ factorizes $S_{AB;A'B'} = S_{AA'}S_{BB'}$.

This last fact has an important implication for the physically relevant case of many-body systems. Consider the case where $\mathcal H_\chi = 
\bigotimes_i \mathcal H_\chi^{(i)}$ for $\chi = A,B$, i.e., when $A$ and $B$ are made up of (not necessarily identical) individual subsystems. Then the OTOC of Eq.~\eqref{eq:defnition_G} remains unchanged if the averages $\int dV_A$ and $\int dW_b$ are replaced by the unitary ensemble $\bigotimes_i \Lambda_\chi^{(i)}$, where each $\Lambda_\chi^{(i)}$ is a $1$-design on $\mathcal H_\chi^{(i)}$. In other words, it is always enough to average over unitary operators that factorize completely. For instance, in the case of a spin-$1/2$ many-body system $\mathcal H_\chi^{(i)} \cong \mathbb C^2$ such an example is given by the Pauli $1$-design $\Lambda^{(i)}_{\chi,\mathrm{Pauli}} \coloneqq \{ 1/4 , \sigma_k \}_{k=0}^3$~\cite{webb2015clifford}.

\section{Estimating the bipartite OTOC via linear entropy measurements of random pure states}  \label{sec:app:linear_entropy}

Here we present a basic protocol, stemming directly from \autoref{prop:entropy_production}, for the estimation of the bipartite OTOC via repeated measurements of a single expectation value.

\begin{figure}[h]
\centering 
\includegraphics[width=.5\textwidth]{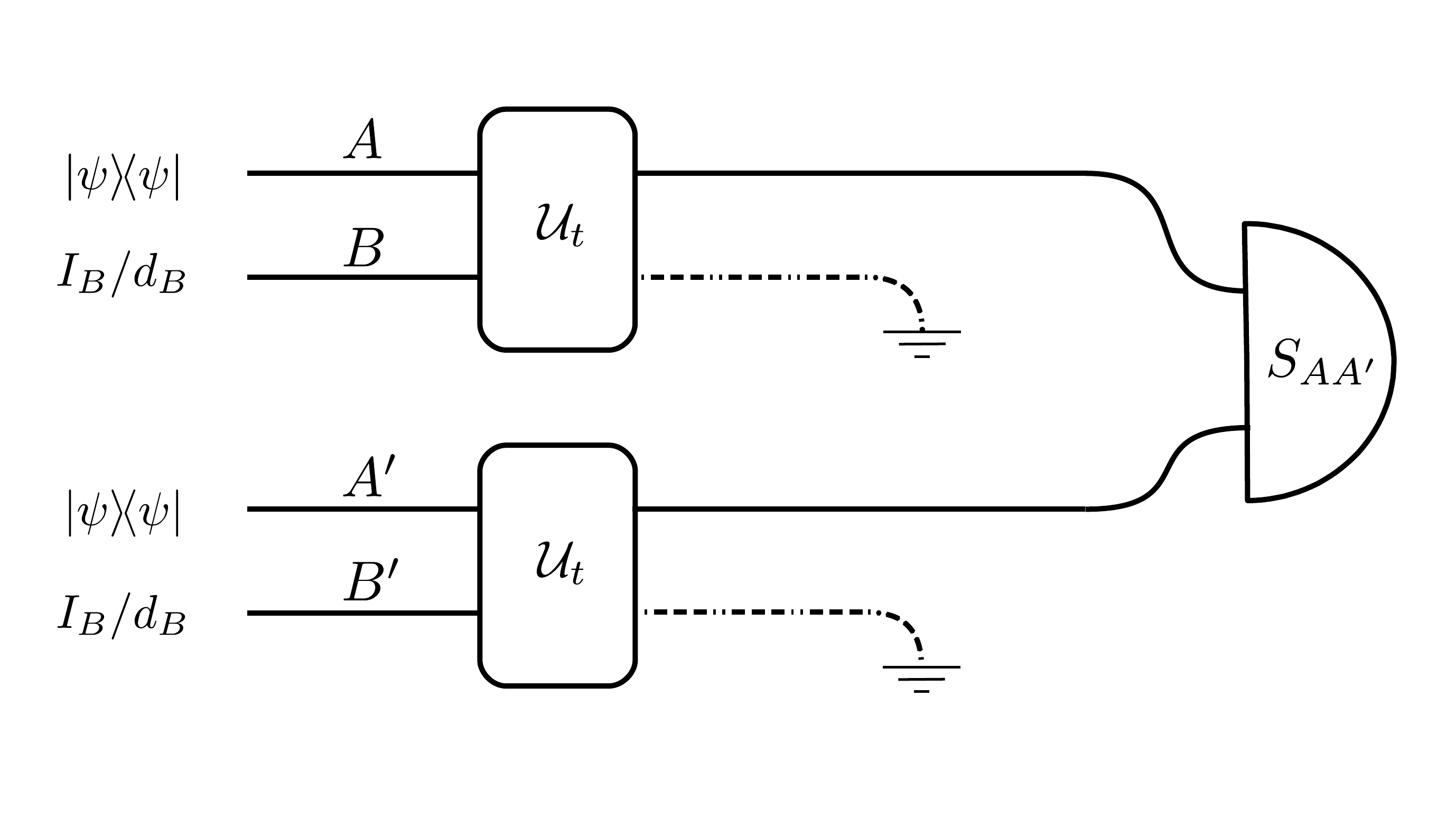}
\caption{Protocol to for the estimation of the purity $1 - S_{\mathrm{lin}} \left[ \Uplambda^{(A)}_t (\ket{\psi}\!\bra{\psi}) \right]$ according to Eq.~\eqref{eq:entropy_production}. The resulting purity constitutes also an estimate of the bipartite OTOC, up to a simple proportionality factor. The final measurement of the swap operator can be realized, for instance, by measuring the expectation value of $A$ and $A'$ over any preferred product basis $\{ \ket{i} \otimes \ket{j} \}_{i,j = 1}^{d_A}$, without the need for coherences.}.
\label{fig:circuit}
\end{figure}

As pointed out in the main text, the linear entropy of a state can be expressed as an expectation value, $1 - S_{\mathrm{lin}} (\rho) = \tr \left( S \rho^{\otimes 2} \right)$ at the expense of requiring two copies of the state $\rho$, though uncorrelated. Combining \autoref{prop:entropy_production} with the above observation, one can realize a simple protocol for estimating the bipartite OTOC via measuring the expectation value of the swap operator over pairs of randomly generated states $\ket{\psi} \in \mathcal H_A$. We schematically draw the protocol in \autoref{fig:circuit}.

Averaging the resulting expectation value over Haar random pure states $\ket{\psi}$ converges to the exact value of the bipartite OTOC. In light of Eq.~\eqref{eq:concentration_linear_entropy}, the expected number of sample for this convergence to a given accuracy drops fast as $d_A$ increases. Clearly, the corresponding protocol with the roles of $A$ and $B$ interchanged is formally equivalent.

Along conceptually similar lines, there have been a number of proposals for probing the linear entropy of a state in an experimentally accessible way. For example, in a recent experiment~\cite{islam2015measuring} quantum purity (which is directly related to the second-order R\'enyi entanglement entropy) was measured by interfering two uncorrelated but identical copies of a many-body quantum state; similar ideas have also been considered previously~\cite{daley2012measuring,ekert2002direct,PhysRevLett.93.110501,bovino2005direct}. In particular, this scheme neither requires full quantum state tomography nor the use of entanglement witnesses to estimate entanglement of a quantum state. 

Furthermore, there have been recent proposals for protocols based on measurements over random local bases that can probe entanglement given just a single copy of the quantum state, and, in this sense, go beyond traditional quantum state tomography. The main idea consists of directly expressing the linear entropy~\cite{brydges2019probing,elben2019statistical}, as well as other functions of the state~\cite{huang2020predicting}, as an ensemble average of measurements over random bases. Related ideas have also been adapted to probe OTOCs~\cite{vermersch2019probing,joshi2020quantum} and mixed state entanglement~\cite{elben2020mixed}.

\end{document}